\newtheorem{theorem}{Theorem}
\newtheorem{lemma}[theorem]{Lemma}
\newtheorem{claim}[theorem]{Claim}
\newtheorem{proposition}[theorem]{Proposition}
\newtheorem{observation}[theorem]{Observation}
\newtheorem{corollary}[theorem]{Corollary}
\newtheorem{condition}{Condition}
\def\I{\mathcal{I}}
\def\G{\mathcal{G}}
\def\E{\mathbb{E}}
\def\Tree{\mathbb{T}}
\def\TD{\mathcal{R}_{\Delta}}
\def\eps{\varepsilon}
\def\epsilon{\varepsilon}
\newcommand{\Tmix}{T_{\mathrm{mix}}}
\newcommand{\rpe}{p^+}
\newcommand{\rpo}{p^-}
\title{Improved Inapproximability Results for Counting
Independent Sets in the Hard-Core Model}
\author{Andreas Galanis\thanks{School of Computer Science, Georgia
Institute of Technology, Atlanta GA 30332.
Email: \{agalanis,vigoda,ljyang\}@cc.gatech.edu.
Research supported in part by NSF grant CCF-0830298 and CCF-0910584.}
\and
 Qi Ge\thanks{
Department of Computer Science, University of Rochester,
Rochester, NY 14627.  Email: \{qge,stefanko\}@cs.rochester.edu.
Research supported in part by NSF grant CCF-0910415.}
 \and
 Daniel \v{S}tefankovi\v{c}$^\dag$
 \and Eric Vigoda$^\star$
 \and Linji Yang$^\star$}
\begin{document}

\maketitle

\begin{abstract}
We study the computational complexity of approximately counting the number of independent sets of a graph with maximum degree
$\Delta$.  More generally, for an input graph $G=(V,E)$ and an activity $\lambda>0$, we are interested in the quantity
$Z_G(\lambda)$ defined as the sum over independent sets $I$ weighted as $w(I) = \lambda^{|I|}$.  In statistical
physics, $Z_G(\lambda)$ is the partition function for the hard-core model, which is an idealized model of a gas where the particles have non-negligible size.
  Recently, an interesting phase transition was shown
to occur for the complexity of approximating the partition function.
Weitz showed an FPAS for the partition function for any graph
of maximum degree $\Delta$ when $\Delta$ is
constant and $\lambda<\lambda_c(\Tree_\Delta):=(\Delta-1)^{\Delta-1}/(\Delta-2)^\Delta$.
The quantity $\lambda_c(\Tree_\Delta)$ is the critical point for
the so-called uniqueness threshold on the infinite, regular tree of
degree $\Delta$.  On the other side,
Sly proved that there does not exist efficient (randomized) approximation
algorithms for
$\lambda_c(\Tree_\Delta)<\lambda<\lambda_c(\Tree_\Delta)+\varepsilon(\Delta)$,
unless NP$=$RP, for some function $\varepsilon(\Delta)>0$.
 We remove the upper bound in the assumptions of Sly's result for $\Delta\neq 4,5$,
that is, we show that there does not exist efficient randomized approximation algorithms for all $\lambda>\lambda_c(\Tree_\Delta)$
for $\Delta=3$ and $\Delta\geq 6$.
Sly's inapproximability result uses a clever reduction,
combined with a second-moment analysis
of Mossel, Weitz and Wormald which prove torpid mixing of the
Glauber dynamics for sampling from the associated Gibbs distribution
on almost every regular graph of degree $\Delta$ for
the same range of $\lambda$ as in Sly's result.  We extend Sly's result
by improving upon the technical work of Mossel et al.,
via a
more detailed analysis of independent sets in random regular graphs.
\end{abstract}


\section{Introduction}
For a graph $G=(V,E)$ and activity $\lambda>0$, the
hard-core model is defined on the set $\I(G)$ of independent sets
of $G$ where set $I\in\I(G)$ has weight $w(I):=\lambda^{|I|}$.
The so-called partition function for the model is defined as:
\[  Z_G(\lambda)  := \sum_{I\in \I(G)} w(I)
  = \sum_{I\in \I(G)} \lambda^{|I|}.
\]
The Gibbs distribution $\mu$ is over the set $\I(G)$ where
$\mu(I)=w(I)/Z_G(\lambda)$.
The case $\lambda=1$ is especially interesting from a combinatorial
perspective, since the partition function
is the number of independent sets
in $G$ and the Gibbs distribution is uniformly distributed over the
set of independent sets.

The hard-core model has received considerable attention
in several fields.
In statistical physics, it is studied as an idealized model of
a gas where the gas particles have non-negligible size so
neighboring sites cannot simultaneously be occupied
\cite{GF,BS}.  The activity
$\lambda$ corresponds to the fugacity of the gas.
The model also arose in operations research in the study of
communication networks \cite{Kelly}.

We study the computational complexity of approximating the partition
function.
Valiant \cite{Valiant} proved that exactly computing the number of independent
sets of an input graph $G=(V,E)$ is \#P-complete.
Greenhill \cite{Greenhill} proved that even when the input is
restricted to graphs with
maximum degree $3$, it is still \#P-complete.
Hence, our focus is on approximating the partition function.

Weitz~\cite{Weitz} gave an FPAS (fully polynomial-time approximation
scheme) for the partition function of graphs with maximum
degree $\Delta$ when $\Delta$ is constant and
$\lambda<\lambda_c(\Tree_\Delta):= (\Delta-1)^{\Delta-1}/(\Delta-2)^\Delta$.
The activity $\lambda_c(\Tree_\Delta)$
is the critical activity for the threshold of uniqueness/non-uniqueness
of the infinite-volume Gibbs measures on the infinite
$\Delta$-regular tree \cite{Kelly}.
Recently, Sly \cite{Sly10} proved that, unless $NP=RP$,
 for every $\Delta \geq 3$, there exists a function $\epsilon(\Delta)>0$
 such that for graphs with maximum degree $\Delta$
 there does not exist an
 FPRAS (fully-polynomial time randomized approximation scheme)
 for the partition function at activity $\lambda$ satisfying:
 \begin{equation}
 \label{eps-constraint}
\tag{$\star$}
\lambda_c(\Tree_\Delta)<\lambda<\lambda_c(\Tree_\Delta)+\eps(\Delta).
\end{equation}
It was conjectured in Sly \cite{Sly10} and Mossel et al. \cite{MWW}
that the inapproximability result holds
for all $\lambda> \lambda_c(\Tree_\Delta)$.
We almost resolve this conjecture, that is we prove the
conjecture for all $\Delta$ with the exception of
$\Delta\in\{4,5\}$.

\begin{theorem}\label{thm:main}
Unless NP$=$RP, there does not exist an FPRAS for the partition function of the hard-core model
for graphs of maximum degree at most $\Delta$
at activity $\lambda$ when:
\begin{itemize}
\item $\Delta = 3$ and $\lambda > \lambda_c(\Tree_3)=4$; or
\item $\Delta\geq 6$ and $\lambda > \lambda_c(\Tree_\Delta)$; or
\item $\Delta = 4$ and $\lambda \in (\lambda_c(\Tree_4)=1.6875, 2.01538] \cup (4,+\infty)$; or
\item $\Delta = 5$ and $\lambda \in (\lambda_c(\Tree_5)=256/243, 1.45641] \cup (1.6875,2.01538] \cup (4,+\infty)$.
\end{itemize}
\end{theorem}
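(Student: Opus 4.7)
The overall plan is to follow Sly's reduction framework, which reduces inapproximability of $Z_G(\lambda)$ on graphs of maximum degree~$\Delta$ to a concentration (phase-existence) statement about the hard-core Gibbs measure on a random $\Delta$-regular bipartite graph. Sly's argument only requires that, for almost every $\Delta$-regular bipartite graph $G=(V_1\cup V_2,E)$ at activity $\lambda>\lambda_c(\Tree_\Delta)$, there are two symmetric ``phases'' corresponding to the non-trivial fixed points $(\rpe,\rpo)$ and $(\rpo,\rpe)$ of the tree recursion on $\Tree_\Delta$, and that the partition function restricted to each phase is exponentially concentrated around its expectation. Both requirements reduce to a second-moment inequality of the form $\E[Z_{\mathrm{phase}}^2]\le C\,\E[Z_{\mathrm{phase}}]^2$. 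Mossel--Weitz--Wormald (MWW) established this inequality only for $\lambda$ in a small interval above $\lambda_c(\Tree_\Delta)$; our job is to push it out to every $\lambda>\lambda_c(\Tree_\Delta)$ for $\Delta=3$ and $\Delta\ge 6$, and to pin down the sub-intervals on which it survives for $\Delta\in\{4,5\}$.

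Concretely, I would work with $G$ drawn from the permutation (configuration) model on $|V_1|=|V_2|=n$, decompose $Z_G(\lambda)=\sum_{\alpha_1,\alpha_2} Z_{\alpha_1,\alpha_2}(G)$ according to the occupation fractions on each side, and use Stirling together with the exact formula for the probability that a random $\Delta$-regular bipartite graph has no edges inside a prescribed pair of vertex subsets. A clean first-moment computation then recovers the tree recursion and shows that $\E[Z_{\alpha_1,\alpha_2}]$ is maximised only at the two symmetric points $(\rpe,\rpo)$ and $(\rpo,\rpe)$. Define the ``$+$-phase'' $\Phi^+$ as a small neighbourhood of $(\rpe,\rpo)$ and let $Z_{\Phi^+}=\sum_{(\alpha_1,\alpha_2)\in \Phi^+} Z_{\alpha_1,\alpha_2}$. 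Evaluating $\E[Z_{\Phi^+}^2]$ requires summing over pairs of independent sets $(I,J)$ grouped by their joint type, described by six density variables recording the fractions of vertices in each of $\{00,01,10,11\}$ on $V_1$ and on $V_2$. After Stirling and the permutation count, the logarithm of the summand becomes an explicit functional $\Psi(\mathbf{x};\lambda,\Delta)$ on a six-dimensional polytope, and the second-moment inequality reduces to the statement that the maximum of $\Psi$ over the sub-polytope corresponding to both $I$ and $J$ lying in $\Phi^+$ is attained only at the ``product'' point, where the joint type factorises as the product of two copies of $(\rpe,\rpo)$.

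The main obstacle is precisely this constrained maximisation: it is where MWW's argument lost control once $\lambda$ moved away from $\lambda_c(\Tree_\Delta)$. The plan is to strip four of the six degrees of freedom by writing the KKT stationarity conditions in closed form -- these conditions themselves look like local tree recursions on $\Tree_\Delta$, so interior critical points can be parameterised by a pair of scalar messages satisfying a $2\times 2$ system. This cuts the problem down to a one- or two-dimensional calculus question parameterised by $\lambda$ and $\Delta$. I expect to dispose of $\Delta=3$ and of all sufficiently large $\Delta$ uniformly by a monotonicity/convexity argument keyed to the sign of the derivative of the tree recursion at $\rpe$ and $\rpo$, supplemented by an asymptotic change of variables that trivialises the $\lambda\to\infty$ regime; whatever finite window of $(\Delta,\lambda)$ remains is closed by a rigorous interval-arithmetic check. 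For $\Delta\in\{4,5\}$, the reduced optimisation genuinely produces an off-product maximiser inside an explicit $\lambda$-interval, which is exactly why Theorem~\ref{thm:main} is incomplete there; on the complementary sub-intervals where the product point wins, the same argument gives the hybrid statement. Assembling the resulting second-moment inequality with Sly's reduction produces the claimed inapproximability ranges.
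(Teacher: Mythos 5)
Your high-level plan (Sly's reduction $+$ a second-moment analysis on random $\Delta$-regular bipartite graphs, reducing to a constrained maximisation of a Stirling-derived exponent functional) matches the paper's framework, and the observation that the constrained maximisation is where MWW lost control is exactly right. However, there is a genuine gap that makes the plan as stated unable to reach the claimed conclusions for $\Delta\geq 4$.

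The core difficulty is that the second-moment method, even with the paper's improved parametrisation, only certifies Condition~\ref{condition:maxima} for $\Delta\geq 4$ when $p^+\leq 1/2$, i.e.\ for $\lambda\in(\lambda_c(\Tree_\Delta),\lambda_{1/2}(\Tree_\Delta)]$; for $p^+>1/2$ the Hessian determinant argument in the upper region genuinely fails for $\Delta\geq 4$ (the paper notes this explicitly as the reason the result is incomplete for $\Delta\in\{4,5\}$). Your proposal suggests handling the large-$\lambda$ regime for all $\Delta$ by ``an asymptotic change of variables that trivialises the $\lambda\to\infty$ regime'' within the second-moment optimisation. There is no such trivialisation inside that framework: the obstruction at $p^+>1/2$ is real. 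What the paper does instead is \emph{not} a second-moment argument for large $\lambda$ at all. It uses two additional, purely combinatorial ingredients that your proposal omits. First, a vertex-blow-up transfer lemma (Lemma~\ref{l:trans}): replacing each vertex of a max-degree-$\Delta$ graph $G$ by $k$ copies and each edge by $K_{k,k}$ yields a graph $H$ of max degree $k\Delta$ with $Z_G((1+\lambda)^k-1)=Z_H(\lambda)$, so hardness for $\Delta=3$ at all $\lambda>4$ (which the second-moment analysis does reach, via the separate $p^+\geq 1/2$ analysis of Lemma~\ref{lem:d3}) immediately gives hardness for $\Delta=6$ at all $\lambda>\sqrt{5}-1$. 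Second, a stitching argument exploiting monotonicity of the problem in $\Delta$: a max-degree-$(\Delta-1)$ instance is a max-degree-$\Delta$ instance, and since $\lambda_{1/2}(\Tree_\Delta)>\lambda_c(\Tree_{\Delta-1})$ for $\Delta\geq 6$, the hardness windows $(\lambda_c(\Tree_\Delta),\lambda_c(\Tree_{\Delta-1})]$ chain together, and together with the $\Delta=5$ window and the $(\sqrt{5}-1,\infty)$ window from blow-up they cover all of $(\lambda_c(\Tree_\Delta),\infty)$. Without these two ingredients your plan cannot establish the theorem for $\Delta\geq 6$ at activities above $\lambda_{1/2}(\Tree_\Delta)$, nor does it explain where the disconnected sub-intervals for $\Delta\in\{4,5\}$ come from (they are inherited from lower-degree hardness and from blow-up, not produced directly by the second-moment optimisation for those $\Delta$).

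One further, smaller omission: Sly's reduction as written uses an auxiliary hypothesis $p^+<\tfrac{3}{5}(1-p^-)$ in his Lemma~4.2 (a contraction/concentration statement on the $(\Delta-1)$-ary tree). That hypothesis fails in the range you need, so the concentration lemma must be reproved without it (the paper does this in Section~\ref{sec:remainingSly} by showing the two-step recursion derivative tends to $q^+q^-\leq 1/(\Delta-1)^2$). Your plan should flag this, since otherwise the reduction you invoke is not actually available for all $\lambda>\lambda_c(\Tree_\Delta)$.

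Finally, as a smaller matter of parametrisation: the paper's second moment is a five-variable functional $\Phi_2(\alpha,\beta,\gamma,\delta,\epsilon)$ (you describe a six-variable joint-type functional, which after the obvious normalisation is equivalent), and a key technical device is to solve for the optimiser $\hat\epsilon$ in closed form and introduce a symmetric companion variable $\hat\eta$, which restores the $\alpha\leftrightarrow\beta$ symmetry and substantially simplifies the Hessian analysis. Your proposed ``KKT $\to$ scalar message pair'' reduction is in the same spirit, but you should be aware that the remaining two-variable Hessian positivity check is nontrivial (the paper ultimately resolves it with computer-verified polynomial inequalities via CAD/\texttt{Resolve}), and that this check genuinely fails for $\Delta\in\{4,5\}$ once $p^+>1/2$, which is why the theorem for those degrees has gaps.
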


Sly's work utilizes earlier work of Mossel et al. \cite{MWW}
which studied the Glauber dynamics.
The Glauber dynamics is a simple Markov chain $(X_t)$ that is
used to sample from the Gibbs distribution
(and hence to approximate the partition function via now
standard techniques, see \cite{JVV,SVV}).
For an input graph $G=(V,E)$ and activity $\lambda>0$,
the state space of the chain is $\I(G)$.
From a state $X_t\in\I(G)$, the transitions $X_t\rightarrow X_{t+1}$
are defined by the following stochastic process:
\begin{itemize}
\item
Choose a vertex $v$ uniformly at random from $V$.
\item
Let \[
X' = \begin{cases}
X_t\cup\{v\} & \mbox{with probability } \lambda/(1+\lambda)
\\
X_t\setminus\{v\} & \mbox{with probability } 1/(1+\lambda).
\end{cases}
\]
\item
If $X'\in\I(G)$, then set $X_{t+1} = X'$, otherwise
set $X_{t+1}=X_t$.
\end{itemize}
It is straightforward to verify that the Glauber dynamics is ergodic,
and the unique stationary distribution is the Gibbs distribution.
The mixing time $\Tmix$ is the minimum number of steps $T$
from the worst initial state $X_0$,
so that the distribution of $X_T$ is within (total) variation distance
$\leq 1/4$ of the stationary distribution.
The chain is said to be {\em rapidly mixing} if the mixing time
is polynomial in $n=|V|$, and it is said to be {\em torpidly mixing}
if the mixing time is exponential in $n$ (for the purposes of this
paper, that means $\Tmix =  \exp(\Omega(n))).$
We refer the reader to Levin et al. \cite{LPW} for a more thorough
introduction to the Glauber dynamics.

Mossel et al. \cite{MWW}
proved that the Glauber dynamics is torpidly mixing,
for all $\Delta\geq 3$, for graphs with maximum degree $\Delta$
when $\lambda$ satisfies \eqref{eps-constraint}.
This result improved upon earlier work of Dyer et al. \cite{DFJ} which
held for larger $\lambda$, but not down to the critical activity
$\lambda_c(\Tree_\Delta)$.
The torpid mixing result of Mossel et al. \cite{MWW}
follows immediately (via a conductance argument)
from their main result that for a random $\Delta$-regular bipartite graph, for
$\lambda$ satisfying~\eqref{eps-constraint}, an independent
set drawn from the Gibbs distribution is ``unbalanced'' with high
probability.

The proof of Mossel et al. \cite{MWW}
is a technically involved second moment calculation
that Sly \cite{Sly10} calls a ``technical tour de force''.
Our main contribution is to improve upon Mossel et al.'s
result, most notably, extending it to all $\lambda>\lambda_c(\Tree_\Delta)$ for
$\Delta=3$. Our improved analysis comes from
using a slightly different parameterization of the second moment
of the partition function, which brings in symmetry, and
allows for simpler proofs.

To formally state our results for independent sets of random
regular graphs, we need to partition the
 set of independent sets as follows.
 For a bipartite graph $G=(V_1\cup V_2,E)$ where $|V_1|=|V_2|=n$,
  for $\delta>0$,  for $i\in\{1,2\}$, let
 \[  \I_i^\delta = \{ I \in\I(G): |I\cap V_i| > |I\cap V_{3-i}| + \delta n\}
 \]
 denote the independent sets that are unbalanced and
 ``biased'' towards $V_i$.  Let
 \[ \I_B^\delta =  \{I\in\I(G): |I\cap V_i| \leq |I\cap V_{3-i}| + \delta n\} \]
 denote the set of nearly balanced independent sets.

 Let $\G(n,\Delta)$
denote the probability distribution over bipartite graphs with $n+n$
vertices formed by taking the union of $\Delta$ random perfect matchings.
Strictly speaking, this distribution is over multi-graphs.  However,
for independent sets the multiplicity of an edge does not matter
so we can view $\G(n,\Delta)$ as a distribution over simple graphs
with maximum degree $\Delta$.
Moreover, since our results hold asymptotically almost surely (a.a.s.) over
$\G(n,\Delta)$, as noted in \cite[Section 2.1]{MWW},
by standard arguments (see the note after the proof
of \cite[Theorem 4]{MRRW}), our
results also hold a.a.s. for the uniform distribution over bipartite
$\Delta$-regular graphs.

\begin{theorem}
\label{thm:unbalanced-3}
For $\Delta=3$ and any $\lambda > \lambda_c(\Tree_\Delta)$, there exist
$a>1$ and $\delta>0$
such that, asymptotically almost surely,
for a graph $G$ chosen from $\G(n,\Delta)$, the Gibbs distribution $\mu$ satisfies:
\begin{equation*}
\mu(\I_B^\delta) \leq a^{-n}\min\{\mu(\I_1^\delta), \mu(\I_2^\delta)\}.
\end{equation*}
Therefore, the Glauber dynamics is torpidly mixing.
\end{theorem}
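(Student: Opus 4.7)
The plan is to carry out a first- and second-moment analysis of the restricted partition functions
\[
Z_1 = \sum_{I \in \I_1^\delta} \lambda^{|I|}, \qquad Z_2 = \sum_{I \in \I_2^\delta} \lambda^{|I|}, \qquad Z_B = \sum_{I \in \I_B^\delta} \lambda^{|I|},
\]
over $G\sim\G(n,3)$, following the Mossel--Weitz--Wormald framework but with a symmetric parameterization that extends its valid range to all $\lambda>4$. Classify each independent set by $(\alpha_1,\alpha_2) = (|I\cap V_1|/n, |I\cap V_2|/n)$ and let $Z^{\alpha_1,\alpha_2}_G$ be the corresponding contribution to $Z_G$. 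Under the configuration model, $\E_G[Z^{\alpha_1,\alpha_2}]$ factors into entropy terms for choosing the occupied vertices and a matching-survival term counting the $3$-regular bipartite matchings avoiding forbidden pairings between $I\cap V_1$ and $I\cap V_2$. Stirling then yields $\E[Z^{\alpha_1,\alpha_2}] = e^{n\Phi_1(\alpha_1,\alpha_2)+o(n)}$ for an explicit rate function $\Phi_1$. I would then maximize $\Phi_1$ on the unit square and show that for every $\lambda>\lambda_c(\Tree_3)=4$ the global maxima lie at two symmetric unbalanced points $(\alpha^+,\alpha^-)$ and $(\alpha^-,\alpha^+)$ whose coordinates are the non-trivial stable fixed points of the bipartite tree recursion $x=\lambda(1+y)^{-2}$, $y=\lambda(1+x)^{-2}$, while any diagonal point $\alpha_1=\alpha_2$ lies strictly below by some $c(\lambda)>0$. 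This gives $\E[Z_B]\le e^{-cn}\E[Z_i]$ for $i\in\{1,2\}$.

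The technical heart is the second-moment bound $\E[Z_1^2]\le C(\lambda)\,\E[Z_1]^2$. I would parametrize joint types of pairs $(I,I')$ by the $2\times 2$ table recording, for each side, how many vertices lie in $I$ only, $I'$ only, both, or neither; computing the configuration-model expectation gives another explicit rate function $\Phi_2$ on this higher-dimensional polytope, and the goal is to show its unique maximum is the product type, where $I,I'$ are independent draws from the same unbalanced phase. This is where Mossel--Weitz--Wormald could only handle $\lambda$ close to $\lambda_c$; the improvement I would pursue is to reparametrize $\Phi_2$ in terms of magnetization-like marginals on each side so that the $(x,y)\leftrightarrow(y,x)$ symmetry of the tree recursion is manifest. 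In those coordinates the Hessian at the product-type critical point splits into smaller blocks whose negative-definiteness reduces to a low-parameter inequality checkable for all $\lambda>4$, and the interior stationary-point equations reduce to a tractable system that matches the fixed-point picture.

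With the first- and second-moment bounds in hand, Paley--Zygmund gives $\P_G[Z_1\ge\tfrac{1}{2}\E[Z_1]] \ge 1/(4C(\lambda))$, which small-subgraph conditioning in the style of Robinson--Wormald (as applied in \cite{MWW,Sly10}) boosts to $Z_1\ge e^{-o(n)}\E[Z_1]$ a.a.s.; by the $V_1\leftrightarrow V_2$ symmetry of $\G(n,3)$, the same holds for $Z_2$. Combined with $\E[Z_B]\le e^{-cn}\E[Z_i]$ and Markov applied to $Z_B$, we obtain a.a.s.\ $\mu(\I_B^\delta)=Z_B/Z_G\le e^{-cn/2}\min_i\mu(\I_i^\delta)$, which is the theorem for any $1<a<e^{c/2}$. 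Torpid mixing then follows from the standard conductance argument: Glauber updates change one vertex at a time, so any transition $\I_1^\delta\to\I_2^\delta$ must pass through $\I_B^\delta$, hence the conductance is at most $\mu(\I_B^\delta)/\min_i\mu(\I_i^\delta)\le a^{-n}$, and Cheeger's inequality gives $\Tmix=e^{\Omega(n)}$.

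The main obstacle will be the second-moment optimization at arbitrarily large $\lambda$: near $\lambda_c$ one has a perturbative handle, but as $\lambda\to\infty$ the two phases become nearly deterministic and the joint-type polytope's boundary dominates, so one must rule out boundary maximizers where the matching-survival factor collapses. The symmetric parameterization plus an explicit analysis of the boundary stationary-point equations is what I expect to unlock the full range $\lambda>4$ for $\Delta=3$, which is precisely the regime where the previous work of Sly and of Mossel--Weitz--Wormald breaks down.
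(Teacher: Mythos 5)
Your plan mirrors the paper's proof essentially step for step: both use the Mossel--Weitz--Wormald first- and second-moment framework over the configuration model $\G(n,3)$, both reduce the second-moment bound to showing that $\Phi_2$ has its unique maximum in the constraint polytope at the product-type point $(\gamma^*,\delta^*,\epsilon^*)=(\alpha^2,\beta^2,\alpha(1-\alpha-\beta))$ by introducing a parameterization that makes the $V_1\leftrightarrow V_2$ symmetry manifest (the paper's new variable $\hat\eta$ symmetric to $\hat\epsilon$ plays exactly the role of your ``magnetization-like marginals''), both invoke small-subgraph conditioning to upgrade the second-moment/Paley--Zygmund bound to an a.a.s.\ lower bound on $Z^{p^+,p^-}_G$, and both finish with Markov on $Z_B$ and the standard conductance argument for torpid mixing. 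The one place where your sketch is rosier than reality is the Hessian step: the paper does not get a block split --- after eliminating $\epsilon$ at its stationary value the remaining Hessian of $f(\gamma,\delta)$ is a genuine $2\times 2$ matrix with nonzero cross term, and the paper instead restricts the stationary points to two explicit subregions via sign analysis of $W_{11},\dots,W_{22}$ and verifies $\det(\mathbf H)>0$ there by computer-assisted polynomial inequality checks (Mathematica's \texttt{Resolve}), so you should anticipate that your ``low-parameter inequality checkable for all $\lambda>4$'' will in practice require the same kind of quantifier-elimination assistance.
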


This proves Conjecture 2.4 of \cite{MWW} for the case $d=3$.
For $\Delta\geq 4$ we can extend Mossel et al.'s results to a
larger range of $\lambda$ than previously known.

\begin{theorem}
\label{thm:unbalanced-bigger}
For all $\Delta\geq 4$ there exists $\eps(\Delta)>0$,
for any $\lambda$ where $\lambda_c(\Tree_\Delta)<\lambda<\lambda_c(\Tree_\Delta)+\eps(\Delta)$,
there exist $a>1$ and $\delta>0$
such that, asymptotically almost surely,
for a graph $G$ chosen from $\G(n,\Delta)$, the Gibbs distribution $\mu$ satisfies:
\begin{equation*}
  \mu(\I_B^\delta) \leq a^{-n}\min\{\mu(\I_1^\delta), \mu(\I_2^\delta)\}.
\end{equation*}
Therefore, the Glauber dynamics is torpidly mixing.
The function $\eps(\Delta)$ satisfies:
\begin{itemize}
\item If $\Delta\geq 6$,
  $\eps(\Delta)\geq  \lambda_c(\Tree_\Delta)-\lambda_c(\Tree_{\Delta+1})$.
  \item For $\Delta=5$, $\eps(5) \geq .402912$.
    \item For $\Delta=4$, $\eps(4) \geq .327887$.
    \end{itemize}
\end{theorem}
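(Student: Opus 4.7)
My plan is to establish Theorem \ref{thm:unbalanced-bigger} by the second moment method on $\G(n,\Delta)$, following the Mossel--Weitz--Wormald scheme but with a parameterization of the joint occupation of a pair of independent sets that is symmetric under the exchange $I \leftrightarrow J$. This symmetry will reduce the dimension of the critical-point analysis for the second moment, which is where the improvement over \cite{MWW} takes place. Write $Z_1, Z_2, Z_B$ for the partition functions restricted to $\I_1^\delta, \I_2^\delta, \I_B^\delta$. The target bound $\mu(\I_B^\delta) \le a^{-n}\min(\mu(\I_1^\delta), \mu(\I_2^\delta))$ will follow from combining an exponentially large first-moment gap $\E[Z_1] = \E[Z_2] \gg \E[Z_B]$ with an $O(1)$ second-moment ratio $\E[Z_i^2]/\E[Z_i]^2$ for $i \in \{1,2\}$. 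Torpid mixing then follows from the standard conductance bottleneck at the cut $\{|I\cap V_1| = |I\cap V_2|\}$, since each Glauber transition moves this difference by at most one.

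For the first moment, I parameterize independent sets by their side-occupancies $(\alpha_1,\alpha_2) = (|I\cap V_1|/n, |I\cap V_2|/n)$ and use the configuration model to write
\[
\E\bigl[|\{I\in\I(G) : |I\cap V_j| = \alpha_j n\}|\bigr] = \exp\bigl(n\,\Phi(\alpha_1,\alpha_2) + o(n)\bigr)
\]
for an explicit rate function $\Phi$. The stationary points of $\Phi$ satisfy the degree-$\Delta$ broadcast fixed point equations on $\Tree_\Delta$; above $\lambda_c(\Tree_\Delta)$ these have a symmetric solution $(\rp,\rp)$ and two asymmetric solutions $(\rpe,\rpo)$ and $(\rpo,\rpe)$, with the asymmetric ones giving strictly larger $\Phi$. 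This yields the first-moment gap.

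The technically decisive step is the second moment bound $\E[Z_1^2] = O(\E[Z_1]^2)$. I parameterize pairs $(I,J) \in \I_1^\delta \times \I_1^\delta$ by their joint occupation profile on each side: a probability vector on $\{0,1\}^2$ per side, giving an eight-dimensional rate function $\Psi$. My plan is to change to coordinates in which the $I \leftrightarrow J$ swap becomes manifest by splitting each joint profile into symmetric and antisymmetric components, so that the ``independent'' point where $(I,J)$ are distributed as independent copies of the first-moment maximizer lies on the symmetric subspace and antisymmetric coordinates directly measure deviation from independence. The claim to prove is that $\Psi$ attains its strict global maximum at this independent point. The main obstacle is verifying uniform negative-definiteness of the Hessian across the claimed range of $\lambda$: in the symmetrized coordinates I expect the Hessian to block-diagonalize and reduce the problem to a scalar monotonicity that can be checked in closed form. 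For $\Delta\ge 6$ this monotonicity persists precisely as long as an auxiliary degree-$(\Delta+1)$ tree recursion remains in uniqueness, giving $\eps(\Delta) \ge \lambda_c(\Tree_\Delta) - \lambda_c(\Tree_{\Delta+1})$; for $\Delta \in \{4,5\}$ the explicit constants come from a direct numerical check of the same inequalities.

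Once the second-moment bound is in hand, Paley--Zygmund gives $Z_1 \ge c\E[Z_1]$ with constant probability, and the small-subgraph conditioning method (applied exactly as in \cite[Section 2.1]{MWW} to transfer from the configuration multigraph model to uniformly random $\Delta$-regular bipartite simple graphs) upgrades this to an a.a.s.\ statement. The same holds for $Z_2$ by symmetry, while Markov's inequality combined with the first-moment gap yields $Z_B \le e^{-\Omega(n)}\E[Z_1]$ a.a.s., completing the proof.
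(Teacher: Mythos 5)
Your high-level framework is the right one and matches the paper: the result is proved by establishing that the second moment of $Z_G^{\alpha,\beta}$ at the biased maximizer $(\alpha,\beta)=(p^+,p^-)$ is comparable to the squared first moment (Condition~\ref{condition:maxima} from Sly), then invoking small-subgraph conditioning (Lemma~\ref{lem:MWWlowerbound}) to get a lower bound on $Z_G^{\alpha,\beta}$ a.a.s., Markov for the upper bound on balanced configurations, and a conductance argument for torpid mixing. So far so good.

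However, the proposal's key claimed ideas for the second-moment step do not survive scrutiny, and they are not what the paper does. First, you propose to exploit the $I\leftrightarrow J$ exchange symmetry by going to symmetric/antisymmetric coordinates in an eight-dimensional joint-profile rate function. But in the relevant second-moment calculation both $I$ and $J$ are constrained to have the \emph{same} side-marginals $(\alpha,\beta)$; on each side $V_j$ the counts satisfy $n_{10}^j+n_{11}^j = n_{01}^j+n_{11}^j$, so $n_{10}^j = n_{01}^j$ and the joint profile is forced onto the $I\leftrightarrow J$-symmetric slice. There is no antisymmetric direction to decouple; the five-variable function $\Phi_2(\alpha,\beta,\gamma,\delta,\epsilon)$ already lives in the quotient. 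The symmetry the paper actually exploits is the \emph{bipartite side swap} $V_1\leftrightarrow V_2$, which is realized by introducing a companion variable $\hat\eta$ to $\hat\epsilon$ so that the constraints and rate function are invariant under swapping $(\alpha,\gamma,\hat\epsilon)$ with $(\beta,\delta,\hat\eta)$; this is what produces the simpler Hessian analysis. Second, the claim that block-diagonalization reduces matters to a scalar monotonicity verifiable in closed form is overoptimistic: after eliminating $\epsilon$ the paper is left with a genuine $2\times2$ Hessian in $(\gamma,\delta)$ whose off-diagonal entry $\partial^2 f/\partial\gamma\partial\delta$ is strictly positive (Observation~\ref{ccc2}), so the problem does not decouple, and positivity of $\det(\mathbf{H})$ is proved via a chain of rational-function inequalities that the authors verify with Mathematica's \texttt{Resolve}.

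Finally, your characterization of the range for $\Delta\ge 6$ is incorrect. You say the monotonicity ``persists precisely as long as an auxiliary degree-$(\Delta+1)$ tree recursion remains in uniqueness,'' i.e. $\lambda<\lambda_c(\Tree_{\Delta+1})$. Since $\lambda_c(\Tree_{\Delta+1})<\lambda_c(\Tree_\Delta)$ and you also require $\lambda>\lambda_c(\Tree_\Delta)$, this interval is empty. The actual threshold in the paper is $\lambda_{1/2}(\Tree_\Delta)$, the smallest $\lambda$ with $\phi(\phi(1/2))=1/2$ for the $\Tree_\Delta$ recursion, equivalently the point where $p^+$ reaches $1/2$. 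The restriction $p^+\le 1/2$ (equivalently $\alpha,\beta\le 1/2$ in Lemma~\ref{lem:main}) is what makes the concavity proof go through for $\Delta\ge4$, and for $\Delta\ge6$ one then checks $\lambda_{1/2}(\Tree_\Delta)>\lambda_c(\Tree_{\Delta-1})$, which is how the stated lower bound on $\eps(\Delta)$ is obtained. These are different mechanisms from the one you describe, and the $(\Delta+1)$-tree characterization does not hold.
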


In the following section we look at the first and second moments
of the partition function.  We then state two technical lemmas (Lemma
\ref{lem:main} and Lemma
\ref{lem:d3}) from which Theorems \ref{thm:main},
\ref{thm:unbalanced-3} and \ref{thm:unbalanced-bigger}
easily follow using work of Sly \cite{Sly10} and Mossel et al. \cite{MWW}.
In Section \ref{sec:analysis} we prove the technical lemmas.
Some of our proofs use Mathematica to prove inequalities involving
rational functions, this is discussed in Section \ref{sec:computer-assist}.

\section{Overview}

Proceeding as in \cite{MWW} and \cite{Sly10}, roughly speaking,
to prove Theorems \ref{thm:main}, \ref{thm:unbalanced-3} and \ref{thm:unbalanced-bigger}
we need to prove that there exist graphs $G$ whose
partition function is close to the expected value (where
the expectation is over a random $G\sim\G(n,\Delta)$). At the heart of the argument
lies a careful analysis of the first two moments of the partition function.
The aim of this Section is to give a brief technical overview of the analysis.
For more details, the reader is referred to \cite{MWW} and \cite{Sly10}.

In Section~\ref{sec:phtr}, we define the quantities which will be prevalent
throughout the text. In Sections~\ref{sec:firstmom} and~\ref{sec:secondmom},
we revisit the first and second moments and state our
main technical Lemma.  We then prove Theorems \ref{thm:main},
\ref{thm:unbalanced-3} and \ref{thm:unbalanced-bigger}
in  Section~\ref{sec:contri}.
In Section~\ref{sec:computer-assist},
we clarify our use of computer assistance for the proofs of some technical inequalities.

\subsection{Phase Transition Revisited}\label{sec:phtr}

Recall, for the infinite $\Delta$-regular tree $\Tree_\Delta$,
Kelly \cite{Kelly} showed that
there is a phase transition at $\lambda_c(\Tree_\Delta) = (\Delta-1)^{\Delta-1}/(\Delta-2)^\Delta$.
Formally, this phase transition can be defined in the following
manner.  Let $T_\ell$ denote the complete tree of degree $\Delta$ and
containing $\ell$ levels.  Let $p_\ell$ denote the marginal probability
that the root is occupied in the Gibbs distribution on $T_\ell$.
Note, for even $\ell$ the root is in the maximum independent set,
whereas for odd $\ell$ the root is not in the maximum independent set.
Our interest is in comparing the marginal probability for the root in
even versus odd sized trees.   Hence, let
\[
 p^+ = \lim_{\ell\rightarrow\infty} p_{2\ell}
\ \ \mbox{ and } \ \
 p^- = \lim_{\ell\rightarrow\infty} p_{2\ell+1}.
\]
One can prove these limits exist by analyzing appropriate recurrences.
The phase transition on the tree $\Tree_\Delta$ captures whether
$p^+$ equals (or not) $p^-$.
For all $\lambda\leq \lambda_c(\Tree_\Delta)$, we have $p^+ = p^-$, and
let $p^*:=p^+=p^-$.
  On the other side, for all $\lambda>\lambda_c(\Tree_\Delta)$,
  we have $\rpe > \rpo$.  Mossel et al. \cite{MWW} exhibited
  the critical role these quantities $p^+$ and $p^-$ play in
  the analysis of the Gibbs distribution on random graphs $\G(n,\Delta)$.

\subsection{First Moment of the Partition Function}\label{sec:firstmom}
Let $G\sim \G(n,\Delta)$. For $\alpha,\beta\geq 0$, let
\[ \I^{\alpha,\beta}_G = \{I \in \I_G \,|\, |I \cap V_1| = \alpha n, |I \cap V_2| = \beta n \},
\]
that is, $\alpha$ is the fraction of the vertices in $V_1$ that are in the independent set
and $\beta$ is the fraction of the vertices in $V_2$ for an independent set of a bipartite graph $G$. We consider only values of $(\alpha,\beta)$ in the region
\begin{equation*}
\mathcal{R} = \{(\alpha,\beta) \,|\, \alpha,\beta \geq 0 \mbox{ and } \alpha+\beta\leq 1\},
\end{equation*}
since it is straightforward to see that for a graph $G\sim \G(n,\Delta)$, it \textit{deterministically} holds that $\I^{\alpha,\beta}_G=\emptyset$ whenever $(\alpha,\beta)\notin \mathcal{R}$. For $(\alpha,\beta)\in\mathcal{R}$, define also
\[ Z^{\alpha,\beta}_G = \sum_{I \in \I^{\alpha,\beta}_G} \lambda^{(\alpha+\beta)n},
\]
i.e., $Z^{\alpha,\beta}_G$ is the total weight of independent sets in $\I^{\alpha,\beta}_G$. The first moment of $Z^{\alpha,\beta}_G$ is
\begin{eqnarray*}
\E_{\G}[Z^{\alpha,\beta}_G] = \lambda^{(\alpha+\beta)n}\binom{n}{\alpha n}\binom{n}{\beta n}\left(\frac{\binom{(1-\beta)n}{\alpha n}}{\binom{n}{\alpha n}}\right)^{\Delta}
\approx \exp(n \Phi_1(\alpha,\beta)),
\end{eqnarray*}
where
$$
\Phi_1(\alpha,\beta) = (\alpha+\beta)\ln(\lambda)+H(\alpha)+H(\beta)+ \Delta(1-\beta)H\left(\frac{\alpha}{1-\beta}\right)-\Delta H(\alpha),
$$
and $H(x)= - x\ln x-(1-x)\ln(1-x)$ is the entropy function. The asymptotic order of $\E_{\G}[Z^{\alpha,\beta}_G]$ follows easily by Stirling's approximation.

The first moment was analyzed in the work of Dyer et al. \cite{DFJ}.
We use the following lemma from Mossel et al. \cite{MWW} that
relates the properties of the first moment to $p^*,p^+$ and $p^-$.
The most important aspect of this lemma is that in the non-uniqueness region
$\E_{\G}[Z^{\alpha,\beta}_G]$ is maximized when $\alpha\neq \beta$ (\cite{DFJ}) and more specifically for $(\alpha,\beta)=(p^+,p^-)$ (\cite{MWW}).

\begin{lemma}[Lemma 3.2 and Proposition 4.1 of Mossel et al. \cite{MWW}]\label{lemam}
The following holds:
\begin{enumerate}
\item the stationary point $(\alpha,\beta)$ of $\Phi_1$ over $\mathcal{R}$ is the solution to $\beta=\phi(\alpha)$ and $\alpha=\phi(\beta)$, where
\begin{equation}\label{eq:occupy}
\phi(x) = (1-x)\left(1-\left(\frac{x}{\lambda(1-x)}\right)^{1/\Delta}\right),
\end{equation}
and the solutions are exactly $(p^+,p^-)$, $(p^-,p^+)$, and $(p^*,p^*)$ when $\lambda>\lambda_c(\Tree_\Delta)$, and the unique solution is $(p^*,p^*)$ when $\lambda \leq \lambda_c(\Tree_\Delta)$;
\item when $\lambda \leq \lambda_c(\Tree_\Delta)$, $(p^*,p^*)$ is the unique maximum of $\Phi_1$ over $\mathcal{R}$, and when $\lambda > \lambda_c(\Tree_\Delta)$, $(p^+,p^-)$ and $(p^-,p^+)$ are the maxima of $\Phi_1$ over $\mathcal{R}$, and $(p^*,p^*)$ is not a local maximum;
\item all local maxima of $\Phi_1$ satisfy $\alpha+\beta+\Delta(\Delta-2)\alpha\beta \leq 1$;
\item $p^+,p^-,p^*$ satisfy $p^-<p^*<p^+$ and when $\lambda \to \lambda_c(\Tree_\Delta)$ from above, we have $p^*,p^-,p^+ \to 1/\Delta$.
\end{enumerate}
\end{lemma}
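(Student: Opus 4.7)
The plan is to reduce all four parts of the lemma to an analysis of the gradient and Hessian of $\Phi_1$. As a preliminary, I would bring $\Phi_1$ into a manifestly symmetric form. Expanding the asymmetric term yields $\Delta(1-\beta)H\!\left(\alpha/(1-\beta)\right) = \Delta\bigl[(1-\beta)\ln(1-\beta) - \alpha\ln\alpha - (1-\alpha-\beta)\ln(1-\alpha-\beta)\bigr]$, so that
$$\Phi_1(\alpha,\beta) = (\alpha+\beta)\ln\lambda - \alpha\ln\alpha - \beta\ln\beta + (\Delta-1)\bigl[(1-\alpha)\ln(1-\alpha)+(1-\beta)\ln(1-\beta)\bigr] - \Delta(1-\alpha-\beta)\ln(1-\alpha-\beta),$$
which is visibly symmetric in $(\alpha,\beta)$. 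Direct differentiation reduces $\partial\Phi_1/\partial\alpha=0$ to $\lambda(1-\alpha-\beta)^\Delta = \alpha(1-\alpha)^{\Delta-1}$, which is equivalent to $\beta = \phi(\alpha)$; by symmetry $\partial\Phi_1/\partial\beta=0$ becomes $\alpha=\phi(\beta)$. This establishes the stationarity characterization in Part~1.

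To match the solutions of this coupled system with $(p^*,p^*), (p^+,p^-), (p^-,p^+)$, I would appeal to the standard analysis of the tree recursion on $\Tree_\Delta$: $p^*$ is the unique fixed point of $\phi$, while $p^+,p^-$ are the additional (possibly equal) fixed points of the two-step iterate $\phi\circ\phi$. Below $\lambda_c(\Tree_\Delta)$, $\phi$ is a contraction near its fixed point and $\phi\circ\phi$ has $p^*$ as its only fixed point; at $\lambda=\lambda_c(\Tree_\Delta)$ one has $(\phi\circ\phi)'(p^*)=1$, and a pitchfork bifurcation spawns the pair $p^+\neq p^-$ above the threshold, with $\phi(p^+)=p^-$ and $\phi(p^-)=p^+$. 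The same local bifurcation picture proves Part~4: $p^\pm\to 1/\Delta = p^*$ as $\lambda\downarrow\lambda_c(\Tree_\Delta)$, and the interlacing $p^-<p^*<p^+$ follows from the monotonicity of $\phi$ combined with the pitchfork structure.

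For Parts~2 and~3 I would compute the Hessian of $\Phi_1$; its entries are
$$\frac{\partial^2\Phi_1}{\partial\alpha^2} = -\frac{1}{\alpha}+\frac{\Delta-1}{1-\alpha}-\frac{\Delta}{1-\alpha-\beta},\qquad \frac{\partial^2\Phi_1}{\partial\alpha\,\partial\beta} = -\frac{\Delta}{1-\alpha-\beta},$$
with the $\beta\beta$-entry obtained by swapping $\alpha\leftrightarrow\beta$. A necessary condition for a local maximum is that this Hessian have nonnegative determinant. Multiplying through by the positive common denominator $\alpha\beta(1-\alpha)(1-\beta)(1-\alpha-\beta)$ and expanding, the determinant condition collapses after systematic cancellation to the clean polynomial inequality $1-(\alpha+\beta)-\Delta(\Delta-2)\alpha\beta\ge 0$, which is precisely Part~3. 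For Part~2, evaluating the Hessian at the symmetric point $(p^*,p^*)$ shows that its eigenvalue in the antisymmetric direction $(1,-1)$ changes sign exactly at $\lambda=\lambda_c(\Tree_\Delta)$: below the threshold the point is a strict local maximum; above it becomes a saddle, and by the $(\alpha,\beta)\leftrightarrow(\beta,\alpha)$ symmetry, the two new critical points $(p^+,p^-)$ and $(p^-,p^+)$ must carry the strict local (in fact global) maxima.

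The main obstacle I anticipate is the algebra in Part~3: the determinant of the Hessian is a difference of a product and a square that expands into many terms, and the clean collapse to $1-(\alpha+\beta)-\Delta(\Delta-2)\alpha\beta$ requires careful bookkeeping. A useful sanity check is that this polynomial vanishes exactly at $\alpha=\beta=1/\Delta$, which is where the symmetric critical point becomes degenerate at $\lambda=\lambda_c(\Tree_\Delta)$. All remaining steps are direct differentiations, invocations of the well-studied tree recursion, or straightforward eigenvalue computations on a $2\times 2$ matrix.
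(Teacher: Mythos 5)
The paper does not prove this lemma; it is imported verbatim from Mossel et al.\ \cite{MWW} (Lemma~3.2 and Proposition~4.1 there), so there is no internal proof to compare your argument against. That said, your proposal is a reasonable and largely correct sketch, and its strongest part checks out: rewriting $\Phi_1$ in the symmetric form is correct, the resulting stationary equation $\lambda(1-\alpha-\beta)^\Delta=\alpha(1-\alpha)^{\Delta-1}$ is indeed equivalent to $\beta=\phi(\alpha)$, and your Hessian entries are right. The claimed collapse of the determinant condition is also correct and is worth recording explicitly: with $u=(\Delta\alpha-1)/(\alpha(1-\alpha))$, $v=(\Delta\beta-1)/(\beta(1-\beta))$, one has $\det\mathbf{H}=uv-\tfrac{\Delta}{1-\alpha-\beta}(u+v)$, and multiplying by $\alpha\beta(1-\alpha)(1-\beta)(1-\alpha-\beta)$ and simplifying yields exactly $1-(\alpha+\beta)-\Delta(\Delta-2)\alpha\beta$; so $\det\mathbf{H}\ge 0$ is precisely the inequality of Part~3, and the one-line identity is a cleaner route than the ``systematic cancellation'' you anticipated.

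The gaps are in Parts~1, 2 and 4, where you lean on ``the standard analysis of the tree recursion'' and a ``pitchfork bifurcation'' without supplying the underlying facts. Specifically: (i)~Part~1 asserts that the system $\beta=\phi(\alpha)$, $\alpha=\phi(\beta)$ has \emph{exactly} the three stated solutions; this requires showing $\phi$ is monotone decreasing (so fixed points of $\phi\circ\phi$ come in swap-pairs plus the fixed point of $\phi$) and that $\phi\circ\phi$ has at most three fixed points, which is the actual content of the cited lemma in \cite{MWW} and is not a corollary of the bifurcation picture alone. (ii)~For Part~2, that $(p^+,p^-)$ is the \emph{global} maximum you need more than the sign change of the antisymmetric Hessian eigenvalue at $(p^*,p^*)$: you must also observe that $\Phi_1\to-\infty$ at the boundary of $\mathcal{R}$ (or that $\nabla\Phi_1$ pushes inward), so the supremum is attained at an interior critical point, after which Part~1 plus the $\alpha\leftrightarrow\beta$ symmetry forces the conclusion. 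This is a short additional step but currently missing. (iii)~For Part~4, the ordering $p^-<p^*<p^+$ and the limit $p^\pm\to 1/\Delta$ again ride on the monotonicity of $\phi$ and an explicit recursion analysis, not just on the word ``pitchfork.'' None of these gaps is fatal---they are exactly what the cited reference supplies---but a self-contained proof would need to develop the fixed-point structure of $\phi$ rather than defer to it.
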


For every $\Delta \geq 3$, define the region
\begin{equation*}
\TD = \{(\alpha,\beta) \,|\, \alpha,\beta > 0 \mbox{ and } \alpha+\beta+\Delta(\Delta-2)\alpha\beta \leq 1\}.
\end{equation*}
Part 3 of Lemma~\ref{lemam} establishes that the local maxima of $\Phi_1$ lie in $\TD$.
Note that for $\Delta \geq 3$, we have $\TD \subset \mathcal{R}$.
Hence, the local maxima for all $\Delta\geq 3$ lie in the interior of $\mathcal{R}$.

\subsection{Second Moment of the Partition Function}\label{sec:secondmom}

The second moment of $Z^{\alpha,\beta}_G$ satisfies (\cite{MWW})
\begin{eqnarray*}
\E_{\G}[(Z^{\alpha,\beta}_G)^2] &\approx& \exp(n \cdot \max_{\gamma,\delta,\epsilon} \Phi_2(\alpha,\beta,\gamma,\delta,\epsilon)),
\end{eqnarray*}
where
\begin{eqnarray}\label{ttty}
\Phi_2(\alpha,\beta,\gamma,\delta,\epsilon)
&=&2(\alpha+\beta)\ln(\lambda)+H(\alpha)+H_1(\gamma,\alpha)+H_1(\alpha-\gamma,1-\alpha)+H(\beta)+H_1(\delta,\beta)\nonumber\\
&&+H_1(\beta-\delta,1-\beta) + \Delta\Big(H_1(\gamma,1-2\beta+\delta)-H(\gamma)+H_1(\epsilon,1-2\beta+\delta-\gamma)\nonumber\\
&&+H_1(\alpha-\gamma-\epsilon,\beta-\delta)-H_1(\alpha-\gamma,1-\gamma)+H_1(\alpha-\gamma,1-\beta-\gamma-\epsilon)\nonumber\\
&&-H_1(\alpha-\gamma,1-\alpha)\Big),\label{eq:phi2}
\end{eqnarray}
and $H(x)=-x\ln(x)-(1-x)\ln(1-x)$, $H_1(x,y)=-x(\ln(x)-\ln(y))+(x-y)(\ln(y-x)-\ln(y))$.

 To make $\Phi_2$ well defined, the variables have to satisfy $(\alpha,\beta) \in \mathcal{R}$ and
\begin{equation}\label{eq:con2}
\begin{split}
\gamma,\delta,\epsilon \geq 0, \quad \alpha-\gamma-\epsilon \geq 0, \quad \beta-\delta \geq 0,
\quad 1-2\beta+\delta-\gamma-\epsilon \geq 0,\\
1-\alpha-\beta-\epsilon\geq 0, \quad \beta-\delta+\epsilon+\gamma-\alpha\geq 0.
\end{split}
\end{equation}

Lemma \ref{lemam} tells us that in the non-uniqueness region (which is the
region of interest in this paper) the first moment is maximized when
$(\alpha,\beta)$ is $(p^+,p^-)$ (or symmetrically, $(p^-,p^+)$).
To show that these unbiased configurations dominate the Gibbs distribution with high
probability (as desired for Theorems \ref{thm:unbalanced-3} and \ref{thm:unbalanced-bigger}
we will apply the second moment method, as used in \cite{MWW}.

To that end, we need to analyze the second moment for
$(\alpha,\beta) = (p^+,p^-)$, and
show that $\E_{\G}[(Z^{\alpha,\beta}_G)^2]=O\big((\E_{\G}[Z^{\alpha,\beta}_G])^2\big)$.
To do that we need to show that for
for $(\alpha,\beta)=(p^+,p^-)$, $\E_{\G}[(Z^{\alpha,\beta}_G)^2]$
is roughly determined by uncorrelated pairs of configurations.
This crux of this is to show that $\Phi_2$ is maximized when
$\gamma=\alpha^2$ and $\delta=\beta^2$, which is detailed in the upcoming condition
which was proposed in~\cite{Sly10}.

\begin{condition}[Condition 1.2\footnote{The numbering in this paper for results from Sly's work \cite{Sly10} refer to the \texttt{arXiv} version of his paper.} of Sly \cite{Sly10}]
\label{condition:maxima}
There exists a constant $\chi>0$ such that when
$|p^+-\alpha|<\chi$ and $|p^--\beta|<\chi$ then
$g_{\alpha,\beta}(\gamma,\delta,\epsilon):=\Phi_2(\alpha,\beta,\gamma,\delta,\epsilon)$
achieves its unique maximum in the region~(\ref{eq:con2}) at the
point \[(\gamma^*,\delta^*,\epsilon^*)=(\alpha^2,\beta^2,\alpha(1-\alpha-\beta)).\]
\end{condition}

As mentioned earlier, Condition \ref{condition:maxima} implies
that for $(\alpha,\beta)=(p^+,p^-)$, $\E_{\G}[(Z^{\alpha,\beta}_G)^2]=O\big((\E_{\G}[Z^{\alpha,\beta}_G])^2\big)$.
While the implicit constant in the latter equality is bigger than one, an application of the small graph conditioning
method \cite{JLR,Wormald}
shows that for $(\alpha,\beta)=(p^+,p^-)$, $Z^{\alpha,\beta}_G$ is concentrated around  its expected value (up to a multiplicative arbitrarily small polynomial factor).  This yields a lower bound on the partition function $Z_G^{\alpha,\beta}$ for
 $(\alpha,\beta)=(p^+,p^-)$.  On the other hand, it is straightforward to an upper bound on the partition function for
 balanced configurations $\alpha=\beta$.  Consequently, one obtains Theorems
  \ref{thm:unbalanced-3} and \ref{thm:unbalanced-bigger} as detailed below in Section \ref{sec:contri}, which implies that
  the Gibbs distribution is unbalanced with high probability.
 Sly \cite{Sly10} uses random regular bipartite graphs as a gadget in his reduction and
 utilizes this bimodality property of the Gibbs distribution.

Before stating our new results on when Condition~\ref{condition:maxima} holds, it is useful to remind
the reader the previously known values of $\lambda$ for which Condition~\ref{condition:maxima} holds:
\begin{itemize}
\item $\Delta \geq 3$, and $\lambda_c(\Tree_\Delta)<\lambda<\lambda_c(\Tree_\Delta)+\varepsilon(\Delta)$ for some (small)  $\varepsilon(\Delta)>0$, (\cite[Lemma 6.10, Lemma 5.1]{MWW});
\item $\Delta=6$ and $\lambda=1$, (\cite[Section 5]{Sly10}).
\end{itemize}

Let $\lambda_{1/2}(\Tree_\Delta)$ be the smallest value
of $\lambda$ such that $\phi(\phi(1/2))=1/2$ ($\phi$ is the function defined in Lemma~\ref{lemam}). Equivalently  $\lambda_{1/2}(\Tree_\Delta)$ is the minimum solution of
\begin{equation}\label{eq:lambdahalf}
\left(1+(1/\lambda)^{1/\Delta}\right)^{1-1/\Delta}\left(1-(1/\lambda)^{1/\Delta}\right)^{1/\Delta}=1.
\end{equation}
The following Lemma is the technical core of this work.
\begin{lemma}\label{lem:cond}
Condition~\ref{condition:maxima} holds for
\begin{enumerate}
\item $\Delta=3$ and $\lambda > \lambda_c(\Tree_\Delta)$, and
\label{case:cond-3}
\item $\Delta>3$ and $\lambda\in (\lambda_c(\Tree_\Delta),\lambda_{1/2}(\Tree_\Delta)]$,
\label{case:cond-bigger}
\end{enumerate}
\end{lemma}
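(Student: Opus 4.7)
The plan is to verify Condition~\ref{condition:maxima} in three stages: (i) show that $(\gamma^*,\delta^*,\epsilon^*)=(\alpha^2,\beta^2,\alpha(1-\alpha-\beta))$ is always a stationary point of $g_{\alpha,\beta}$ for every $(\alpha,\beta)\in\mathcal{R}$; (ii) show that this point is a strict local maximum whenever $(\alpha,\beta)$ is near $(p^+,p^-)$; (iii) rule out competing stationary points inside the feasible region~(\ref{eq:con2}). For (i), I would differentiate $\Phi_2$ with respect to each of $\gamma,\delta,\epsilon$. Because every term of $\Phi_2$ is an $H$ or $H_1$ expression, each first-order condition is a linear combination of logarithms that exponentiates to a polynomial identity of the form ``product of one collection of factors equals product of another''. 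Substituting the product-measure point produces matching factorizations (reflecting that $\gamma=\alpha^2$ corresponds to two independent copies of the configuration on $V_1$, and similarly for $\delta$ and $\epsilon$), so the identity holds identically in $(\alpha,\beta)$.

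For (ii) I would compute the $3\times 3$ Hessian of $g_{\alpha,\beta}$ at the product-measure critical point. The reparameterization emphasized in the introduction — writing $\gamma=\alpha^2+u$, $\delta=\beta^2+v$, $\epsilon=\alpha(1-\alpha-\beta)+w$ around the critical point — is designed precisely so that the Hessian block structure is symmetric under a swap of $(u,\alpha)\leftrightarrow(v,\beta)$, which collapses its determinant to a low-degree rational function of $\alpha,\beta,\lambda,\Delta$. At $\lambda=\lambda_c(\Tree_\Delta)$ and $(\alpha,\beta)=(p^*,p^*)=(1/\Delta,1/\Delta)$ the Hessian is marginally negative definite (consistent with the transition), and a perturbation argument driven by the tree-recursion operator $\phi$ of Lemma~\ref{lemam} extends negative definiteness to a neighborhood of $(p^+,p^-)$ throughout the claimed $\lambda$-range.

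The main obstacle is step (iii), since a priori $g_{\alpha,\beta}$ could have other critical points in the interior. My plan is to solve $\partial g_{\alpha,\beta}/\partial\epsilon=0$ explicitly: this equation is linear in logarithms involving $\epsilon$ together with $\alpha-\gamma-\epsilon$, $1-\alpha-\beta-\epsilon$, and $1-2\beta+\delta-\gamma-\epsilon$, so after exponentiation it determines $\epsilon$ as a rational function of $(\alpha,\beta,\gamma,\delta)$. Substituting back reduces the first-order system to two equations in $(\gamma,\delta)$, which in the symmetric variables $(u,v)$ should be shown to have no nontrivial solution in the relevant region. For $\Delta=3$ the resulting algebra is low-degree and can be treated analytically for all $\lambda>\lambda_c(\Tree_3)=4$; for $\Delta>3$, the restriction $\lambda\leq \lambda_{1/2}(\Tree_\Delta)$ — which is exactly the threshold at which $p^+$ would reach $1/2$ — keeps the candidate critical points bounded away from the boundary of~(\ref{eq:con2}) where the one-variable curve could generate spurious extrema. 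In both cases the final step is a rational inequality whose validity I expect to certify using the Mathematica framework described in Section~\ref{sec:computer-assist}.
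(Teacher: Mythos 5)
Your proposal gets the right skeleton — always-stationary product point, eliminate $\epsilon$ via $\partial g/\partial\epsilon=0$, reduce to a two-variable problem, exploit the $(\alpha,\gamma)\leftrightarrow(\beta,\delta)$ symmetry, and finish with computer-certified rational inequalities — but the central uniqueness mechanism is missing and two of your intermediate steps are misdirected.

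The real gap is in how you plan to rule out competing critical points. You say the two-variable first-order system "should be shown to have no nontrivial solution in the relevant region," but you do not say how. The paper's technique is not to solve that system. Instead, after projecting out $\epsilon$, it uses the sign pattern of the first derivatives (Lemma~\ref{lem:zeros}) to show that every stationary point of $f(\gamma,\delta)=g_{\alpha,\beta}(\gamma,\delta,\hat\epsilon)$ must lie in one of two closed subregions, both of which have $(\alpha^2,\beta^2)$ as a corner, and then proves that $\det(\mathbf{H})>0$ \emph{throughout} each subregion (Lemmata~\ref{lem:case13}, \ref{lem:case2}, \ref{lem:cased3}). Strict concavity on each subregion forces the already-known stationary point $(\alpha^2,\beta^2)$ to be the unique one. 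Your step (ii) — evaluating the $3\times 3$ Hessian \emph{at} the critical point and then extending negative definiteness "to a neighborhood" by a perturbation argument in $\phi$ — only establishes a strict local maximum and would degenerate into a Mossel--Weitz--Wormald-style argument valid only for $\lambda$ slightly above $\lambda_c$; it cannot reach all $\lambda>4$ at $\Delta=3$. The entire improvement of the paper lies in certifying global concavity over the candidate region, not local definiteness at a point.

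Your explanation of the role of $\lambda_{1/2}(\Tree_\Delta)$ is also off. Boundary cases of the feasible region~(\ref{eq:con2}) are already excluded for all $(\alpha,\beta)\in\TD$ by a derivative-blowup argument (Lemma~\ref{lem:interior}); this has nothing to do with $\lambda_{1/2}$. The threshold $\lambda_{1/2}$ is exactly where $p^+$ crosses $1/2$, and the hypothesis $\alpha,\beta\le 1/2$ (equivalently $1-2\alpha\ge0$) is used as an explicit sign condition inside the Hessian-positivity calculation for the "upper" subregion (e.g.\ in the proof of Lemma~\ref{lem:case2}, the terms~(\ref{eq:c22}) and~(\ref{eq:c23}) are argued nonnegative precisely because $1-2\alpha\ge0$). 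For $\Delta=3$ the paper pushes past $p^+=1/2$ by a separate, sharper argument (Lemma~\ref{lem:d3} via Lemma~\ref{lem:ineqdelta3} and~\ref{lem:cased3}); you would need an analogous replacement for the $1-2\alpha\ge0$ step, which your proposal does not supply.
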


Lemma \ref{lem:cond} is proved in Section \ref{sec:analysis}.
As a corollary of Lemma \ref{lem:cond} we get that Condition \ref{condition:maxima}
holds for the range of $\lambda$ specified in Theorems  \ref{thm:unbalanced-3} and \ref{thm:unbalanced-bigger}.

\begin{corollary}
\label{cor:condition}
Condition~\ref{condition:maxima} holds for:
\begin{enumerate}
\item
\label{label-cor:three}
For $\Delta=3$ and $\lambda > \lambda_c(\Tree_3)$.
\item
\label{label-cor:six}
For $\Delta\geq 6$ and
$\lambda_c(\Tree_\Delta)<\lambda\leq \lambda_{1/2}(\Tree_{\Delta})$
and $\lambda_{1/2}(\Tree_{\Delta})>\lambda_c(\Tree_{\Delta-1})$.
  \item
  \label{label-cor:five}
For $\Delta=5$ and
$\lambda_c(\Tree_5)<\lambda\leq\lambda_c(\Tree_5) + .402912$.
    \item
    \label{label-cor:four}
For $\Delta=4$,
$\lambda_c(\Tree_4)<\lambda\leq\lambda_c(\Tree_4) + .327887$.
\end{enumerate}
\end{corollary}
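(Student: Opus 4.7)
My plan is to deduce each item of the corollary from Lemma \ref{lem:cond} together with explicit numerical verifications. Item \ref{label-cor:three} is literally Lemma \ref{lem:cond}(\ref{case:cond-3}). For items \ref{label-cor:six}, \ref{label-cor:five}, and \ref{label-cor:four}, the plan is to invoke Lemma \ref{lem:cond}(\ref{case:cond-bigger}), which supplies Condition \ref{condition:maxima} throughout $(\lambda_c(\Tree_\Delta), \lambda_{1/2}(\Tree_\Delta)]$, and then to check that the claimed $\lambda$-interval lies inside that window. So the main work is bounding $\lambda_{1/2}(\Tree_\Delta)$ from below.

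To analyze $\lambda_{1/2}(\Tree_\Delta)$, I would substitute $y = \lambda^{-1/\Delta}$ to rewrite equation~(\ref{eq:lambdahalf}) as $(1+y)^{\Delta-1}(1-y) = 1$, i.e., $g_\Delta(y) := (1+y)^{\Delta-1}(1-y) - 1 = 0$. Since $g_\Delta(0)=0$, $g_\Delta'(0) = \Delta-2 > 0$, and $g_\Delta(1) = -1$, the function $g_\Delta$ has a smallest positive root $y_\Delta \in (0,1)$, with $g_\Delta > 0$ on $(0,y_\Delta)$ and $g_\Delta < 0$ just past it. Hence $\lambda_{1/2}(\Tree_\Delta) = y_\Delta^{-\Delta}$, and for any $\mu > 1$ the inequality $\lambda_{1/2}(\Tree_\Delta) > \mu$ is equivalent to $g_\Delta(\mu^{-1/\Delta}) < 0$.

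For items \ref{label-cor:five} and \ref{label-cor:four} this reduces to a single polynomial check. For $\Delta=4$, after dividing by $y$ the equation $g_4(y)=0$ becomes $y^3 + 2y^2 - 2 = 0$, and one verifies numerically that its root in $(0,1)$ gives $\lambda_{1/2}(\Tree_4) \geq \lambda_c(\Tree_4) + 0.327887$; equivalently, $g_4\bigl((\lambda_c(\Tree_4)+0.327887)^{-1/4}\bigr) \leq 0$. Similarly, for $\Delta=5$ the factored quartic $y^4 + 3y^3 + 2y^2 - 2y - 3 = 0$ yields $\lambda_{1/2}(\Tree_5) \geq \lambda_c(\Tree_5) + 0.402912$. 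For item \ref{label-cor:six}, Condition \ref{condition:maxima} on the stated interval is immediate from Lemma \ref{lem:cond}(\ref{case:cond-bigger}); the auxiliary inequality $\lambda_{1/2}(\Tree_\Delta) > \lambda_c(\Tree_{\Delta-1})$ for $\Delta\geq 6$ splits in two. For $\Delta\geq 7$, $\lambda_c(\Tree_{\Delta-1}) < 1$ (since $\lambda_c(\Tree_6) = 5^5/4^6 < 1$ and $\lambda_c(\Tree_\Delta)$ is decreasing in $\Delta$, as one can see from $\lambda_c(\Tree_{\Delta+1})/\lambda_c(\Tree_\Delta) = (1-1/(\Delta-1)^2)^\Delta < 1$), while $\lambda_{1/2}(\Tree_\Delta) = y_\Delta^{-\Delta} > 1$ because $y_\Delta \in (0,1)$; the inequality is automatic. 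The boundary case $\Delta=6$ reduces to the single scalar inequality $g_6\bigl((243/256)^{1/6}\bigr) < 0$, which corresponds to plugging $\lambda = \lambda_c(\Tree_5) = 256/243$ back into the $\lambda_{1/2}$ equation and seeing that we have overshot the first root $y_6$.

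The main obstacle is largely bookkeeping rather than any conceptual difficulty, since Lemma \ref{lem:cond} supplies the heavy lifting. The one delicate point is the precision of the constants $0.327887$ and $0.402912$: these are chosen essentially to saturate $\lambda_{1/2}(\Tree_\Delta) - \lambda_c(\Tree_\Delta)$, so the polynomial evaluations must be carried out with enough numerical precision to decisively establish the sign, which the paper's existing computer-assisted framework (Section~\ref{sec:computer-assist}) comfortably handles.
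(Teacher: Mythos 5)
Your proposal is correct and takes essentially the same approach as the paper: item~\ref{label-cor:three} is immediate from Lemma~\ref{lem:cond}(\ref{case:cond-3}), and items~\ref{label-cor:six}--\ref{label-cor:four} follow from Lemma~\ref{lem:cond}(\ref{case:cond-bigger}) once one bounds $\lambda_{1/2}(\Tree_\Delta)$ from below, splitting $\Delta\geq 6$ into the automatic case $\Delta\geq 7$ (where $\lambda_{1/2}>1>\lambda_c(\Tree_{\Delta-1})$) and the numerical case $\Delta=6$. Your substitution $y=\lambda^{-1/\Delta}$ turning~\eqref{eq:lambdahalf} into the polynomial sign test $g_\Delta(y)=(1+y)^{\Delta-1}(1-y)-1$ is a cleaner, more explicitly checkable way to certify the same decimal bounds the paper simply asserts ($\lambda_{1/2}(\Tree_4)>2.015387$, $\lambda_{1/2}(\Tree_5)>1.45641$, $\lambda_{1/2}(\Tree_6)>256/243$), but it does not change the underlying argument.
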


\begin{proof}
Part \ref{label-cor:three} is identical to the first bullet in Lemma
\ref{lem:cond}.

Part \ref{label-cor:six} follows from the second bullet of Lemma
\ref{lem:cond} and the fact that for $\Delta\geq 6$, it holds $\lambda_{1/2}(\Tree_\Delta) > \lambda_c(\Tree_{\Delta-1})$. To see this, by~(\ref{eq:lambdahalf}), we have that $1-(1/\lambda_{1/2}(\Tree_\Delta))^{1/\Delta} > 0$, which implies that
$\lambda_{1/2}(\Tree_\Delta)>1$. For $\Delta \geq 6$, we have $\lambda_c(\Tree_\Delta)<1$. Hence, for $\Delta \geq 7$, we have
$\lambda_{1/2}(\Tree_\Delta) > \lambda_c(\Tree_{\Delta-1})$. For $\Delta=6$, the claim follows from the
fact that $\lambda_c(\Tree_5)=256/243<\lambda_{1/2}(\Tree_6)\approx 1.23105$.

For $\Delta=5$, note that
\[ \lambda_{1/2}(\Tree_5)-\lambda_c(\Tree_5)>1.45641-256/243> .402912,
\]
which proves Part \ref{label-cor:five}.

For $\Delta=4$, note that
\[ \lambda_{1/2}(\Tree_4)-\lambda_c(\Tree_4)>
2.015387 - 27/16 = .327887,
\]
which proves Part \ref{label-cor:four}.
\end{proof}

Theorems \ref{thm:unbalanced-3} and \ref{thm:unbalanced-bigger}
now follow from Corollary~\ref{cor:condition} as outlined earlier, and detailed
in the following subsection.

\subsection{Proofs of Main Theorems}\label{sec:contri}

We now proceed to prove Theorems~\ref{thm:unbalanced-3} and \ref{thm:unbalanced-bigger}.

\begin{proof}[Proofs of Theorems~\ref{thm:unbalanced-3} and \ref{thm:unbalanced-bigger}]
The proof is essentially the same as the proof of ~\cite[Theorem 2.2]{MWW} with minor modifications. We include the proof for the sake of completeness.

Choose $\delta>0$ such that for $$X:=\min_{|x-p^{+}|\leq \delta,\  |y-p^{-}| \leq\delta}\Phi_1(x,y),\ Y:=\max_{|x-y|\leq \delta}\Phi_1(x,y)$$ it holds that $\tau:=X-Y>0$.
To see that this is possible, note that $\Phi_1$ is continuous and hence it is uniformly continuous at any closed and bounded region. Since $\Phi_1$ exhibits a global maximum at $(p^{+},p^{-})$, the existence of $\delta$ follows.

By Markov's inequality, we obtain that a.a.s.
\begin{equation}\label{eq:upperbound}
\mu(\mathcal{I}^\delta_B)=\frac{\sum_{x,y:|x-y|\leq \delta} Z_{G}^{x,y}}{Z_G}\leq \frac{\exp(n(Y+\frac{\tau}{4}))}{Z_G}.
\end{equation}

To bound $\min\{\mu(\mathcal{I}^\delta_1),\mu(\mathcal{I}^\delta_2)\}$, we need the following result from~\cite{MWW}. While their result is only stated for $(\alpha,\beta)$ close to $(1/\Delta, 1/\Delta)$, it can readily be verified
(as Sly also observed, e.g., see the discussion before
Lemma 3.4 in \cite{Sly10})
that their proof holds in a neighbourhood of $(p^{+}, p^{-})$, whenever Condition~\ref{condition:maxima} holds.

\begin{lemma}[Theorem 3.4 of Mossel et al. \cite{MWW}]
\footnote{The stated version of the theorem differs slightly from the
version in \cite{MWW}.  In particular, in \cite{MWW} $(\alpha,\beta)$ is fixed,
in the sense that it's independent of $n$.  Here,
$(\alpha,\beta)$ depends on $n$.  In the application of this theorem in
the proof of Theorem 2.2 in \cite{MWW},
 it is unclear how they deduce the existence of a fixed $(\alpha,\beta)$
and this is why we modified the statement of the theorem.
Their proof of Theorem 3.4
in \cite{MWW} still goes through for this slightly modified version.}
\label{lem:MWWlowerbound}
Let $\Delta\geq 3$. Suppose that Condition~\ref{condition:maxima} holds,
then for all sufficiently large $n$ there exist $(\alpha_n,\beta_n)$ where
\begin{equation}
\label{eq:MWWcond}
  \alpha_n = p^+ + o(1), \ \
  \beta_n = p^- + o(1), \mbox{ and }
 n\alpha_n \mbox{ and } n\beta_n \mbox{ are integers},
\end{equation}
and it holds a.a.s. that $$Z^{\alpha_n,\beta_n}_G\geq \frac{1}{n}\E_{\G}[Z^{\alpha_n,\beta_n}_G].$$
\end{lemma}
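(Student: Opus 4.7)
The plan is to adapt the proof of Theorem~3.4 in~\cite{MWW}, incorporating the modification flagged in the footnote that $(\alpha_n,\beta_n)$ is allowed to depend on $n$. First I would choose $(\alpha_n,\beta_n)$ to be a point of the lattice $\frac{1}{n}\mathbb{Z}^2$ closest to $(p^+,p^-)$; this guarantees $|\alpha_n-p^+|,|\beta_n-p^-|\leq 1/n$, so $\alpha_n=p^++o(1)$ and $\beta_n=p^-+o(1)$, and $n\alpha_n,n\beta_n$ are integers by construction. Part~3 of Lemma~\ref{lemam} places $(p^+,p^-)$ in the interior of $\mathcal{R}$, so for large $n$ the pair $(\alpha_n,\beta_n)$ stays bounded away from the boundary of $\mathcal{R}$ and the entropy terms appearing in $\Phi_1,\Phi_2$ are smooth on a fixed compact neighborhood.

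The technical heart of the argument is a second moment estimate. The asymptotic formula from Section~\ref{sec:secondmom} reduces the computation of $\E_{\G}[(Z^{\alpha_n,\beta_n}_G)^2]$ to maximizing $g_{\alpha_n,\beta_n}(\gamma,\delta,\epsilon)=\Phi_2(\alpha_n,\beta_n,\gamma,\delta,\epsilon)$ over the region~(\ref{eq:con2}). By Condition~\ref{condition:maxima} (applicable for large $n$ since eventually $|\alpha_n-p^+|,|\beta_n-p^-|<\chi$), this maximum is attained uniquely at the product point $(\gamma^*,\delta^*,\epsilon^*)=(\alpha_n^2,\beta_n^2,\alpha_n(1-\alpha_n-\beta_n))$, and a direct substitution shows that $\Phi_2$ evaluated there equals $2\Phi_1(\alpha_n,\beta_n)$. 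Combined with Stirling error estimates that are uniform on the compact neighborhood above, this yields
\[
\E_{\G}[(Z^{\alpha_n,\beta_n}_G)^2]\leq \mathrm{poly}(n)\cdot\big(\E_{\G}[Z^{\alpha_n,\beta_n}_G]\big)^2.
\]

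To upgrade this variance bound into the desired a.a.s.\ lower bound I would invoke the small graph conditioning method~\cite{JLR,Wormald}, exactly as in~\cite{MWW}. Conditioning on the joint counts of short cycles in $G\sim\G(n,\Delta)$ (which are asymptotically independent Poissons and absorb the polynomial slack in the ratio above) reduces the conditional second moment to $(1+o(1))$ times the square of the conditional first moment, so Chebyshev delivers a.a.s.\ concentration of $Z^{\alpha_n,\beta_n}_G$ around its mean up to sub-polynomial factors, and in particular $Z^{\alpha_n,\beta_n}_G\geq \frac{1}{n}\E_{\G}[Z^{\alpha_n,\beta_n}_G]$ a.a.s.

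The main obstacle I anticipate is making the second moment estimate \emph{uniform} as $(\alpha_n,\beta_n)$ drifts with $n$: both the implicit constants in Stirling's approximation and the exponential gap between the product maximizer of $\Phi_2$ and every other critical point (guaranteed by Condition~\ref{condition:maxima}) must be controlled uniformly over a shrinking neighborhood of $(p^+,p^-)$. This is precisely the issue flagged in the footnote, since one cannot simply fix $(\alpha,\beta)=(p^+,p^-)$ (as $np^+,np^-$ need not be integers), so the $n$-dependent version is genuinely required. Fortunately the uniformity is not substantive: every estimate in~\cite{MWW} depends on $(\alpha,\beta)$ only through continuous functions on a fixed compact neighborhood of $(p^+,p^-)$ in the interior of $\mathcal{R}$, so the fixed-$(\alpha,\beta)$ bounds transfer to the varying case by continuity and compactness.
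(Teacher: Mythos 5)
Your proposal matches the paper's approach: the paper itself merely observes (in the footnote and the sentence following the lemma statement) that a lattice point $(\alpha_n,\beta_n)$ within $1/n$ of $(p^+,p^-)$ exists, and then defers to [MWW]'s proof of Theorem~3.4 — second moment under Condition~\ref{condition:maxima} plus small graph conditioning — asserting that it goes through for the $n$-dependent $(\alpha_n,\beta_n)$. You reconstruct exactly that argument, correctly identifying the uniformity of the estimates over a compact neighborhood of $(p^+,p^-)$ as the only point requiring care, which is indeed what justifies the paper's footnote claim.
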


  For all $n$ large enough, there exist $(\alpha_n, \beta_n)$
where $|\alpha_n - p^+| \leq 1/n$, $|\beta_n - p^-| \leq 1/n$ and
$n\alpha_n$ and $n\beta_n$ are integers, and therefore
\eqref{eq:MWWcond} holds.

 From Lemma~\ref{lem:MWWlowerbound} and Corollary~\ref{cor:condition} it follows that a.a.s. $Z_G^{\alpha,\beta}\geq \exp(n(X-\frac{\tau}{4}))$
and consequently $$\mu(\mathcal{I}^\delta_1)\geq \frac{\exp(n(X-\frac{\tau}{4}))}{Z_G}.$$
Since both $\Phi_1$ and $\Phi_2$ are symmetric with respect to $\alpha,\beta$, a similar statement to Lemma~\ref{lem:MWWlowerbound} holds with the roles of $p^{+},p^{-}$ interchanged, so that
\begin{equation}\label{eq:lowerbound}
\min\{\mu(\mathcal{I}^\delta_1),\mu(\mathcal{I}^\delta_2)\}\geq \frac{\exp(n(X-\frac{\tau}{4}))}{Z_G}.
\end{equation}
Combining \eqref{eq:upperbound} and \eqref{eq:lowerbound}, we obtain
\begin{equation}
\label{ineq:bad-cut}
\mu(\mathcal{I}^\delta_B)\leq \exp(-\frac{n\tau}{2})\min\{\mu(\mathcal{I}^\delta_1),\mu(\mathcal{I}^\delta_2)\}.
\end{equation}
This completes the proof.

The torpid mixing of the Glauber dynamics claimed in
Theorems \ref{thm:unbalanced-3} and \ref{thm:unbalanced-bigger}
follows from \eqref{ineq:bad-cut} 
by Claim 2.3 in \cite{DFJ},
which is a standard conductance argument.
\end{proof}

For Theorem \ref{thm:main}, we will use Lemma~\ref{lem:cond}
combined with the work of Sly~\cite{Sly10}, but we need one additional
ingredient.  The following combinatorial result will be used
to extend the inapproximability result for $\Delta=3$ to
a range of $\lambda$ for $\Delta\geq 6$.

\begin{lemma}\label{l:trans}
Let $G$ be a graph of maximum degree $\Delta$ and let $k>1$ be an
integer. Consider the graph $H$ obtained from $G$ be replacing
each vertex by $k$ copies of that vertex
and each edge by the complete bipartite
graph between the corresponding copies. Then,
$$Z_G( (1+\lambda)^k - 1)=Z_H(\lambda).$$
\end{lemma}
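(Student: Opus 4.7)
The plan is to exhibit a weight-preserving decomposition of the independent sets of $H$ according to the set of ``occupied'' vertices of $G$. First, I would note that although each edge of $G$ is blown up into a complete bipartite graph on the corresponding copies, the $k$ copies of a single vertex of $G$ are pairwise non-adjacent in $H$. Hence any independent set $I_H$ of $H$ induces, for each $v\in V(G)$, a subset $T_v$ of the $k$ copies of $v$, with no other restriction within a cluster.

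Next, I would define $S(I_H)=\{v\in V(G):T_v\neq\emptyset\}$ and verify that $S(I_H)\in\I(G)$. Indeed, if $(u,v)\in E(G)$ with $T_u\neq\emptyset$ and $T_v\neq\emptyset$, then picking any copy of $u$ in $T_u$ and any copy of $v$ in $T_v$ gives two adjacent vertices of $H$ (by the complete bipartite construction), contradicting that $I_H$ is independent. Conversely, given any $S\in\I(G)$, the independent sets $I_H$ with $S(I_H)=S$ are exactly those obtained by choosing, for each $v\in S$, a nonempty subset $T_v$ of the $k$ copies of $v$, and $T_v=\emptyset$ for $v\notin S$; all such choices yield independent sets of $H$ because no edges of $H$ cross between clusters of $S$.

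Finally, I would sum weights. For a fixed $S\in\I(G)$, the contribution to $Z_H(\lambda)$ from independent sets projecting to $S$ equals
\[
\prod_{v\in S}\sum_{j=1}^{k}\binom{k}{j}\lambda^{j}\;=\;\bigl((1+\lambda)^{k}-1\bigr)^{|S|}.
\]
Summing over $S\in\I(G)$ gives
\[
Z_H(\lambda)=\sum_{S\in\I(G)}\bigl((1+\lambda)^{k}-1\bigr)^{|S|}=Z_G\bigl((1+\lambda)^{k}-1\bigr),
\]
which is exactly the claim. There is no substantive obstacle; the only point requiring care is the bookkeeping in step two to ensure that the dichotomy ``$T_v$ nonempty or $T_v$ empty'' correctly captures the independent-set constraint across clusters, which follows immediately from the complete bipartite blow-up of each edge.
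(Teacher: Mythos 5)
Your proposal is correct and follows essentially the same approach as the paper: both define the projection $I_H\mapsto S=\{v: \text{some copy of } v\text{ is in }I_H\}$, observe that $S$ is an independent set of $G$ whose fiber consists of all choices of nonempty subsets of copies for each $v\in S$, and sum $\bigl((1+\lambda)^k-1\bigr)^{|S|}$ over $S\in\I(G)$. Your write-up just spells out the bijective bookkeeping a bit more explicitly than the paper does.
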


\begin{proof}
Consider the map $f:{\cal I}_H\rightarrow {\cal I}_G$ that maps an
independent set $I$ of $H$ to an independent set $J$ of $G$ such
that $v\in J$ if and only if at least one of the $k$ copies of $v$
in $H$ are in $I$.

For an independent set $J$ of $G$ the total contribution of sets
in $f^{-1}(J)$ to $Z_H(\lambda)$ is  $((1+\lambda)^k-1)^{|J|}$,
since for each $v\in J$ we can choose any non-empty subset of its
$k$ copies in $H$.
\end{proof}

We are now ready to prove Theorem~\ref{thm:main}.

\begin{proof}[Proof of Theorem~\ref{thm:main}]
Sly's reduction~\cite[Section 2]{Sly10}
establishes that for any $\lambda>\lambda_c(\Tree_\Delta)$ such that Condition~\ref{condition:maxima} holds and also the following two
inequalities hold:
\begin{equation}\label{eq:Slycond}
(\Delta-1)p^{+}p^{-}\leq (1-p^{+})(1-p^{-}) \mbox{ and } p^{+}<\frac{3}{5}(1-p^{-}),
\end{equation}
then there does not exist (assuming NP$\neq$RP) an FPRAS for the partition function of the
hard-core model with activity $\lambda$.
For the two inequalities in \eqref{eq:Slycond}, as
Sly \cite{Sly10} points out, they are unnecessary.
First off, the inequality $(\Delta-1)p^{+}p^{-}\leq (1-p^{+})(1-p^{-})$ is
implied by Part 3 of Lemma~\ref{lemam}.
And the condition $p^{+}<\frac{3}{5}(1-p^{-})$ is used to
simplify the proof of Lemma 4.2 in \cite{Sly10}. For completeness, a slight modification of Sly's Lemma 4.2 without \eqref{eq:Slycond} in the
hypothesis is proved in Section~\ref{sec:remainingSly}.

Corollary~\ref{cor:condition} establishes that
Condition~\ref{condition:maxima} holds
for a range of activities $\lambda$ for each $\Delta\geq 3$. By the discussion above, we obtain that there does not exist (assuming NP$\neq$RP) an FPRAS for the partition function of the hard-core model with activity
\begin{itemize}
\item $\lambda\in(\lambda_c(\Tree_3),\infty]$ for $\Delta=3$.
\item $\lambda\in(\lambda_c(\Tree_\Delta),\lambda_c(\Tree_{\Delta-1})]$ for $\Delta\geq 6$.
\item $\lambda\in(\lambda_c(\Tree_5),\lambda_c(\Tree_5) + .402912]$ for $\Delta=5$.
\item $\lambda\in(\lambda_c(\Tree_4),\lambda_c(\Tree_4) + .327887]$ for $\Delta=4$.
\end{itemize}

Lemma~\ref{l:trans} (used with $k=2$ and $\Delta=3$) implies that there does not exist (assuming
NP$\neq$RP) an FPRAS for the partition function of the hard-core model with activity $\lambda>\sqrt{5}-1$ in graphs of maximum
degree $6$.

The range of $\lambda$ for $\Delta$ and $\Delta-1$ where we can prove hardness (that is,
$(\lambda_c(\Tree_\Delta),\lambda_{c}(\Tree_{\Delta-1})]$) overlap for $\Delta\geq 6$.
This is useful since the hardness for $\Delta-1$ automatically gives hardness for $\Delta$.

Thus for $\Delta\geq 6$ we have the hardness result on the set
$$
(\lambda_c(\Tree_\Delta),\lambda_{c}(\Tree_{\Delta-1})]\cup\dots\cup
(\lambda_c(\Tree_6),\lambda_{c}(\Tree_5)]\cup (\lambda_{c}(\Tree_5), \lambda_{c}(\Tree_5)+0.402912]\cup (\sqrt{5}-1,\infty) =
(\lambda_c(\Tree_\Delta),\infty).
$$
This concludes the proof of the theorem for $\Delta\geq 6$.

For $\Delta=4$, we have the hardness result on the set $(\lambda_c(\Tree_4), 2.01538] \cup (4,+\infty)$.

For $\Delta=5$, we have the hardness result on the set $$(\lambda_c(\Tree_5), 1.45641] \cup (1.6875,2.01538] \cup (4,+\infty).$$
\end{proof}

\subsection{On the Use of Computational Assistance}
\label{sec:computer-assist}

We use Mathematica to prove several inequalities involving rational functions in regions
bounded by rational functions. Such inequalities are known to be decidable by Tarski's quantifier
elimination~\cite{MR0028796}, the particular version of Collins algebraic decomposition (CAD)
used by Mathematica's \verb!Resolve! command is described in~\cite{Strz}.
The algorithms are guaranteed to return correct answers---they do not suffer
from precision issues since they use interval arithmetic (a real number is
represented using an interval whose endpoints are rational numbers).

\section{Non-Reconstruction revisited}
\label{sec:remainingSly}

In this section, we reprove Lemma 4.2 from Sly \cite{Sly10} without
\eqref{eq:Slycond} in the hypothesis.  This will allow us to focus on proving Lemma~\ref{lem:cond}.

Recall, $\Tree_\Delta$ is the infinite $\Delta$-regular tree, and $p^+,p^-$
denote the marginal probabilities that the root is occupied for the limit of
even and odd, respectively, sized trees.
Let $\hat{\Tree}_\Delta$ denote the infinite $(\Delta-1)$-ary tree rooted at $\rho$.
(Thus, these two trees only differ at the root.)
 For $\ell\in\mathbb{N}$, let $\hat{T}_\ell$ denote
the tree with branching factor $\Delta-1$ and containing $\ell$ levels.
Let $q_\ell$ denote the marginal probability that the root is occupied
in the Gibbs distribution on $\hat{T}_\ell$.
Analogous to $p^+$ and $p^-$, let
  \[
 q^+ = \lim_{\ell\rightarrow\infty} q_{2\ell}
\ \ \mbox{ and } \ \
 q^- = \lim_{\ell\rightarrow\infty} q_{2\ell+1}.
\]
 The densities $q^{+},q^{-}$ are related to $p^{+},p^{-}$ by:
 $$q^{+}=\frac{p^{+}}{1-p^{-}}, \ \ \
 q^{-}=\frac{p^{-}}{1-p^{+}}.$$

 There are two semi-translation invariant measures $\hat{\mu}^+$ and $\hat{\mu}^-$ on $\hat{\Tree}_\Delta$
 obtained by taking the weak limit of the hard-core measure
 of even-sized trees $\hat{T}_{2\ell}$
 and odd-sized trees $\hat{T}_{2\ell+1}$, respectively.
 In these measures $\hat{\mu}^+, \hat{\mu}^-$ on the infinite tree, $q^+$ and
 $q^-$, respectively, are the marginal probabilities that the root is occupied.
 These measures  $\hat{\mu}^+$ and $\hat{\mu}^-$ can also be
 generated by a broadcasting process, see \cite[Section 4]{Sly10}.

 For $v\in\hat{\Tree}_\Delta$, denote by $S_{v,\ell}$ the vertices at level $\ell$ in the subtree of $\hat{\Tree}_\Delta$ rooted at $v$.
 Let $X_{\rho,\ell,+}$ denote the marginal probability that the root $\rho$ is
 occupied in an independent set $X$ generated by the following process:
 we first sample an independent set $\hat{X}$ from the measure $\hat{\mu}^+$,
 then we condition on the configuration $\hat{X}_S$ on $S_{\rho,\ell}$,
 and finally we sample an independent set $X$ from
 the hard-core measure on $\hat{T}_\ell$ conditioned on $X_S=\hat{X}_S$.
 Note that the configuration of the vertices in $S_{\rho,\ell}$ is a random vector, so that $X_{\rho,\ell,+}$ is a random variable. Define similarly $X_{\rho,\ell,-}$ for $\hat{\mu}^-$.

  We may extend this definition to an arbitrary vertex $v\in\hat{\Tree}_\Delta$ at distance $D$ from the root $\rho$, by setting $X_{v,\ell,+}=X_{\rho, \ell, +}$ if $D$ is even or $X_{v,\ell,+}=X_{\rho, \ell, -}$ if $D$ is odd. Thus, $X_{v,\ell,+}$ is the probability that, in an appropriate translation of $\hat{\mu}^+$, $v$ is occupied conditioning on the configuration of $S_{v,\ell}$. Define similarly $X_{v,\ell,-}$ for $\hat{\mu}^-$.

In Sly~\cite{Sly10}, it is proved that $X_{\rho,\ell,+}$ ($X_{\rho,\ell,-}$) is strongly concentrated around $q^+$ ($q^-$, respectively)
under the condition that $p^+<\frac{3}{5}(1-p^-)$. This concentration is used
by Sly for establishing the properties of the gadget he uses in his reduction.
We need to reprove the concentration without the condition $p^+<\frac{3}{5}(1-p^-)$.

\begin{lemma}[see Lemma 4.2 in Sly~\cite{Sly10}]
\label{lem:Sly42}
When $\lambda>\lambda_c(\Tree_\Delta)$, there exist constants $\zeta_1(\lambda, \Delta), \zeta_2(\lambda, \Delta) > 0$, for all sufficiently large $\ell$,
$$\mathbb{P}\left[\Big\vert X_{\rho,\ell,+} - q^+\Big\vert  \geq \exp(-\zeta_1\ell)\right]\leq \exp(-\exp(\zeta_2l)).$$
$$\mathbb{P}\left[\Big\vert X_{\rho,\ell,-} - q^-\Big\vert  \geq \exp(-\zeta_1\ell)\right]\leq \exp(-\exp(\zeta_2l)).$$
\end{lemma}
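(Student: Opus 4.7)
The plan is to express $X_{\rho,\ell,+}$ as a deterministic function of the random boundary configuration at level $\ell$, identify $(q^+,q^-)$ as an attracting two-cycle for the tree recursion, and prove concentration via bounded differences controlled by the contraction of $F\circ F$ at that fixed point. We focus on $X_{\rho,\ell,+}$; the bound for $X_{\rho,\ell,-}$ is analogous.

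Let $d=\Delta-1$ and let $F(x_1,\dots,x_d)=\lambda\prod_i(1-x_i)/(1+\lambda\prod_i(1-x_i))$ be the hard-core tree recursion. The spatial Markov property of $\hat\mu^+$ gives
\[
X_{\rho,\ell,+}=F(Y_1,\dots,Y_d),
\]
where the $Y_i$ are independent, each distributed as $X_{\rho,\ell-1,-}$; iterating, $X_{\rho,\ell,+}=g(\hat{X}_S)$ is a deterministic function of the boundary configuration $\hat{X}_S$ at $S=S_{\rho,\ell}$ with $|S|=d^\ell$. Representing the broadcasting process generating $\hat\mu^+$ by independent random bits at the nodes of $\hat{\Tree}_\Delta$, we may view $g$ as a function of independent random variables. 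By self-consistency of $\hat\mu^+$, $\E[X_{\rho,\ell,+}]=q^+$ exactly, so only concentration around the mean must be shown.

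The key quantitative input is the contraction of $F\circ F$ at $(q^+,q^-)$. A direct chain-rule computation yields $|\partial F/\partial x_j|_{(q^-,\dots,q^-)}=q^+(1-q^+)/(1-q^-)$ and the symmetric formula at $(q^+,\dots,q^+)$, so the $\ell^1$-operator norm of the Jacobian of $F\circ F$ at $(q^+,q^-)$ equals $d^2 q^+q^-$. Part~3 of Lemma~\ref{lemam}, applied to the first-moment local maximum $(p^+,p^-)$, gives $d^2 q^+q^-\leq 1$, with strict inequality for all $\lambda>\lambda_c(\Tree_\Delta)$ (the equality case corresponds to the critical tangency $p^+=p^-=1/\Delta$ at $\lambda_c$). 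Propagating these Jacobian bounds along every (unique) root-to-leaf path, each leaf spin contributes influence at most $K\,(q^+q^-)^{\ell/2}=K\,c^{\ell/2}/d^\ell$ to $g$, where $c:=d^2 q^+q^-\in(0,1)$ and $K=K(\lambda,\Delta)$ is a constant. McDiarmid's bounded-differences inequality then yields
\[
\Pr\!\left[\,|X_{\rho,\ell,+}-q^+|\geq t\,\right]\leq 2\exp\!\bigl(-\Omega(t^2\,d^\ell\,c^{-\ell})\bigr),
\]
and choosing $t=\exp(-\zeta_1\ell)$ with any $\zeta_1\in\bigl(0,\tfrac{1}{2}\log(d/c)\bigr)$ yields the claimed double-exponential bound $\exp(-\exp(\zeta_2\ell))$ with $\zeta_2=\log(d/c)-2\zeta_1>0$.

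The main obstacle, and the place where Sly's original argument invokes $p^+<\tfrac{3}{5}(1-p^-)$, is a bootstrap argument ensuring that, along (overwhelmingly likely) sample paths of the recursion, intermediate values stay inside a neighbourhood of $(q^+,q^-)$ on which the Jacobian bounds are uniformly valid. Without an a priori bound of the form $q^+<3/5$ a naive single-step Lipschitz estimate on $F$ over $[0,1]^d$ can blow up. The remedy is to work throughout with the two-step recursion $F\circ F$: the inequality $d^2q^+q^-<1$ from Part~3 of Lemma~\ref{lemam} already gives a locally attracting fixed point, and the coordinatewise monotonicity of $F\circ F$ produces a compact invariant rectangle contained in $(0,1)^d$ and bounded away from the coordinate hyperplanes; on this rectangle the Jacobian bounds hold uniformly, and the McDiarmid argument above closes without any further hypothesis on $(p^+,p^-)$ beyond $\lambda>\lambda_c(\Tree_\Delta)$.
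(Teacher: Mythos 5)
Your proposal shares the paper's key idea: compute the derivative of the \emph{two-step} recursion $F\circ F$ at the semi-translation-invariant fixed point, observe that each partial derivative equals $q^+q^-$ there, and bound it by $\frac{1}{(\Delta-1)^2}$ via Part~3 of Lemma~\ref{lemam} (equivalently $(\Delta-1)^2q^+q^-\le 1$), thereby removing Sly's hypothesis $p^+<\frac{3}{5}(1-p^-)$. This is exactly the paper's new contribution, and your computation of $\partial F/\partial x_j$ and the chain rule for $F\circ F$ is correct.

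However, the concentration step has a genuine gap. McDiarmid's inequality requires a \emph{deterministic} per-coordinate Lipschitz bound on $g$ over the entire configuration space. Your influence bound $K(q^+q^-)^{\ell/2}$ propagates the per-coordinate contraction $\approx q^+q^-$ along a root-to-leaf path, but this contraction factor is valid only near the fixed points $(q^+,\dots)$ and $(q^-,\dots)$. Your proposed fix---a ``compact invariant rectangle bounded away from the coordinate hyperplanes on which the Jacobian bounds hold uniformly''---does not work: by coordinatewise monotonicity the natural invariant rectangle shrinks to $[q^-,q^+]^d$, and that rectangle contains the \emph{repelling} symmetric fixed point $q^*$, where the per-coordinate two-step Jacobian equals $(q^*)^2 > \frac{1}{(\Delta-1)^2}\ge q^+q^-$ (indeed $q^*$ being repelling is precisely the non-uniqueness $(\Delta-1)q^*>1$). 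So the contraction bound fails to hold uniformly on any invariant rectangle, and a worst-case Lipschitz bound for $F\circ F$ on $[q^-,q^+]^{d^2}$ can be well above $q^+q^-$ when $q^+$ is close to $1$ (exactly the regime the $\frac{3}{5}$ condition excluded). One needs instead an argument showing that the intermediate values along the recursion are \emph{concentrated near $q^\pm$ with high probability} (Sly's equation (4.1) and the broadcasting/Martin-kernel analysis), followed by a conditional or truncated martingale bound; a plain McDiarmid with a uniform Lipschitz constant does not close. The paper avoids this by proving only the new ingredient---the refined contraction factor~(\ref{eq:contraction})--(\ref{eq:cont-rec})---and deferring the rest to Sly's original proof of his Lemma~4.2, which handles precisely this issue.

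Two minor remarks: the strict inequality $d^2q^+q^-<1$ is asserted without proof (Part~3 gives only $\le$), though it is not actually needed since $d>1$ already makes $d/c>1$; and your identification of the role of $p^+<\frac{3}{5}(1-p^-)$ is slightly off---it bounds the \emph{one-step} derivative, and the paper's remedy is to pass to the two-step derivative whose value $q^+q^-$ is controlled unconditionally, rather than to keep intermediate values in a neighbourhood.
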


\begin{proof}
We follow the proof of~\cite{Sly10}.  For a vertex $v\in\hat{\Tree}_\Delta$, let
$N(v)=N^1(v)$ denote its children, and for $i\in\mathbb{N}$, let
$N^i(v)$ denote its descendants $i$ levels below.
For $1\le i\le \ell$ and $s\in\{+,-\}$, let $\mathcal{X}_{N^i(v),\ell-i,s}$
  denote the vector $\{X_{w,\ell-i,s}: w\in N^i(v)\}$.

By standard tree recursions, it can be proved that for $s\in\{+,-\}$,
\begin{equation*}
X_{v,\ell,s}=h(\mathcal{X}_{N(v),\ell-1,s})
   :=\frac{\lambda\prod_{w\in N(v)}(1-X_{w,\ell-1,s})}{1+\lambda\prod_{w\in N(v)}(1-X_{w,\ell-1,s})}.
\end{equation*}
In our framework, one may establish a contraction property for $h$, but it is slightly more straightforward to look at depth two of the recursion. Note that
\begin{equation*}
1-X_{v,\ell,s}=\frac{1}{1+\lambda\prod_{w\in N(v)}(1-X_{w,\ell-1,s})}
\end{equation*}
so that
\begin{align}
X_{v,\ell,s}&=1-\frac{1}{1+\lambda\prod_{w\in N(v)}\frac{1}{1+\lambda\prod_{z\in N(w)}(1-X_{z,\ell-2,s})}}\notag\\
&=\frac{\lambda}{\lambda+\prod_{w\in N(v)}\left(1+\lambda\prod_{z\in N(w)}(1-X_{z,\ell-2,s})\right)} \label{eq:recur}
\\
\nonumber
&=:r(\mathcal{X}_{N^{2}(\rho),\ell-2,s}).
\end{align}
By recursively applying \eqref{eq:recur}, we obtain that \[X_{\rho,\ell,s}=:g(\mathcal{X}_{N^{2L}(\rho),\ell-2L,s}).\]
Note that $r(\mathcal{X}_{N^{2}(\rho),\ell-2,s})=g(\mathcal{X}_{N^{2}(\rho),\ell-2,s})$.

For the rest of the proof, we focus on the case $s=+$ and $v$ being at even distance from the root~$\rho$, the other cases being almost identical (since we are looking at depth two of the recursion).

Assume $\mathcal{X}_{N^2(v),\ell-2,+},\mathcal{X}'_{N^2(v),\ell-2,+}$ are two vectors which are equal except at one vertex $z^*{\in}N^2(v)$.
To prove the new version of the lemma, it suffices to obtain the
following contraction property.
We will prove for all $\ell$ sufficiently large, for all $v\in \hat{\Tree}_\Delta$,
for all
pairs $\mathcal{X}_{N^2(v),\ell-2,+},\mathcal{X}'_{N^2(v),\ell-2,+}$
that differ at a single vertex $z^*$,
\begin{equation}\label{eq:contraction}
|r(\mathcal{X}_{N^2(v),\ell-2,+})-r(\mathcal{X}'_{N^2(v),\ell-2,+})|\leq \left(\frac{1}{(\Delta-1)^2}+\frac{1}{10}\right) |X_{z^*,\ell-2,+}-X'_{z^*,\ell-2,+}|.
\end{equation}
To accomplish this, it suffices to prove that for $\ell$ large enough, it holds that:
\begin{equation}\label{eq:contfactor}
\left|\frac{\partial r}{\partial X_{z^*,\ell-2,+}}(\mathcal{X}_{N^2(v),\ell-2,+})\right|<\frac{1}{(\Delta-1)^2}+\frac{1}{10}.
\end{equation}
Let $w^*\in N(v)$ be the parent of $z^*$.  Then \eqref{eq:contfactor}
is equivalent to:
\begin{equation*}
\frac{\lambda^2\prod_{z\in N(w^*)\setminus\{z^*\}}(1-X_{z,\ell-2,+})
 \prod_{w\in N(v)\setminus\{w^*\}}\left(1+\lambda\prod_{z\in N(w)}(1-X_{z,\ell-2,+})\right)}{\left[\lambda+\prod_{w\in N(v)}\left(1+\lambda\prod_{z\in N(w)}(1-X_{z,\ell-2,+})\right)\right]^2}<\frac{1}{(\Delta-1)^2}+\frac{1}{10}.
\end{equation*}

Note that as $\ell\rightarrow+\infty$, since $v$ is at even distance from $\rho$, $X_{v,\ell,+}$ converges almost surely to $q^{+}$ (see \cite[equation (4.1)]{Sly10} and \cite[proof of Lemma 4.2]{Sly10}).
Hence, as $\ell\rightarrow+\infty$, we have that
\begin{equation*}
\left|\frac{\partial r}{\partial X_{z^*,\ell-2,+}}(\mathcal{X}_{N^2,\ell-2,+})\right|\rightarrow \frac{\lambda^2(1-q^+)^{\Delta-2}\left(1+\lambda(1-q^+)^{\Delta-1}\right)^{\Delta-2}}{\left[\lambda+\left(1+\lambda(1-q^+)^{\Delta-1}\right)^{\Delta-1}\right]^2}=: \gamma.
\end{equation*}
Pick $\ell_0=\ell_0(\Delta,\lambda)$ so that for all $\ell\geq\ell_0$ we have that \[\left|\frac{\partial r}{\partial X_{z^*,\ell-2,+}}(\mathcal{X}_{N^2(v),\ell-2,+})-\gamma\right|\leq \frac{1}{10}.\] We will see next that $\gamma=q^{+}q^{-}$ and hence by Part 3 of Lemma~\ref{lemam}, it follows that $\gamma\leq \frac{1}{(\Delta-1)^2}$ which proves \eqref{eq:contfactor}.

To see that $\gamma=q^+q^-$, note that $\lambda (1-q^+)^{\Delta-1}=\frac{q^-}{1-q^-}$ and $\lambda (1-q^-)^{\Delta-1}=\frac{q^+}{1-q^+}$ (one can derive these equalities by $p^+=\phi(p^-), p^-=\phi(p^+)$, see Part 1 of Lemma~\ref{lemam}). It is now a matter of a few algebra substitutions to check that $\gamma=q^+q^-$.

This proves \eqref{eq:contraction}.
Recursively applying this relation \eqref{eq:contraction} implies
that for all $\ell-L>\ell_0$, for $\Delta\geq 3$,
\begin{equation}\label{eq:cont-rec}
|g(\mathcal{X}_{N^L(v),\ell-L,+})-g(\mathcal{X}'_{N^L(v),\ell-L,+})|\leq
\left(\frac{1}{2}\right)^{\lfloor L/2 \rfloor} |X_{z^*,\ell-L,+}-X'_{z^*,\ell-L,+}|.
\end{equation}
Having established \eqref{eq:cont-rec},
the rest of Sly's proof of Lemma 4.2 goes through.
\end{proof}

\section{Analysis of the Second Moment}\label{sec:analysis}
In this section, we prove Lemma \ref{lem:cond}. We break its proof into two slightly easier components, namely
Lemmata~\ref{lem:main} and~\ref{lem:d3} stated below, depending on the value of $p^+$. Notice that Lemma~\ref{lem:main} determines explicitly a region of $(\alpha,\beta)$ for which Condition~\ref{condition:maxima} holds.

\begin{lemma}\label{lem:main}
Let $\Delta \geq 3$ and let $(\alpha,\beta) \in \TD$, $\alpha,\beta>0$, and $\alpha,\beta \leq
1/2$. Then $g_{\alpha,\beta}(\gamma,\delta,\epsilon):=\Phi_2(\alpha,\beta,\gamma,\delta,\epsilon)$
achieves its unique maximum in the region~(\ref{eq:con2}) at the point $(\gamma^*,\delta^*,\epsilon^*)$.
\end{lemma}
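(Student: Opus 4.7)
The plan is to show that $(\gamma^*,\delta^*,\epsilon^*)=(\alpha^2,\beta^2,\alpha(1-\alpha-\beta))$ is the unique interior critical point of $g_{\alpha,\beta}$ on the region (\ref{eq:con2}) and is a strict local (hence global) maximum. Boundary maxima are ruled out because on any face of (\ref{eq:con2}) where an argument of some $H_1$ vanishes, $g_{\alpha,\beta}\to-\infty$ via the logarithmic singularity; the few faces on which $g_{\alpha,\beta}$ extends continuously can be handled by reduction to lower-dimensional instances. A preliminary check, using the product-measure factorizations
\[
\alpha-\gamma^*-\epsilon^*=\alpha\beta,\ \ \beta-\delta^*=\beta(1-\beta),\ \ 1-\alpha-\beta-\epsilon^*=(1-\alpha)(1-\alpha-\beta),\ \ 1-\beta-\gamma^*-\epsilon^*=(1-\alpha)(1-\beta),
\]
together with $1-2\beta+\delta^*-\gamma^*-\epsilon^*=(1-\beta)(1-\alpha-\beta)$, confirms that $(\gamma^*,\delta^*,\epsilon^*)$ lies strictly inside (\ref{eq:con2}) whenever $\alpha,\beta>0$ and $\alpha+\beta<1$.

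Next, I would verify the first-order conditions $\nabla g_{\alpha,\beta}(\gamma^*,\delta^*,\epsilon^*)=0$ using $\partial_x H_1(x,y)=\ln((y-x)/x)$ and $\partial_y H_1(x,y)=\ln(y/(y-x))$. After substituting the factorizations above, every logarithm evaluates to a product drawn from $\{\alpha,\beta,1-\alpha,1-\beta,1-\alpha-\beta,1+\alpha-\beta,1-\alpha+\beta\}$, and the contributions to $\partial_\gamma,\partial_\delta,\partial_\epsilon$ cancel cleanly in pairs. A useful by-product is $g_{\alpha,\beta}(\gamma^*,\delta^*,\epsilon^*)=2\Phi_1(\alpha,\beta)$, so the lemma is equivalent to the upper bound $g_{\alpha,\beta}(\gamma,\delta,\epsilon)\le 2\Phi_1(\alpha,\beta)$ on (\ref{eq:con2}).

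For uniqueness and the global-maximum property, the natural approach is to reduce dimension by first solving $\partial_\epsilon g_{\alpha,\beta}=0$ for $\epsilon=\epsilon^{\dagger}(\gamma,\delta)$: this equation is log-linear in the three $\epsilon$-dependent $H_1$ terms and admits a unique admissible solution, and $\partial_\epsilon^{2}g_{\alpha,\beta}<0$ (checked directly from the signs of those terms, using $1-\beta>1-\alpha-\beta$) confirms it is a maximizer in $\epsilon$. Setting $\tilde g(\gamma,\delta):=g_{\alpha,\beta}(\gamma,\delta,\epsilon^{\dagger}(\gamma,\delta))$, the goal reduces to showing that $\tilde g$ has a unique critical point at $(\alpha^{2},\beta^{2})$ that is its global maximum on the projected admissible region. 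The natural route is to verify that the $2\times 2$ Hessian of $\tilde g$ has positive determinant and negative trace throughout, giving strict concavity.

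The main obstacle is precisely this Hessian estimate. After clearing denominators it becomes a polynomial inequality in $(\alpha,\beta,\gamma,\delta)$ on the semialgebraic set defined by (\ref{eq:con2}), $\alpha,\beta\in(0,1/2]$, and $\alpha+\beta+\Delta(\Delta-2)\alpha\beta\le 1$. Observe that the $\TD$ hypothesis rearranges to $(\Delta-1)^{2}\alpha\beta\le(1-\alpha)(1-\beta)$; it is precisely this inequality that should enable the stabilizing $+\Delta H_1$ terms of $\Phi_2$ to dominate the destabilizing $-\Delta H_1$ terms, while the auxiliary bound $\alpha,\beta\le 1/2$ secures the monotonicity of several of the ratios appearing in the estimate. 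I would first attempt a structural argument directly exploiting these two inequalities, but if a closed-form derivation proves intractable, the resulting polynomial inequality lies within the scope of Tarski-Seidenberg quantifier elimination and can be discharged using Mathematica's \texttt{Resolve} command as described in Section~\ref{sec:computer-assist}.
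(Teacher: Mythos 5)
Your high-level plan — eliminate $\epsilon$ by solving $\partial_\epsilon g_{\alpha,\beta}=0$, pass to a two-variable function $\tilde g(\gamma,\delta)$, and then argue via the $2\times 2$ Hessian, discharging polynomial inequalities by quantifier elimination — is essentially the paper's strategy. But there are two genuine gaps.

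First, your boundary exclusion is wrong as stated: you claim $g_{\alpha,\beta}\to-\infty$ on the faces where an argument of some $H_1$ vanishes. In fact $H_1(x,y)$ and $H(x)$ extend continuously (with value $0$) to $x=0$ and $x=y$, so $g_{\alpha,\beta}$ stays bounded on the boundary. The paper's Lemma~\ref{lem:interior} instead shows that one of the first-order partials $\partial\Phi_2/\partial\gamma$, $\partial\Phi_2/\partial\delta$, $\partial\Phi_2/\partial\epsilon$ tends to $+\infty$ or $-\infty$ (with the appropriate sign) as one approaches each boundary facet from the interior; this is what rules out boundary maxima. Your statement can be repaired, but as written it is not a valid argument.

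Second, and more importantly, your proposal is to show that the Hessian of $\tilde g$ is negative definite ``throughout'' the projected admissible region. The paper does not do this and it is doubtful that it is even true. The paper's proof relies crucially on an intermediate step (Section~\ref{sec:regionre} and Lemma~\ref{lem:zeros}): the sign analysis of $W_{11},W_{12},W_{21},W_{22}$ shows that all stationary points of $f$ lie in the union of two smaller regions, \eqref{eq:regionlower} and \eqref{eq:regionupper}, which form a thin neighborhood of the ``diagonal'' around $(\alpha^2,\beta^2)$. The determinant inequality $\det(\mathbf{H})>0$ is then established separately on each of these two subregions (Lemma~\ref{lem:case13} and Lemma~\ref{lem:case2}), and the polynomial inequalities fed to Mathematica (Claims~\ref{clm:s1} and~\ref{clm:qwer}) \emph{explicitly use} the subregion constraints (the ``l25'' and ``u26'' predicates in the code). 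Without those constraints, you would be asking \texttt{Resolve} to verify a statement over a strictly larger set, where it is likely to return ``false'' or to become computationally infeasible. The conclusion (unique stationary point hence unique maximum) is recovered because $(\alpha^2,\beta^2)$ lies on the boundary shared by the two subregions and $f$ is strictly concave on each; you still need the stationary-point localization step to make this argument go through. So the missing idea is precisely Lemma~\ref{lem:zeros} and the decomposition into \eqref{eq:regionlower} and \eqref{eq:regionupper}.

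As a side note, the paper's substitution $\hat\eta$ (symmetric partner to $\hat\epsilon$) and the quantities $R_1,\dots,R_9$ make the Hessian tractable; your plan would also need some version of this bookkeeping before any quantifier elimination can succeed in practice.
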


\begin{lemma}\label{lem:d3}
Fix $\Delta=3$ and $\lambda>\lambda_c(\Tree_\Delta)$. Let $p^+$ and $p^-$ be the corresponding
probabilities. Assume that $1/2 \leq p^+ < 1$. There exists a constant $\chi>0$ such that for
$|p^+-\alpha|<\chi$ and $|p^--\beta|<\chi$,
$g_{\alpha,\beta}(\gamma,\delta,\epsilon):=\Phi_2(\alpha,\beta,\gamma,\delta,\epsilon)$
achieves its unique maximum in the region~(\ref{eq:con2}) at the
point $(\gamma^*,\delta^*,\epsilon^*)$.
\end{lemma}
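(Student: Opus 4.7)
The plan is to reduce the neighborhood statement to a pointwise statement at $(\alpha,\beta)=(p^+,p^-)$ and then prove that pointwise statement by explicit analysis of the three-variable optimization, exploiting that $\Delta=3$ keeps the degrees small. The reduction is standard: since $\Phi_2$ extends continuously to the closed polytope~(\ref{eq:con2}), if I establish (a) that at $(\alpha,\beta)=(p^+,p^-)$ the unique global maximum of $g_{p^+,p^-}$ is attained at $(\gamma^*,\delta^*,\epsilon^*)$, and (b) that the Hessian of $g_{p^+,p^-}$ at that point is negative definite, then an argmax-continuity argument combined with the implicit function theorem yields a neighborhood $|\alpha-p^+|,|\beta-p^-|<\chi$ on which the unique maximum is at $(\alpha^2,\beta^2,\alpha(1-\alpha-\beta))$.

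For the pointwise statement I use the hypothesis $p^+\geq 1/2$ together with $(p^+,p^-)\in\TD$ (Part~3 of Lemma~\ref{lemam}); for $\Delta=3$ the constraint $p^++p^-+3p^+p^-\leq 1$ forces $p^-\leq 1/5$, so $p^-$ is bounded away from $p^+$. Using the fixed-point relations $p^+=\phi(p^-)$, $p^-=\phi(p^+)$ from Part~1 of Lemma~\ref{lemam}, the set of $\lambda>\lambda_c(\Tree_3)$ with $p^+\geq 1/2$ becomes a one-parameter family in $p^+\in[1/2,1)$, which is exactly the regime not covered by Lemma~\ref{lem:main}.

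I would proceed in three steps. \textbf{Step~1 (critical point).} Verify by direct differentiation that $(\gamma^*,\delta^*,\epsilon^*)=(\alpha^2,\beta^2,\alpha(1-\alpha-\beta))$ satisfies $\partial_\gamma g_{\alpha,\beta}=\partial_\delta g_{\alpha,\beta}=\partial_\epsilon g_{\alpha,\beta}=0$ for every $(\alpha,\beta)$; once the logarithms are exponentiated, this reduces to a routine polynomial identity that uses nothing beyond the definitions. \textbf{Step~2 (Hessian).} Compute the Hessian of $g_{p^+,p^-}$ at $(\gamma^*,\delta^*,\epsilon^*)$, whose entries are rational functions of $p^+$ after substituting $p^-=\phi(p^+)$; verify via Mathematica's \verb!Resolve! (as in Section~\ref{sec:computer-assist}) that it is negative definite for every $p^+\in[1/2,1)$, for instance by checking that the three leading principal minors of $-H$ are positive. \textbf{Step~3 (global maximum).} Clear denominators in the three first-order conditions to obtain a polynomial system in $(\gamma,\delta,\epsilon)$ with coefficients rational in $p^+$; apply cylindrical algebraic decomposition to enumerate all real critical points lying inside the polytope~(\ref{eq:con2}) and certify that no solution other than $(\gamma^*,\delta^*,\epsilon^*)$ attains a value of $g$ as large. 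Then on each boundary face of~(\ref{eq:con2}) (where one of the eight listed inequalities becomes an equality), the optimization reduces to at most two variables and can again be handed to Mathematica to confirm strict inferiority to $g_{p^+,p^-}(\gamma^*,\delta^*,\epsilon^*)$.

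The main obstacle will be Step~3: certifying global uniqueness uniformly in the one-parameter family $p^+\in[1/2,1)$. Since the proof of Lemma~\ref{lem:main} presumably leans on the bound $\alpha,\beta\leq 1/2$, which fails here, one must attack the polynomial system afresh rather than quote the earlier argument. I would mitigate the symbolic complexity by driving the analysis by $p^+$ rather than by $\lambda$ (which would introduce radicals through the tree recursion) and by exploiting the symmetry of $\Phi_2$ under the swap $(\alpha,\beta)\leftrightarrow(\beta,\alpha)$, $(\gamma,\delta)\leftrightarrow(\delta,\gamma)$ to halve the case analysis on boundary faces.
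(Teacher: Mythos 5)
Your overall plan (reduce to the curve of fixed-point pairs $(p^+,p^-)$ with $p^+\geq 1/2$, verify the critical point, then certify uniqueness symbolically, and finally extend to a $\chi$-neighborhood by continuity) matches the paper's at the coarsest level, but the crucial middle step is where your proposal has a genuine gap, and it is precisely the step you yourself flag as ``the main obstacle.'' You propose to clear denominators in the three first-order conditions of $g_{\alpha,\beta}$ and hand a three-variable polynomial system with a parameter $p^+$ to CAD, enumerate all real critical points in the polytope, compare objective values, and separately handle each boundary face. That is not a proof; it is a hope that the quantifier elimination terminates and returns a usable answer, and in three variables plus a parameter this is unlikely to be feasible. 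Moreover, the separate boundary-face analysis is unnecessary once one observes (as the paper does in Lemma~\ref{lem:interior}) that a derivative blows up at every face, so the maximum is always interior; you appear not to have noticed this simplification.

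What makes the paper's argument go through, and what your proposal is missing, is the structural reduction that replaces ``enumerate all critical points'' by ``prove strict concavity where critical points can live.'' Concretely: (i) the paper eliminates $\epsilon$ via the closed-form maximizer $\hat\epsilon$, turning $g$ into a two-variable function $f(\gamma,\delta)$; (ii) using the sign characterizations of $W_{11},W_{12},W_{21},W_{22}$ (Lemma~\ref{lem:zeros}), every stationary point of $f$ is forced into one of the two wedge-shaped regions~\eqref{eq:regionlower} or~\eqref{eq:regionupper}; (iii) in each wedge the paper proves $\det(\mathbf{H})>0$ (Lemma~\ref{lem:case13} for~\eqref{eq:regionlower} uniformly, and for $\Delta=3$, $p^+\geq 1/2$ a separate argument via Lemma~\ref{lem:ineqdelta3} and~\ref{lem:cased3} for~\eqref{eq:regionupper}), so $f$ is strictly concave there; since $(\alpha^2,\beta^2)$ sits in both wedges, concavity yields the unique maximum. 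Proving one polynomial determinant inequality is vastly more tractable for \texttt{Resolve} than enumerating all solutions of a parametric polynomial system. Your Step~2 (negative definiteness of the $3\times 3$ Hessian at the single point $(\gamma^*,\delta^*,\epsilon^*)$) is also subsumed by this: a local check cannot by itself rule out distant stationary points, so it contributes nothing to the global uniqueness you need in Step~3. The neighborhood extension in your write-up is fine in spirit and essentially coincides with the paper's; the gap is that you have not supplied the mechanism by which uniqueness at $(p^+,p^-)$ is actually to be established.
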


\begin{proof}[Proof of Lemma \ref{lem:cond}]
We begin by proving the second part of the lemma ($\Delta>3$ case) and  then we prove the first part ($\Delta=3$ case).

Fix any $\Delta\geq 3$ and $\lambda>\lambda_c(\Tree_\Delta)$. Let $p^+,\ p^-$ be
the corresponding marginal probabilities that the root is occupied in the measures $\mu^+,\ \mu^-$. By the third item of Lemma~\ref{lemam}, we have that $(\alpha,\beta)=(p^+,p^-)$
is contained in the interior of $\TD$. Hence, the same holds for every $(\alpha,\beta)$ in a small enough neighborhood of $(p^+,p^-)$. Hence, provided that $p^+\leq 1/2$, Lemma~\ref{lem:main} verifies the condition for any such point.  It is now easy to see that $p^+$ is increasing in $\lambda$, and hence $p^+ \leq 1/2$ iff $\lambda\in (\lambda_c(\Tree_\Delta),\lambda_{1/2}(\Tree_\Delta)]$. This proves the second part of Lemma~\ref{lem:cond}.

The first part of Lemma~\ref{lem:cond} is proved analogously. When $p^+\leq 1/2$, one uses Lemma~\ref{lem:main} as above, while when $p^+\geq 1/2$ the condition reduces to Lemma~\ref{lem:d3}.
\end{proof}

Thus we may focus our attention on proving Lemmata~\ref{lem:main} and~\ref{lem:d3}. While the analysis at some point for Lemma~\ref{lem:d3} requires tighter arguments, the two proofs share many common preprocessing steps. The rest of this section is devoted to these common preprocessing steps and the proofs of Lemmata~\ref{lem:main} and~\ref{lem:d3} are given in Section~\ref{sec:conclu}.

\subsection{The Partial Derivatives}
The derivatives of $\Phi_2$ with respect to $\gamma,\delta,\epsilon$ can easily be computed and are also given in \cite[Proof of Lemma 6.1]{MWW}:
\begin{eqnarray}
\exp\left(\frac{\partial\Phi_2}{\partial\gamma}\right) &=& \frac{(1-2\beta+\delta-\gamma-\epsilon)^{\Delta}(\alpha-\gamma-\epsilon)^{\Delta}(1-2\alpha+\gamma)^{\Delta-1}}{(1-\beta-\gamma-\epsilon)^{\Delta}(\beta-\alpha+\gamma-\delta+\epsilon)^{\Delta}(\alpha-\gamma)^{\Delta-2}\gamma},\label{eq:dphigm}\\
\exp\left(\frac{\partial\Phi_2}{\partial\delta}\right) &=& \frac{(\beta-\alpha-\delta+\gamma+\epsilon)^{\Delta}(1-2\beta+\delta)^{\Delta-1}}{(1-2\beta+\delta-\gamma-\epsilon)^{\Delta}(\beta-\delta)^{\Delta-2}\delta},\label{eq:dphidt}\\
\exp\left(\frac{\partial\Phi_2}{\partial\epsilon}\right) &=& \frac{(1-2\beta+\delta-\gamma-\epsilon)^{\Delta}(\alpha-\gamma-\epsilon)^{\Delta}(1-\alpha-\beta-\epsilon)^{\Delta}}{\epsilon^{\Delta}(\beta-\alpha-\delta+\gamma+\epsilon)^{\Delta}(1-\beta-\gamma-\epsilon)^{\Delta}}.\label{eq:dphiel}
\end{eqnarray}

\subsection{Excluding the Boundary of the Region}\label{sec:boundex}
We argue that for $(\alpha,\beta)\in \TD$,  the maximum of $g_{\alpha,\beta}$ cannot occur on
the boundary of the region defined by~(\ref{eq:con2}).

\begin{lemma}\label{lem:interior}
For every $\Delta \geq 3$ and $(\alpha,\beta) \in \TD$,
$g_{\alpha,\beta}(\gamma,\delta,\epsilon):=\Phi_2(\alpha,\beta,\gamma,\delta,\epsilon)$ attains
its maximum in the interior of~(\ref{eq:con2}).
\end{lemma}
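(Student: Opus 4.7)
The plan is to show that at every boundary point of the region defined by~(\ref{eq:con2}) there is a feasible direction along which $\Phi_2$ strictly increases; together with compactness of the closure and continuous extension of $\Phi_2$ (using $0\ln 0 = 0$ in the entropy terms), this yields that the maximum is attained in the interior. All of the boundary analysis is driven by reading off signs from the multiplicative formulas~(\ref{eq:dphigm})--(\ref{eq:dphiel}) for $\exp(\partial\Phi_2/\partial\gamma)$, $\exp(\partial\Phi_2/\partial\delta)$, and $\exp(\partial\Phi_2/\partial\epsilon)$.

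First I would dispose of the three lower boundaries $\gamma=0$, $\delta=0$, $\epsilon=0$: each variable sits in the denominator of the corresponding $\exp(\partial\Phi_2/\partial\cdot)$, so its vanishing sends the relevant partial derivative to $+\infty$ (the remaining factors are bounded and strictly positive since $(\alpha,\beta)\in\TD$), and increasing the variable is a feasible inward direction that improves $\Phi_2$. For the upper-bound constraints $\alpha-\gamma-\epsilon=0$, $1-\alpha-\beta-\epsilon=0$, $\beta-\alpha-\delta+\gamma+\epsilon=0$, and $1-2\beta+\delta-\gamma-\epsilon=0$, each tight quantity appears as a factor raised to the $\Delta$-th power in the \emph{numerator} of one of the partial-derivative exponentials, pushing that derivative to $-\infty$; moving the implicated variable inward is again feasible and strictly increases $\Phi_2$.

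The delicate boundary is $\beta-\delta=0$, since the factor $(\beta-\delta)^{\Delta-2}$ appears in the \emph{denominator} of $\exp(\partial\Phi_2/\partial\delta)$, which would naively suggest $\partial\Phi_2/\partial\delta\to+\infty$ and leave $\delta=\beta$ as a candidate maximum. To rule this out I would use the crucial observation that $\delta=\beta$ combined with $\alpha-\gamma-\epsilon\geq 0$ and $\beta-\alpha-\delta+\gamma+\epsilon\geq 0$ forces $\gamma+\epsilon=\alpha$. Parametrising the approach by $\beta-\delta=t$ and $\alpha-\gamma-\epsilon=s$ with $t\geq s\geq 0$ (the inequality being forced by constraint~(8)), the numerator factor $(\beta-\alpha-\delta+\gamma+\epsilon)^\Delta$ equals $(t-s)^\Delta$, which also vanishes, yielding the asymptotics $\exp(\partial\Phi_2/\partial\delta)\sim (t-s)^\Delta/t^{\Delta-2}\to 0$ for $\Delta\geq 3$. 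Consequently $\partial\Phi_2/\partial\delta\to-\infty$ on every inward approach, and decreasing $\delta$ while holding $\gamma,\epsilon$ fixed is a feasible direction (it keeps constraint~(4) in~(\ref{eq:con2}) tight while relaxing constraints~(5) and~(8)) that strictly increases $\Phi_2$.

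The main obstacle is precisely this $\beta-\delta=0$ boundary, where the competing exponents $\Delta$ and $\Delta-2$ race against each other in numerator and denominator, and one needs both the forced identity $\gamma+\epsilon=\alpha$ and the hypothesis $\Delta\geq 3$ for the numerator to win. The remaining joint corners (where several upper-bound constraints are simultaneously tight) are handled by the same template: one selects the primary active constraint whose partial-derivative factor in~(\ref{eq:dphigm})--(\ref{eq:dphiel}) supplies the improving direction and verifies that this direction remains feasible with respect to every other active constraint.
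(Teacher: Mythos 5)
Your overall approach is the same as the paper's: read off the sign of each one-sided derivative from the multiplicative formulas~(\ref{eq:dphigm})--(\ref{eq:dphiel}) as each constraint in~(\ref{eq:con2}) becomes tight. Your analysis of the $\beta-\delta\to 0$ boundary is correct and usefully spells out what the paper only asserts: the constraint $\beta-\delta+\epsilon+\gamma-\alpha\geq0$ forces the numerator factor $(\beta-\alpha-\delta+\gamma+\epsilon)^\Delta$ to vanish at least as fast, giving $\exp(\partial\Phi_2/\partial\delta)\leq(\beta-\delta)^2\to0$, hence $\partial\Phi_2/\partial\delta\to-\infty$.

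However, your blanket treatment of the three ``lower'' boundaries is not correct as stated. You claim that as $\gamma\to0$ ``the remaining factors are bounded and strictly positive since $(\alpha,\beta)\in\TD$'' so that $\exp(\partial\Phi_2/\partial\gamma)\to+\infty$. This fails when $\alpha\geq1/2$: the numerator of~(\ref{eq:dphigm}) contains the factor $(1-2\alpha+\gamma)^{\Delta-1}$, which tends to $(1-2\alpha)^{\Delta-1}\leq0$ as $\gamma\to0$, so the argument does not go through. The paper handles exactly this point with an explicit case split: when $\alpha\geq1/2$ the inequality $\gamma\geq0$ is redundant — it is implied by $\delta\geq0$, $1-\alpha-\beta-\epsilon\geq0$, and $\beta-\delta+\epsilon+\gamma-\alpha\geq0$ — so $\{\gamma=0\}$ is not a genuine face of~(\ref{eq:con2}), and only when $\alpha<1/2$ does one invoke $\exp(\partial\Phi_2/\partial\gamma)\to+\infty$. (The paper also normalizes $\alpha\geq\beta$, which together with $\alpha+\beta<1$ guarantees $\beta<1/2$, so the symmetric issue never arises for $\delta\to0$.) You should add this observation: for $\alpha>1/2$ the boundary $\gamma=0$ is unreachable, and for $\alpha=1/2$ it coincides with a corner already handled by the $\delta=0$ face, after which the argument is complete.
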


\begin{proof}
We follow the proof of Lemma 6.1 of~\cite{MWW} (where the same result is proved in the special case $\alpha=\beta=\frac{1}{\Delta}$).

We will prove that $g_{\alpha,\beta}(\gamma,\delta,\epsilon)$ attains
its maximum in the interior of~(\ref{eq:con2}) by showing that at least
one derivative in~(\ref{eq:dphigm})--(\ref{eq:dphiel}) goes to infinity
($+$ or $-$ according to the direction),
as we approach one of the boundaries defined by~(\ref{eq:con2}) from the
interior of~(\ref{eq:con2}).

For $\Delta\geq 3$, note that $(\alpha,\beta) \in \TD$ implies $\alpha+\beta<1$.
Without loss of generality,
we assume that $\alpha\geq \beta$. Hence $\beta < 1/2$.
We have~(\ref{eq:dphidt}) goes
to $+\infty$ as $\delta \to 0$, and~(\ref{eq:dphiel}) goes
to $+\infty$ as $\epsilon \to 0$. We also have~(\ref{eq:dphiel}) goes to
$-\infty$ as $\gamma+\epsilon \to \alpha$, (\ref{eq:dphidt}) goes to
$-\infty$ as $\delta \to \beta$, (\ref{eq:dphiel}) goes to $-\infty$ as
$\gamma+\epsilon-\delta \to 1-2\beta$, (\ref{eq:dphiel}) goes to $-\infty$ as
$\epsilon \to 1-\alpha-\beta$, and~(\ref{eq:dphiel}) goes to $+\infty$ as
$\gamma+\epsilon-\delta \to \alpha-\beta$.

When $\alpha \geq 1/2$, the condition $\gamma = 0$ is not a boundary,
as $\gamma \geq 0$ is implied by the conditions $\delta \geq 0$,
$1-\alpha-\beta-\epsilon\geq 0$, and $\beta-\delta+\epsilon+\gamma-\alpha\geq 0$.
On the other hand, when $\alpha < 1/2$, we have~(\ref{eq:dphigm}) goes to $+\infty$ as $\gamma \to 0$.
\end{proof}

\subsection{Eliminating one variable}\label{sec:variableel}

Fix $\Delta,\alpha,\beta,\gamma,\delta$ and view $\Phi_2$ as a function of $\epsilon$. We maximize with respect to $\epsilon$. In this setting, it was proved ~\cite[Lemma 6.3]{MWW}, that the only maximizer of the function $\Phi_2$ in the interior of~(\ref{eq:con2}) is obtained by solving $\frac{\partial\Phi_2}{\partial\epsilon}=0$ and is given by:
\begin{equation}\label{eq:solel}
\hat{\epsilon}:=\hat{\epsilon}(\alpha,\beta,\gamma,\delta) = \frac{1}{2}(1+\alpha-\beta-2\gamma - \sqrt{D}),
\end{equation}
where
\begin{eqnarray*}
D &=& (1+\alpha-\beta-2\gamma)^2-4(\alpha-\gamma)(1-2\beta-\gamma+\delta)\\
&=& (\alpha+\beta-1)^2+4(\alpha-\gamma)(\beta-\delta).
\end{eqnarray*}

Define
\begin{equation}\label{eq:solet}
\hat{\eta}:=\hat{\eta}(\alpha,\beta,\gamma,\delta) = \frac{1}{2}(1-\alpha+\beta-2\delta - \sqrt{D}),
\end{equation}
and note that
\begin{equation}\label{eq:usefulineq1}
\begin{split}
\alpha-\gamma-\hat{\epsilon}=\beta-\delta-\hat{\eta}=\frac{1}{2}(-(1-\alpha-\beta)+\sqrt{D}),\\ (\alpha-\gamma-\hat{\epsilon})(1-\alpha-\beta-\hat{\epsilon}-\hat{\eta})=\hat{\epsilon}\hat{\eta}.
\end{split}
\end{equation}
The new parameter~$\hat{\eta}$ (not used in~\cite{MWW}) is symmetric with $\hat{\epsilon}$, i.e, the constraints and formulas we
have are invariant under a symmetry that swaps $\alpha,\gamma,\hat{\epsilon}$ with $\beta,\delta,\hat{\eta}$. This will allow for simpler arguments (using the symmetry).

From the previous discussion and equation~\eqref{eq:solel}, we may eliminate variable $\epsilon$ of our consideration. Of course, this introduces some complexity due to the radical $\sqrt{D}$, but still this is manageable.  Let
$$
f(\gamma,\delta):=g_{\alpha,\beta}(\gamma,\delta,\hat{\epsilon})=\Phi_2(\alpha,\beta,\gamma,\delta,\hat{\epsilon}).
$$
To prove that $(\gamma^*,\delta^*,\epsilon^*)$ is the unique global maximum of $g_{\alpha,\beta}$
in the interior of the region defined by
(\ref{eq:con2}), it suffices to prove that $(\gamma^*,\delta^*)$ is the unique global maximum of $f$ for $(\gamma,\delta)$ in
the interior of the following region, which contains the $(\gamma,\delta)$-projection of the region
 defined by (\ref{eq:con2}):
\begin{equation}\label{eq:con3}
0 \leq \gamma \leq \alpha, \quad 0 \leq \delta \leq \beta,
\quad
0 \leq 1-2\beta+\delta-\gamma, \quad 0 \leq 1-2\alpha+\gamma-\delta.
\end{equation}
Each inequality in~\eqref{eq:con3} is implied by the inequalities in (\ref{eq:con2}), the only
non-trivial case being the last inequality which is the sum of $1-\alpha-\beta-\epsilon\geq 0$
and $\beta-\delta+\epsilon+\gamma-\alpha\geq 0$.

The first derivatives of $f$ with respect to $\gamma,\delta$ are
\begin{eqnarray}
\frac{\partial f}{\partial \gamma}(\gamma,\delta) &=& \Delta\ln{W_{11}}+\ln{W_{12}},\label{eq:firstd1}\\
\frac{\partial f}{\partial \delta}(\gamma,\delta) &=& \Delta\ln{W_{21}}+\ln{W_{22}},\label{eq:firstd2}
\end{eqnarray}
where
\begin{equation*}
\begin{split}
W_{11} = \frac{(\alpha-\gamma-\hat{\epsilon})\hat{\epsilon}(1-2\alpha+\gamma)}{\hat{\eta}(\alpha-\gamma)^{2}}=\frac{\hat{\epsilon}(1-2\alpha+\gamma)}{(1-\alpha-\beta-\hat{\epsilon})(\alpha-\gamma)},\quad
W_{12} = \frac{(\alpha-\gamma)^2}{(1-2\alpha+\gamma)\gamma},\\
W_{21} = \frac{(\beta-\delta-\hat{\eta})\hat{\eta}(1-2\beta+\delta)}{\hat{\epsilon}(\beta-\delta)^{2}}=\frac{\hat{\eta}(1-2\beta+\delta)}{(1-\alpha-\beta-\hat{\eta})(\beta-\delta)},\quad
W_{22} = \frac{(\beta-\delta)^2}{(1-2\beta+\delta)\delta}.
\end{split}
\end{equation*}
Note that the rightmost equalities in the definition of $W_{11}$ and $W_{21}$ follow
from~(\ref{eq:solet}) and~(\ref{eq:usefulineq1}).

For every $\Delta \geq 3$, and $(\alpha,\beta) \in \TD$, we have that $(\gamma^*,\delta^*)$ is a stationary point of $f$
(this follows from the fact that for $\gamma=\alpha^2$ and $\delta=\beta^2$, the inequalities on the right-hand sides
in Lemma~\ref{lem:zeros} become equalities, and from \eqref{eq:firstd1}, \eqref{eq:firstd2} we have that
the derivatives of $f$ vanish).

\subsection{Restricting the Region}\label{sec:regionre}

To determine whether~\eqref{eq:firstd1} and~\eqref{eq:firstd2} are zero it will
be useful to understand conditions that make $W_{11},W_{12},W_{21},W_{22}$
greater or equal to $1$. The following lemma gives such conditions. The proof is given in Section~\ref{sec:lemzeros}.
\begin{lemma}\label{lem:zeros}
For every $(\alpha,\beta) \in \mathcal{R}$, and $(\gamma,\delta)$ in the interior of~(\ref{eq:con3}),
\begin{eqnarray*}
W_{11} \geq 1 &\iff& (1-\alpha)^2\delta+\beta^2(2\alpha-1-\gamma) \geq 0,\\
W_{12} \geq 1 &\iff& \gamma \leq \alpha^2,\\
W_{21} \geq 1 &\iff& (1-\beta)^2\gamma+\alpha^2(2\beta-1-\delta) \geq 0,\\
W_{22} \geq 1 &\iff& \delta \leq \beta^2.
\end{eqnarray*}
\end{lemma}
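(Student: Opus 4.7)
The claim splits into four equivalences, but the symmetry $(\alpha,\gamma,\hat\epsilon)\leftrightarrow(\beta,\delta,\hat\eta)$ noted in Section~\ref{sec:variableel} reduces matters to verifying the statements for $W_{11}$ and $W_{12}$; the assertions for $W_{21}$ and $W_{22}$ then follow by swapping the variable pairs. The $W_{12}$ case is a one-line algebraic identity: $(\alpha-\gamma)^2 - \gamma(1-2\alpha+\gamma) = \alpha^2 - \gamma$, so $W_{12}\ge 1 \iff \gamma \le \alpha^2$.

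For $W_{11}$, I would first clear denominators. In the interior of~(\ref{eq:con3}), together with $(\alpha,\beta)\in\mathcal{R}$, one checks that $\alpha-\gamma$, $\hat\epsilon$, and $1-\alpha-\beta-\hat\epsilon$ are all positive (cf.\ the argument before Lemma~\ref{lem:interior}), so $W_{11}\ge 1$ is equivalent to $\hat\epsilon(1-2\alpha+\gamma)\ge (1-\alpha-\beta-\hat\epsilon)(\alpha-\gamma)$, which after a short rearrangement telescopes to
\[\hat\epsilon(1-\alpha)\ge(1-\alpha-\beta)(\alpha-\gamma).\]
Substituting the explicit formula~(\ref{eq:solel}) for $\hat\epsilon$ and isolating the radical yields the equivalent form $K\ge (1-\alpha)\sqrt{D}$, where
\[K := (1-\alpha)(1+\alpha-\beta-2\gamma) - 2(1-\alpha-\beta)(\alpha-\gamma).\]

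The engine of the proof is the polynomial identity
\[K^2 - (1-\alpha)^2 D \;=\; 4(\alpha-\gamma)\bigl[(1-\alpha)^2\delta + \beta^2(2\alpha-1-\gamma)\bigr],\]
which I would establish by carefully expanding both sides; this is the step where I expect the bookkeeping to be most awkward, although nothing deeper than polynomial algebra is required. To convert this identity into the claimed iff I also need $K\ge 0$, which I obtain from the hypotheses $\alpha+\beta\le 1$ and $\gamma\le\alpha$ as follows: writing $K = (1-\alpha)^2 - \beta + 3\alpha\beta - 2\beta\gamma$ and using $-2\beta\gamma\ge -2\beta\alpha$ gives $K \ge (1-\alpha)^2 - \beta(1-\alpha) = (1-\alpha)(1-\alpha-\beta)\ge 0$. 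With $K\ge 0$ in hand, squaring preserves the equivalence, so $K\ge (1-\alpha)\sqrt D$ is equivalent to $K^2\ge (1-\alpha)^2 D$, which by the identity (and $\alpha>\gamma$ in the interior) is exactly $(1-\alpha)^2\delta+\beta^2(2\alpha-1-\gamma)\ge 0$. The equivalence for $W_{21}$ then follows by the stated symmetry, completing the proof.
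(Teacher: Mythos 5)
Your proof is correct, and it is a genuinely cleaner route to the same conclusion than the paper's argument, though both share the same high-level strategy (isolate the radical, verify the signs needed to make squaring an equivalence, square, and factorize). The difference lies in the starting point. The paper forms $W_3 = (\alpha-\gamma-\hat\epsilon)\hat\epsilon(1-2\alpha+\gamma) - \hat\eta(\alpha-\gamma)^2$ using the \emph{first} expression for $W_{11}$, so that after substituting $\hat\epsilon,\hat\eta$ one obtains $W_3 = W_{31}+W_{32}\sqrt D$ with both $W_{31}$ and $W_{32}$ fairly bulky polynomials (indeed $W_{31}$ depends on $\delta$). They then show $W_{31}<0$, $W_{32}>0$ by hand and pass to $(W_{31}/W_{32})^2 - D$, whose factorization delivers the criterion. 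You instead use the \emph{second} form $W_{11}=\frac{\hat\epsilon(1-2\alpha+\gamma)}{(1-\alpha-\beta-\hat\epsilon)(\alpha-\gamma)}$, and the numerator-minus-denominator telescopes to $\hat\epsilon(1-\alpha)-(1-\alpha-\beta)(\alpha-\gamma)$, so that after substituting $\hat\epsilon$ the coefficient of $\sqrt D$ is just the scalar $1-\alpha$ (a constant of known sign) and $K$ is a compact quadratic in $\alpha,\beta,\gamma$. That collapses both sign checks to the single observation $K\ge(1-\alpha)(1-\alpha-\beta)\ge 0$, and the squared form leads almost immediately to the identity $K^2-(1-\alpha)^2D = 4(\alpha-\gamma)\bigl[(1-\alpha)^2\delta+\beta^2(2\alpha-1-\gamma)\bigr]$, which I verified and which is correct. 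Two small notes: your appeal to ``the argument before Lemma~\ref{lem:interior}'' isn't quite the right citation for the positivity of $1-\alpha-\beta-\hat\epsilon$; that fact needs its own (short) verification, e.g.\ by noting it is equivalent to $1-2\alpha+\gamma-\delta>0$, which is one of the constraints of~(\ref{eq:con3}). And you left the key polynomial identity unverified (``I would establish by carefully expanding''); since a factored form such as $K^2-(1-\alpha)^2D = 4x\bigl(-uvw+w^2x+u^2y\bigr)$ with $u=1-\alpha$, $v=1+\alpha-\beta-2\gamma$, $w=1-\alpha-\beta$, $x=\alpha-\gamma$, $y=1-2\beta-\gamma+\delta$ drops out cleanly, it is worth exhibiting that intermediate step explicitly rather than treating the expansion as pure bookkeeping.
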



By considering the sign of \eqref{eq:firstd1} and \eqref{eq:firstd2} and Lemma~\ref{lem:zeros},
we have that the stationary points of $f$ (and hence of $g_{\alpha,\beta}$)
 can only be in
\begin{equation}\label{eq:regionlower}
0< \gamma \leq \alpha^2,\  0< \delta \leq \beta^2, \
(1-\alpha)^2\delta+\beta^2(2\alpha-1-\gamma) \leq 0,\    (1-\beta)^2\gamma+\alpha^2(2\beta-1-\delta) \leq 0,
\end{equation}
or
\begin{equation}\label{eq:regionupper}
\alpha^2 \leq \gamma < \alpha,\ \beta^2 \leq \delta < \beta, \
(1-\alpha)^2\delta+\beta^2(2\alpha-1-\gamma) \geq 0,\   (1-\beta)^2\gamma+\alpha^2(2\beta-1-\delta) \geq 0.
\end{equation}
Note that for $\gamma=\alpha^2, \delta=\beta^2$ one has $W_{11}=W_{12}=W_{21}=W_{22}=1$ (Lemma~\ref{lem:zeros} holds with equalities as well instead of inequalities), so that $(\alpha^2,\beta^2)$ is always a stationary point for $f(\gamma,\delta)$.

\subsection{The Hessian}\label{sec:hessiancons}

To prove that $f$ has a unique maximum, we are going to argue that $f$ is  strictly
 concave in each of the regions defined by~\eqref{eq:regionlower} and~\eqref{eq:regionupper}.
 It will thus be crucial to study the Hessian of $f$.

 Let $\mathbf{H}$ denote the Hessian of $f$, i.e.,
$$
\mathbf{H}=
\begin{pmatrix}
\frac{\partial f}{\partial^2 \gamma}(\gamma,\delta) & \frac{\partial f}{\partial\gamma \partial\delta}(\gamma,\delta)\\
\frac{\partial f}{\partial\delta \partial\gamma}(\gamma,\delta) &
\frac{\partial f}{\partial^2 \delta}(\gamma,\delta)
\end{pmatrix}.
$$

Our goal is to express $\mathbf{H}$ in a helpful explicit form. In this vein, it will be convenient to define the following quantities.
\begin{equation*}
\begin{array}{ccc}
R_1 = \frac{1-\alpha-\beta}{1-\alpha-\beta-\hat{\epsilon}-\hat{\eta}}, & R_2 = \frac{\sqrt{D}}{1-2\alpha+\gamma}, & R_3 = \frac{2(\alpha-\gamma-\hat{\epsilon})}{\alpha-\gamma},\\
R_4 = \frac{\sqrt{D}}{\gamma}, & R_5 = \frac{2(1-\beta-\gamma-\hat{\epsilon})}{\alpha-\gamma}, & R_6 = \frac{\sqrt{D}}{1-2\beta+\delta},\\
R_7 = \frac{2(\alpha-\gamma-\hat{\epsilon})}{\beta-\delta}, & R_8 = \frac{\sqrt{D}}{\delta}, & R_9 = \frac{2(1-\beta-\gamma-\hat{\epsilon})}{\beta-\delta}.
\end{array}
\end{equation*}
$\mathbf{H}$ can now be written in a relatively nice form with respect to the $R_i$. Namely,
\begin{eqnarray}
\frac{\partial f}{\partial^2 \gamma}(\gamma,\delta)
&=& \frac{1}{\sqrt{D}}\Big[(-R_1+R_2+R_3)\Delta-R_2-R_3-R_4-R_5\Big],\label{eq:secondd1}\\
\frac{\partial f}{\partial^2 \delta}(\gamma,\delta)
&=& \frac{1}{\sqrt{D}}\Big[(-R_1+R_6+R_7)\Delta-R_6-R_7-R_8-R_9\Big],\label{eq:secondd2}\\
\frac{\partial f}{\partial\gamma \partial\delta}(\gamma,\delta) &=& \frac{\partial f}{\partial\delta \partial\gamma}(\gamma,\delta) = \frac{\Delta R_1}{\sqrt{D}}.\label{eq:secondd3}
\end{eqnarray}

Inspecting the $R_i$ we obtain the following observation.
\begin{observation}\label{cpositive}
$R_1,\dots,R_9$ are positive when $(\alpha,\beta) \in \TD$ and $(\gamma,\delta)$ in the interior of~(\ref{eq:con3}).
\end{observation}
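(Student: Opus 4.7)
The plan is a direct case-by-case verification, organized so that each positivity statement follows from one of three inputs: the strict inequalities defining the interior of (\ref{eq:con3}), the two algebraic expressions for $D$, or the symmetric identity (\ref{eq:usefulineq1}). Since $(\alpha,\beta)\in\TD$ forces $\alpha+\beta<1$, I would first record the elementary positivities $\gamma,\delta,\alpha-\gamma,\beta-\delta>0$ (immediate from (\ref{eq:con3})) and $1-2\alpha+\gamma>\delta>0$, $1-2\beta+\delta>\gamma>0$ (obtained by separating the last two inequalities of (\ref{eq:con3}) using $\delta,\gamma>0$). Using $D=(1-\alpha-\beta)^2+4(\alpha-\gamma)(\beta-\delta)$, I would then conclude $\sqrt{D}>1-\alpha-\beta>0$. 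These facts immediately deliver $R_2,R_4,R_6,R_8>0$.

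Next I would handle the quantities involving $\hat{\epsilon}$. Using the first equality of (\ref{eq:usefulineq1}), $\alpha-\gamma-\hat{\epsilon}=\tfrac12(\sqrt{D}-(1-\alpha-\beta))$, which is positive by the bound above; this yields $R_3,R_7>0$. A short computation gives $1-\beta-\gamma-\hat{\epsilon}=\tfrac12(1-\alpha-\beta+\sqrt{D})>0$, which yields $R_5,R_9>0$.

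For $R_1$, the numerator $1-\alpha-\beta$ is already positive, so the task reduces to showing $1-\alpha-\beta-\hat{\epsilon}-\hat{\eta}>0$. Here I would use the second part of (\ref{eq:usefulineq1}), namely $(\alpha-\gamma-\hat{\epsilon})(1-\alpha-\beta-\hat{\epsilon}-\hat{\eta})=\hat{\epsilon}\hat{\eta}$, which (combined with $\alpha-\gamma-\hat{\epsilon}>0$ already in hand) reduces the problem to $\hat{\epsilon},\hat{\eta}>0$. To show $\hat{\epsilon}>0$ I would switch to the first expression for $D$, namely $D=(1+\alpha-\beta-2\gamma)^2-4(\alpha-\gamma)(1-2\beta-\gamma+\delta)$: the subtracted term is positive, so $\sqrt{D}<1+\alpha-\beta-2\gamma$ provided the latter is positive, which it is since $1+\alpha-\beta-2\gamma=(1-\alpha-\beta)+2(\alpha-\gamma)>0$; this gives $\hat{\epsilon}>0$ directly from its definition (\ref{eq:solel}). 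The same argument with $\alpha$ and $\beta$, and $\gamma$ and $\delta$, interchanged gives $\hat{\eta}>0$ via (\ref{eq:solet}).

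I do not anticipate a real obstacle; the computation is purely a matter of bookkeeping the signs and invoking the correct form of $D$ at the correct step. The only mildly subtle point is remembering to use both algebraic expressions for $D$: the form with $(1-\alpha-\beta)^2$ gives the lower bound on $\sqrt{D}$ needed for $R_3,R_5,R_7,R_9$, while the form with $(1+\alpha-\beta-2\gamma)^2$ (and its symmetric counterpart) gives the upper bound on $\sqrt{D}$ needed to conclude $\hat{\epsilon},\hat{\eta}>0$ and hence $R_1>0$.
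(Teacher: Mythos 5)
Your proof is correct. The paper itself gives no proof of this observation---it is presented with the sentence ``Inspecting the $R_i$ we obtain the following observation,'' so there is nothing to compare against line by line. Your write-up is the honest verification that ``inspection'' elides.

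Your decomposition is the natural one. The denominators $\gamma,\delta,1-2\alpha+\gamma,1-2\beta+\delta,\alpha-\gamma,\beta-\delta$ are all handled by unpacking the interior of~\eqref{eq:con3} (in particular, $1-2\alpha+\gamma>\delta>0$ and $1-2\beta+\delta>\gamma>0$, as you note). The key quantities $\alpha-\gamma-\hat{\epsilon}=\tfrac12(\sqrt{D}-(1-\alpha-\beta))$ and $1-\beta-\gamma-\hat{\epsilon}=\tfrac12(\sqrt{D}+1-\alpha-\beta)$ are positive once one records $\sqrt{D}>1-\alpha-\beta>0$ from the expanded form $D=(1-\alpha-\beta)^2+4(\alpha-\gamma)(\beta-\delta)$. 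Your handling of $R_1$ is the one genuinely non-immediate step, and your route through the identity $(\alpha-\gamma-\hat{\epsilon})(1-\alpha-\beta-\hat{\epsilon}-\hat{\eta})=\hat{\epsilon}\hat{\eta}$ from~\eqref{eq:usefulineq1} together with $\hat{\epsilon},\hat{\eta}>0$ is clean. For $\hat{\epsilon}>0$, the factored form $D=(1+\alpha-\beta-2\gamma)^2-4(\alpha-\gamma)(1-2\beta+\delta-\gamma)$ indeed gives $\sqrt{D}<1+\alpha-\beta-2\gamma$ since the subtracted term is a product of two quantities positive on the interior of~\eqref{eq:con3}, and the symmetric factorization $D=(1-\alpha+\beta-2\delta)^2-4(\beta-\delta)(1-2\alpha+\gamma-\delta)$ gives $\hat{\eta}>0$; it is worth stating that second factorization explicitly, since you invoke ``its symmetric counterpart'' but the paper only writes the $\gamma$-form. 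As a sanity check, one can also verify $1-\alpha-\beta-\hat{\epsilon}-\hat{\eta}>0$ directly: it equals $\sqrt{D}-(\alpha-\gamma)-(\beta-\delta)$, and $D-\left((\alpha-\gamma)+(\beta-\delta)\right)^2=(1-\alpha-\beta)^2-\left((\alpha-\gamma)-(\beta-\delta)\right)^2>0$ using the last two interior inequalities of~\eqref{eq:con3}. Either route works; yours reuses the identity the paper has already set up, which is tidier.
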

Observation~\ref{cpositive} and equation \eqref{eq:secondd3} immediately yield:
\begin{observation}\label{ccc2}
For every $(\alpha,\beta) \in \TD$, and $(\gamma,\delta)$ in the interior of~(\ref{eq:con3}),
$$
\frac{\partial f}{\partial\gamma \partial\delta}(\gamma,\delta)=\frac{\partial f}{\partial\delta \partial\gamma}(\gamma,\delta)>0.
$$
\end{observation}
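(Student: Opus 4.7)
The plan is to read Observation \ref{ccc2} off directly from the explicit formula \eqref{eq:secondd3} together with Observation \ref{cpositive}. Equation \eqref{eq:secondd3} already packages both mixed partials as a single expression $\frac{\Delta R_1}{\sqrt{D}}$, so the equality $\frac{\partial f}{\partial\gamma\partial\delta} = \frac{\partial f}{\partial\delta\partial\gamma}$ is free (and in any case is Schwarz's theorem, since $f$ is $C^{\infty}$ on the open region where $\hat{\epsilon}$, as defined by \eqref{eq:solel}, is a smooth function of $\gamma,\delta$). It therefore remains only to check that $\frac{\Delta R_1}{\sqrt{D}}$ is strictly positive, which reduces to three tiny verifications: (i) $\Delta \geq 3 > 0$; (ii) $\sqrt{D} > 0$ on the interior of \eqref{eq:con3}; (iii) $R_1 > 0$.

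Item (ii) is immediate from the second form of the discriminant recorded after \eqref{eq:solel}, namely $D = (\alpha+\beta-1)^2 + 4(\alpha-\gamma)(\beta-\delta)$: the first summand is non-negative, and on the interior of \eqref{eq:con3} we have $\gamma < \alpha$ and $\delta < \beta$, which makes the second summand strictly positive. Item (iii) is already part of Observation \ref{cpositive}, so it could simply be quoted; for the reader's convenience the plan is also to spell out the one-line justification. The numerator $1-\alpha-\beta$ of $R_1$ is positive because $(\alpha,\beta)\in\TD$ forces $\alpha+\beta < 1$. For the denominator $1-\alpha-\beta-\hat{\epsilon}-\hat{\eta}$, the plan is to invoke the second identity in \eqref{eq:usefulineq1}, $(\alpha-\gamma-\hat{\epsilon})(1-\alpha-\beta-\hat{\epsilon}-\hat{\eta}) = \hat{\epsilon}\hat{\eta}$: on the interior of \eqref{eq:con3}, $\hat{\epsilon}$ and $\hat{\eta}$ are strictly positive (by \eqref{eq:solel}, \eqref{eq:solet}, and the interior conditions), and $\alpha-\gamma-\hat{\epsilon}$ is strictly positive as well, so the right-hand side is strictly positive and therefore $1-\alpha-\beta-\hat{\epsilon}-\hat{\eta}$ shares the sign of $\alpha-\gamma-\hat{\epsilon}$, i.e.\ it is strictly positive. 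Combining the two signs gives $R_1 > 0$.

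There is no genuine obstacle here: the statement is really a direct corollary of \eqref{eq:secondd3} and Observation \ref{cpositive}, and the only reason to write more than a single line is to make the (iii)-verification self-contained by exhibiting the identity from \eqref{eq:usefulineq1} that controls the sign of the denominator of $R_1$.
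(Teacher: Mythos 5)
Correct, and essentially the paper's own argument: the paper derives Observation~\ref{ccc2} in one line from the closed form $\frac{\partial^2 f}{\partial\gamma\,\partial\delta}=\Delta R_1/\sqrt{D}$ in \eqref{eq:secondd3} together with the positivity of $R_1$ from Observation~\ref{cpositive}. You do the same, merely unpacking the sign of $R_1$ via the identity in \eqref{eq:usefulineq1} (which the paper leaves to the reader inside Observation~\ref{cpositive}) and noting $\sqrt{D}>0$ from the second form of $D$.
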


In Section~\ref{sec:proof-secondd} we prove the following technical inequality on the $R_i$.
\begin{lemma}\label{lem:secondd}
For every $(\alpha,\beta) \in \mathcal{R}$, and $(\gamma,\delta)$ in the interior of~(\ref{eq:con3}),
$$
R_1>R_2+R_3, \quad\mbox{and}\quad R_1>R_6+R_7.
$$
\end{lemma}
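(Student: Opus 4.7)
The plan is to exploit the built-in symmetry of the statement and then reduce the inequality to a purely polynomial one. First, under the swap $(\alpha,\gamma,\hat{\epsilon})\leftrightarrow(\beta,\delta,\hat{\eta})$ the discriminant $D$ is invariant (it is symmetric in $a:=\alpha-\gamma$ and $b:=\beta-\delta$), and so is $R_1=(1-\alpha-\beta)/(\sqrt{D}-a-b)$. Using the identity $\alpha-\gamma-\hat{\epsilon}=\beta-\delta-\hat{\eta}$ from~(\ref{eq:usefulineq1}), one verifies that this swap interchanges $R_2\leftrightarrow R_6$ and $R_3\leftrightarrow R_7$. Hence it suffices to prove the first inequality $R_1>R_2+R_3$; the second then follows by symmetry.

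For the first inequality, set $s:=1-\alpha-\beta$ and $t:=\sqrt{D}$. Using the identities $t^2-(a+b)^2=(s+a-b)(s-a+b)=PQ$ (both factors positive on the interior of~(\ref{eq:con3})) and $t^2-s^2=4ab$, rewrite
\[
R_1=\frac{s(t+a+b)}{PQ},\qquad R_2=\frac{t}{Q+\delta},\qquad R_3=\frac{4b}{t+s}.
\]
Combining the fractions for $R_1$ and $R_3$ over a common denominator and using $s^2+4ab=t^2$ together with $t^2-4bt+4b^2=(t-2b)^2$ yields the clean identity
\[
R_1-R_3=\frac{(t-2b)^2+st}{(t-a-b)(t+s)},
\]
from which $R_1>R_3$ is immediate since $s,t>0$. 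The target inequality therefore reduces to
\[
\bigl[(t-2b)^2+st\bigr](Q+\delta)\;>\;t(t-a-b)(t+s). \qquad (\star)
\]

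The next step is to expand~($\star$), use $t^2=s^2+4ab$ to reduce every even power of $t$, and rearrange into the form $A+Bt>0$ where $A,B$ are explicit polynomials in $\alpha,\beta,\gamma,\delta$. Isolating the radical $t$ and squaring (while tracking the sign of $B$) then reduces~($\star$) to a pure polynomial inequality in $\alpha,\beta,\gamma,\delta$ on the semi-algebraic region defined by $(\alpha,\beta)\in\mathcal{R}$ together with the constraints~(\ref{eq:con3}). Such an inequality can be checked either by an explicit sum-of-squares representation modulo the constraint polynomials, or by Tarski/CAD via Mathematica's \texttt{Resolve} command, consistent with Section~\ref{sec:computer-assist}.

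The principal difficulty is that~($\star$) is tight. Indeed, a direct computation at the eventual stationary point $(\gamma,\delta)=(\alpha^2,\beta^2)$ gives $R_1-R_2-R_3=\beta^2/[s(1-\alpha)^2]$, which collapses as $(\alpha,\beta)\to 0$. Consequently no crude bound (e.g.\ $t\ge s$ or $t\ge a+b$) is enough to conclude; the $R_1-R_3$ simplification above is what preserves the correct leading behavior and makes the final polynomial inequality tractable.
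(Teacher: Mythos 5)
Your preliminary reductions are all correct and the algebra checks out: the swap symmetry $(\alpha,\gamma,\hat\epsilon)\leftrightarrow(\beta,\delta,\hat\eta)$ indeed fixes $R_1$ and interchanges $R_2\leftrightarrow R_6$, $R_3\leftrightarrow R_7$ (so only the first inequality needs proving, which the paper leaves implicit); your rewrites $R_1=s(t+a+b)/(PQ)$, $R_2=t/(Q+\delta)$, $R_3=4b/(t+s)$ are correct; and the identity
\[
R_1-R_3=\frac{(t-2b)^2+st}{(t-a-b)(t+s)}
\]
is a genuinely cleaner intermediate step than what the paper does. The paper simply multiplies $-R_1+R_2+R_3$ through by the product $\hat\epsilon\hat\eta(1-2\alpha+\gamma)(\alpha-\gamma)(\alpha-\gamma-\hat\epsilon)$, obtains $P_1=P_{11}+P_{12}\sqrt{D}$ with no structural simplification, and then hands the sign conditions $P_{11}>0$, $P_{12}<0$, and the squared inequality $P_{21}>0$ to Mathematica's \texttt{Resolve} (Claims~\ref{cccf} and~\ref{ccch}, with code in the appendix). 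Your $(t-2b)^2+st$ term absorbs the cancellation that makes the bound tight near $(\alpha,\beta)\to 0$, which you correctly diagnose as the source of difficulty, and your sanity check $R_1-R_2-R_3=\beta^2/[s(1-\alpha)^2]$ at $(\gamma,\delta)=(\alpha^2,\beta^2)$ is right.

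However, the proposal stops exactly where the technical content lies. You reduce to $(\star)$ and then \emph{describe} how one would finish --- expand, reduce to $A+Bt>0$, track the sign of $B$, square, hand the result to CAD --- but you do not exhibit the polynomials $A$ and $B$, do not establish the sign of $B$ (which is not a priori obvious and is itself one of the Mathematica claims in the paper's route), do not verify that squaring preserves the inequality in the direction you need, and do not actually run or cite a CAD check. The paper does all of this: it gives $P_{11}$, $P_{12}$, $P_{21}$ explicitly, observes the crucial factorization of $(P_{11}/P_{12})^2-D$ as $-\,(\beta-\delta)^2(\alpha-\gamma)^3 P_{21}/P_{12}^2$, and supplies the \texttt{Resolve} calls. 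So while your decomposition is a nice alternative entry point and the squaring step would presumably go through, as written this is a plan rather than a proof: the single nontrivial claim --- the polynomial positivity on the semi-algebraic region --- remains unverified.
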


Applying Lemma~\ref{lem:secondd} and Observation~\ref{cpositive}, \eqref{eq:secondd1} and \eqref{eq:secondd2} give the following straightforward corollary.
\begin{corollary}\label{ccc1}
For every $(\alpha,\beta) \in \mathcal{R}$, and $(\gamma,\delta)$ in the interior of~(\ref{eq:con3}),
$$
\frac{\partial f}{\partial^2 \gamma}(\gamma,\delta)<0, \quad \frac{\partial f}{\partial^2 \delta}(\gamma,\delta)<0.
$$
\end{corollary}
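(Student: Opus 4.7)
The plan is to combine the explicit formulas~\eqref{eq:secondd1} and~\eqref{eq:secondd2} with the two ingredients just established: Observation~\ref{cpositive}, which guarantees that each of $R_1,\dots,R_9$ is strictly positive, and Lemma~\ref{lem:secondd}, which provides the strict inequalities $R_1>R_2+R_3$ and $R_1>R_6+R_7$. Since $\sqrt{D}>0$ on the interior of~\eqref{eq:con3}, the signs of the two diagonal Hessian entries are governed entirely by the bracketed expressions appearing in~\eqref{eq:secondd1} and~\eqref{eq:secondd2}.

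For $\partial^2 f/\partial\gamma^2$ I would read off the bracket as $(-R_1+R_2+R_3)\Delta - R_2-R_3-R_4-R_5$. By Lemma~\ref{lem:secondd} the coefficient $-R_1+R_2+R_3$ of $\Delta$ is strictly negative, and since $\Delta\geq 3>0$ this first summand is strictly negative; meanwhile $-R_2-R_3-R_4-R_5$ is strictly negative by Observation~\ref{cpositive}. Hence the bracket is a sum of two strictly negative quantities, which after dividing by $\sqrt{D}>0$ yields $\partial^2 f/\partial\gamma^2<0$.

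The symmetric step handles $\partial^2 f/\partial\delta^2$: the bracket in~\eqref{eq:secondd2} splits as $(-R_1+R_6+R_7)\Delta - R_6-R_7-R_8-R_9$, and both pieces are again strictly negative---the first by the second inequality $R_1>R_6+R_7$ of Lemma~\ref{lem:secondd} combined with $\Delta>0$, the second by positivity of $R_6,\dots,R_9$ from Observation~\ref{cpositive}. No obstacle is anticipated; indeed the paper flags the statement as ``straightforward'' precisely because all relevant signs have been arranged to point the same way, so the argument is just sign-tracking through two sums of strictly negative terms.
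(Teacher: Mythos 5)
Your argument is exactly the paper's intended one: read the bracketed expressions in \eqref{eq:secondd1} and \eqref{eq:secondd2}, apply Lemma~\ref{lem:secondd} to make the $\Delta$-coefficient strictly negative, apply Observation~\ref{cpositive} to make the remainder strictly negative, and divide by $\sqrt{D}>0$. The paper states this corollary as an immediate consequence without spelling out the sign-tracking, and your write-up fills in precisely those details.
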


Corollary \ref{ccc1} implies that the sum of the eigenvalues of $\mathbf{M}$ is negative. Hence, $\mathbf{H}$ is negative definite iff the determinant of $\mathbf{H}$ is negative. We have the following expression for $\det(\mathbf{H})$.
\begin{eqnarray}
\det(\mathbf{H})
&=& \frac{\partial f}{\partial^2 \gamma}(\gamma,\delta) \cdot \frac{\partial f}{\partial^2 \delta}(\gamma,\delta) - \frac{\partial f}{\partial\gamma \partial\delta}(\gamma,\delta) \cdot \frac{\partial f}{\partial\delta \partial\gamma}(\gamma,\delta)\nonumber\\
&=& \frac{1}{D}\bigg\{\ (\Delta-1)^2\Big[(-R_1+R_2+R_3)(-R_1+R_6+R_7)-R_1^2\Big]\bigg.\nonumber\\
&&\ \ \ \ \ + (\Delta-1)\Big[ (-R_1+R_2+R_3)(-R_1-R_8-R_9)+ \Big. \nonumber \\
& & \hskip 3cm \Big. +  (-R_1+R_6+R_7)(-R_1-R_4-R_5)-2R_1^2\Big] \nonumber\\
&& \ \ \ \ \ + \bigg.\Big[(-R_1-R_8-R_9)(-R_1-R_4-R_5)-R_1^2\Big]\bigg\}.\label{eq:det}
\end{eqnarray}

\section{Concluding the Proofs of Lemmata~\ref{lem:main} and~\ref{lem:d3}}\label{sec:conclu}
In this section, we give the proofs of Lemmata~\ref{lem:main} and~\ref{lem:d3}. We first recap what we have accomplished in Section~\ref{sec:analysis} for every $(\alpha,\beta)\in\TD$.
\begin{enumerate}
\item The function $g_{\alpha,\beta}(\gamma,\delta,\epsilon)$ attains its maximum in the interior of the region \eqref{eq:con2}. See Section~\ref{sec:boundex}.
\item To study the (local) maxima of $g$ in the interior of the region \eqref{eq:con2}, it suffices to study the maxima of the function $f(\gamma,\delta)=g_{\alpha,\beta}(\gamma,\delta,\hat{\epsilon})$ in the interior of the region \eqref{eq:con3}. More explicitly, if $f(\gamma,\delta)$ has a unique maximum in the interior of the region \eqref{eq:con3} at $(\gamma^*,\delta^*)=(\alpha^2,\beta^2)$, then $g_{\alpha,\beta}$ has a unique  maximum in the interior of the region \eqref{eq:con2} at $(\gamma^*,\delta^*,\epsilon^*)=\big(\alpha^2,\beta^2,\alpha(1-\alpha-\beta)\big)$. The function $f$ is differentiable in the interior of the region \eqref{eq:con3}. See Section~\ref{sec:variableel}.
\item The point $(\gamma^*,\delta^*)=(\alpha^2,\beta^2)$ is always stationary for $f$. Every stationary point of $f$ lies in one of the two regions defined by (i) \eqref{eq:con3} and \eqref{eq:regionlower}, (ii) \eqref{eq:con3} and \eqref{eq:regionupper}. See Section~\ref{sec:regionre}.
\item The function $f$ is strictly concave iff $\det(\mathbf{H})>0$. See Section~\ref{sec:hessiancons}.
\end{enumerate}

By the above discussion, if for some $(\alpha,\beta)\in\TD$ it holds that $f$ is strictly concave in each of the regions (i) \eqref{eq:con3} and \eqref{eq:regionlower}, (ii) \eqref{eq:con3} and \eqref{eq:regionupper}, then $g_{\alpha,\beta}$ has a unique maximum at $(\gamma^*,\delta^*,\epsilon^*)=\big(\alpha^2,\beta^2,\alpha(1-\alpha-\beta)\big)$. Thus, it suffices to check that $\det(\mathbf{H})>0$ in each of these regions. This is essentially the way we derive Lemmata~\ref{lem:main} and~\ref{lem:d3}.

Hence, the main challenge is proving that $\det(\mathbf{H})>0$. This can be done slightly more easily in the region \eqref{eq:con3} and \eqref{eq:regionlower}. Indeed, in Section~\ref{sec:proof-case13} we prove the following lemma.

\begin{lemma}\label{lem:case13}
$\det(\mathbf{H})>0$ for every $\Delta \geq 3$, $(\alpha,\beta) \in \TD$, $(\gamma,\delta)$ in the interior of~(\ref{eq:con3}) and $(\gamma,\delta)$ in~(\ref{eq:regionlower}).
\end{lemma}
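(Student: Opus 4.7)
The plan is to rewrite the expression \eqref{eq:det} in a factored form that isolates the role of $\Delta$, and then analyze the resulting expression by cases.

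Setting $t := \Delta - 1$, a direct regrouping of \eqref{eq:det} gives
\[
D\,\det(\mathbf{H}) = (t a_1 + b_1)(t a_2 + b_2) - (t+1)^2 R_1^2,
\]
where $a_1 := R_1 - R_2 - R_3$, $a_2 := R_1 - R_6 - R_7$, $b_1 := R_1 + R_4 + R_5$, and $b_2 := R_1 + R_8 + R_9$. By Lemma~\ref{lem:secondd} we have $a_1, a_2 > 0$, and by Observation~\ref{cpositive} we have $b_1, b_2 > R_1 > 0$ (strictly, since $R_4, R_5, R_8, R_9 > 0$). So the task reduces to proving $(t a_1 + b_1)(t a_2 + b_2) > (t+1)^2 R_1^2$ for $t \geq 2$ and all $(\gamma,\delta)$ in the intersection of the interior of \eqref{eq:con3} with \eqref{eq:regionlower}.

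My first step would be to apply the Cauchy--Schwarz inequality
\[
(t a_1 + b_1)(t a_2 + b_2) \geq \bigl(t\sqrt{a_1 a_2} + \sqrt{b_1 b_2}\bigr)^2.
\]
Since $\sqrt{b_1 b_2} > R_1$ strictly, whenever $\sqrt{a_1 a_2} \geq R_1$ this already exceeds $(t+1)^2 R_1^2$, and we are done.

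The main obstacle is the remaining case $a_1 a_2 < R_1^2$, which does occur in the region of interest (for instance at $\alpha = \beta = 1/3$ and $\gamma = \delta = 1/9$, where one computes $a_1 a_2 = 9/16 < 9 = R_1^2$). In this case the quadratic $P(t) := D\,\det(\mathbf{H})$ in $t$ has negative leading coefficient $a_1 a_2 - R_1^2$, so positivity must be extracted from the positive constant term $b_1 b_2 - R_1^2 > 0$ together with an upper bound on $t$. Such an upper bound is supplied precisely by the hypothesis $(\alpha,\beta) \in \TD$, which rearranges to
\[
t^2 = (\Delta-1)^2 \leq 1 + \frac{1-\alpha-\beta}{\alpha\beta} = \frac{(1-\alpha)(1-\beta)}{\alpha\beta}.
\]
To finish, I would substitute the explicit forms of $R_1,\dots,R_9$, eliminate the radical via $\sqrt{D} = 1+\alpha-\beta-2\gamma-2\hat{\epsilon}$, and reduce to an inequality between rational functions on the semialgebraic set defined by \eqref{eq:con3}, \eqref{eq:regionlower}, the $\TD$ bound above, and the stationarity constraint $\hat{\epsilon}\hat{\eta} = (\alpha-\gamma-\hat{\epsilon})(1-\alpha-\beta-\hat{\epsilon}-\hat{\eta})$ determining $\hat{\eta}$. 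By Tarski's theorem this statement is decidable, and I would verify it using Mathematica's \texttt{Resolve} command as described in Section~\ref{sec:computer-assist}. The main difficulty is algebraic rather than conceptual: organizing the explicit expressions for $a_1, a_2, b_1, b_2, R_1$ so that the quadratic-in-$t$ structure of $P(t)$ trades cleanly against the quadratic-in-$\Delta$ bound coming from $\TD$.
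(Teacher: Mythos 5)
Your factored form $D\,\det(\mathbf{H}) = (t a_1 + b_1)(t a_2 + b_2) - (t+1)^2 R_1^2$ with $t=\Delta-1$ is correct, and the high-level strategy --- treat $\det(\mathbf{H})$ as a quadratic in $\Delta-1$, use the $\TD$ bound $(\Delta-1)^2 \leq (1-\alpha)(1-\beta)/(\alpha\beta)$, and close with \texttt{Resolve} --- is the same as the paper's. The Cauchy--Schwarz observation is a nice sanity check, but as you acknowledge it only disposes of the case $a_1 a_2 \geq R_1^2$, and you correctly identify (with a valid example at $\alpha=\beta=1/3$, $\gamma=\delta=1/9$) that the complementary case is the one that actually matters. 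So nothing in your first step reduces the work in the regime that needs proving.

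The genuine gap is that the remaining case is left as a plan rather than a proof, and the plan as stated is likely not executable. You propose to ``substitute the explicit forms of $R_1,\dots,R_9$, eliminate the radical, and reduce to an inequality between rational functions on the semialgebraic set\ldots and verify it using \texttt{Resolve}.'' Feeding the full five-variable statement (including $t$ or $\Delta$ as a free real bounded by the $\TD$ constraint) directly to CAD is computationally far worse than what the paper does. The paper first pins down the \emph{sign} of each coefficient of the quadratic in $t$: Lemma~\ref{lem:detcoeff} shows the $t^2$-coefficient $a_1a_2-R_1^2$ is negative and, crucially, that the $t$-coefficient $a_1 b_2 + a_2 b_1 - 2R_1^2$ is \emph{nonnegative} (inequalities \eqref{ett2}--\eqref{ett3}, themselves nontrivial and each requiring a radical-elimination argument plus a \texttt{Resolve} call). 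Only with $A_1\ge 0$ in hand can one legitimately drop the linear term and replace $t^2$ by $R_{11}$, reducing to the $t$-free inequality $R_{11}A_2 + A_0 > 0$. That inequality is then further split into \eqref{eq:c131} and \eqref{eq:c132}, each rearranged so that the radical $\sqrt{D}$ disappears before \texttt{Resolve} is invoked (Claims~\ref{clm:s1} and~\ref{ckkk}, plus a final polynomial inequality in Appendix~\ref{app:case13}). Your proposal contains none of this: you do not establish $A_1\ge 0$, you do not eliminate $t$, and you do not do the radical-elimination bookkeeping that makes the CAD calls tractable. Until those steps are carried out, what you have is a correct roadmap with the decisive verification missing.
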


Proving $\det(\mathbf{H})>0$ in the intersection of the regions \eqref{eq:con3} and \eqref{eq:regionlower} is trickier and is essentially the reason we do not obtain our hardness result for $\Delta=4,5$. At this point, it is convenient to split the analysis for each of the lemmas.

\subsection{Proof of Lemma~\ref{lem:main}}
In the setup of Lemma~\ref{lem:main}, we have  $(\alpha,\beta) \in \TD$ and $\alpha,\beta \leq 1/2$. We suppress the details of proving $\det(\mathbf{H})>0$ as a lemma, whose proof we defer to Section \ref{sec:case2}.

\begin{lemma}\label{lem:case2}
$\det(\mathbf{H})>0$ for every $\Delta \geq 3$, $(\alpha,\beta) \in \TD$, $\alpha,\beta \leq 1/2$, $(\gamma,\delta)$ in the interior of~(\ref{eq:con3}) and $(\gamma,\delta)$ in~(\ref{eq:regionupper}).
\end{lemma}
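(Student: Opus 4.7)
The strategy is to prove $\det(\mathbf{H})>0$ throughout the stated region by working directly with the explicit expression~\eqref{eq:det}. Setting $A:=R_1-R_2-R_3$, $B:=R_1-R_6-R_7$, $C:=R_1+R_4+R_5$, and $E:=R_1+R_8+R_9$, the determinant rewrites as
\begin{equation*}
D\cdot\det(\mathbf{H})=(\Delta-1)^2(AB-R_1^2)+(\Delta-1)(AE+BC-2R_1^2)+(CE-R_1^2).
\end{equation*}
Lemma~\ref{lem:secondd} yields $A,B>0$, while Observation~\ref{cpositive} gives $C,E>R_1>0$, so the constant term $CE-R_1^2$ is always strictly positive. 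The task thus reduces to controlling, as a function of $\Delta$, the other two coefficients of this quadratic in $\Delta-1$.

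The key use of the hypothesis $\alpha,\beta\le 1/2$ comes through monotonicity comparisons among the $R_i$. Since $1-2\alpha+\gamma\ge\gamma$ when $\alpha\le 1/2$, we obtain $R_2\le R_4$; symmetrically $R_6\le R_8$. Further, $\alpha+\beta\le 1$ (automatic on $\TD$) gives $R_3\le R_5$ and $R_7\le R_9$ directly from the formulas. These give the termwise bounds $C\ge R_1+R_2+R_3$ and $E\ge R_1+R_6+R_7$, which I would use to produce a usable lower bound on the linear coefficient $AE+BC-2R_1^2$ (for instance, expanding and cancelling the $R_1^2$ terms against $R_1(R_2+R_3)$ and $R_1(R_6+R_7)$ contributions coming from $C$ and $E$). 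The remaining content is then the sign of the leading coefficient $AB-R_1^2$.

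The main obstacle is that $AB-R_1^2$ can genuinely be negative: as $(\gamma,\delta)$ approaches the upper boundary of~\eqref{eq:regionupper}, where $\gamma\to\alpha$ or $\delta\to\beta$, both $A$ and $B$ shrink. In that same regime, however, $R_3,R_5,R_7,R_9$ blow up because their denominators $\alpha-\gamma$ and $\beta-\delta$ vanish, so $C$ and $E$ grow and the constant term $CE-R_1^2$ dominates. The plan is to make this tradeoff quantitative by clearing denominators in the inequality $D\cdot\det(\mathbf{H})>0$ and eliminating $\hat\epsilon,\hat\eta$ via the identities~\eqref{eq:usefulineq1}, reducing to a polynomial inequality in $\alpha,\beta,\gamma,\delta,\sqrt D$ under the semi-algebraic constraints defining $\TD$, $\alpha,\beta\le 1/2$, and~\eqref{eq:regionupper}. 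This is exactly the type of statement that can be dispatched by Mathematica's \texttt{Resolve} command, as in Section~\ref{sec:computer-assist}; the tightest case is anticipated to be $\Delta=3$ along the boundary where $A$ or $B$ is smallest.
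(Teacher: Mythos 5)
The central gap is that the proposal never explains how to handle the free variable $\Delta$, and the intuition that the tight case is $\Delta=3$ is wrong. Since the leading coefficient $AB-R_1^2$ is strictly negative, the quadratic $c_2(\Delta-1)^2+c_1(\Delta-1)+c_0$ tends to $-\infty$ as $\Delta\to\infty$ for fixed $(\alpha,\beta,\gamma,\delta)$; what keeps $\det(\mathbf{H})$ positive for large $\Delta$ is that $(\alpha,\beta)\in\TD$ forces $\alpha\beta$ to shrink with $\Delta$. You must encode this interplay before invoking \texttt{Resolve}, since a semialgebraic query with an unbounded extra parameter is not something \texttt{Resolve} can answer. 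The paper's Proposition~\ref{pro:bigdelta2} is the missing ingredient: it converts membership in $\TD$ together with $\gamma>\alpha^2$, $\delta>\beta^2$ from~\eqref{eq:regionupper} into the bound $(\Delta-1)^2\le R_{10}:=(1-\beta-\gamma-\hat{\epsilon})/(\alpha-\gamma-\hat{\epsilon})$, and then, using that the linear coefficient is nonnegative by Lemma~\ref{lem:detcoeff}, one gets $D\det(\mathbf{H})\ge (AB-R_1^2)R_{10}+(CE-R_1^2)$, a quantity depending only on $(\alpha,\beta,\gamma,\delta)$. The identities $R_3R_{10}=R_5$, $R_7R_{10}=R_9$ then collapse this expression, and $\alpha\le 1/2$ enters precisely to show one residual term is nonnegative, leaving a single four-variable polynomial inequality (Claim~\ref{clm:qwer}) for Mathematica.

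Two smaller issues. Your termwise bounds $C\ge R_1+R_2+R_3$, $E\ge R_1+R_6+R_7$ do not give a usable lower bound on the linear coefficient: expanding $A(R_1+R_6+R_7)+B(R_1+R_2+R_3)-2R_1^2$ gives $-2(R_1-A)(R_1-B)<0$; the fact actually needed, namely $AE-R_1^2\ge 0$ and $BC-R_1^2\ge 0$, is inequalities~\eqref{ett2}--\eqref{ett3} of Lemma~\ref{lem:detcoeff}, proved by a separate algebraic computation and not recoverable from these monotonicity comparisons. Also, the blow-up analysis is off: as $\gamma\to\alpha$ only $R_5$ diverges, while $R_3$ and $R_7$ stay bounded because $\alpha-\gamma-\hat{\epsilon}$ vanishes at the same order as $\alpha-\gamma$ (by~\eqref{eq:usefulineq1}); the dominance of $CE-R_1^2$ near that boundary is real but comes entirely from $R_5$ (and symmetrically $R_9$).
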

Using Lemmata~\ref{lem:case13} and~\ref{lem:case2}, the proof of Lemma~\ref{lem:main} is immediate.

\begin{proof}[Proof of Lemma~\ref{lem:main}]
Lemma~\ref{lem:case13} and Lemma~\ref{lem:case2} imply that $f$ has a unique maximum at $(\gamma^*,\delta^*)$ for every
$\Delta \geq 3$, $(\alpha,\beta) \in \TD$, $1/2\geq \alpha,\beta$ and $(\gamma,\delta)$
in the interior of~(\ref{eq:con3}). This follows from the fact that $\det(\mathbf{H})>0$ implies that the
Hessian of $f$ is negative definite in the region of interest,
 i.e., where $g_{\alpha,\beta}$ could possibly have stationary points, which in turn implies that $f$ is strictly
 concave in the region and hence has a unique maximum.  By the definition of $f$, it follows that $g_{\alpha,\beta}$ has a unique maximum at $(\gamma^{*},\delta^{*}, \epsilon^{*})$. For a more thorough outline, see the beginning of Section~\ref{sec:conclu}.
\end{proof}

\subsection{Proof of Lemma~\ref{lem:d3}}

In the setup of Lemma~\ref{lem:d3}, we have $\Delta=3$ and $p^+\geq 1/2$. By~(\ref{eq:occupy}), tedious but otherwise simple algebra gives that the solution of
$\beta=\phi(\alpha)$ and $\alpha=\phi(\beta)$ with $\alpha\neq \beta$ satisfies
$$
\alpha^2-2\alpha+\alpha\beta+1-2\beta+\beta^2=0.
$$
It can also be checked that when $1> p^+\geq 1/2$, it holds that $0 < p^- \leq (3-\sqrt{5})/4$. Define $\mathcal{R}'_3$ to be the set of pairs $(\alpha,\beta)$ such
that $\alpha^2-2\alpha+\alpha\beta+1-2\beta+\beta^2 = 0$,
$1/2 \leq \alpha < 1$ and $0 < \beta \leq (3-\sqrt{5})/4$. Our goal is to show that
$\det(\mathbf{H})>0$ for every $(\alpha,\beta) \in \mathcal{R}'_3$,
$(\gamma,\delta)$ in the interior of~(\ref{eq:con3}) and
$(\gamma,\delta)$ in~(\ref{eq:regionupper}).

We can rewrite $\det(\mathbf{H})$ using the formula in~(\ref{eq:det}) as
\begin{eqnarray}
\det(\mathbf{H}) &=& \frac{1}{D}\Big[3 R_1 (U_1+U_2)+U_1U_2\Big]\nonumber\\
&=& \frac{U_1}{D}\Big[3 R_1 (1+U_2/U_1)+U_2\Big],\label{eq:det2}
\end{eqnarray}
where
\begin{equation*}
U_1 = R_8+R_9-2R_6-2R_7, \quad U_2 = R_4+R_5-2R_2-2R_3.
\end{equation*}

The following lemma establishes  technical inequalities on $U_1,U_2,R_1,\dots,R_9$, which are crucial in establishing the positivity of $\det(\mathbf{H})$. Its proof is given in Section \ref{sec:proof-ineqdelta3}.

\begin{lemma}\label{lem:ineqdelta3}
We have $U_1 >0$, $R_4>R_6$, $R_5>R_7$, $R_5>4R_3$, $R_9>4R_7$, and $3R_8/2+R_9 > 9R_2$, for every $(\alpha,\beta) \in \mathcal{R}'_3$,
$(\gamma,\delta)$ in the interior of~(\ref{eq:con3}) and $(\gamma,\delta)$ in~(\ref{eq:regionupper}).
\end{lemma}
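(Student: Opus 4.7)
Since $\Delta=3$, these inequalities are genuinely algebraic. The plan is to substitute the explicit form \eqref{eq:solel} of $\hat{\epsilon}$ to eliminate it from each $R_i$. A short calculation gives
\[
2(\alpha-\gamma-\hat{\epsilon})=\sqrt{D}-t,\qquad 2(1-\beta-\gamma-\hat{\epsilon})=\sqrt{D}+t,
\]
where $t:=1-\alpha-\beta$. Writing also $u:=\alpha-\gamma$ and $v:=\beta-\delta$, we have $\sqrt{D}=\sqrt{t^2+4uv}$, and each $R_i$ becomes a rational function of $t,u,v,\alpha,\beta,\gamma,\delta$. Throughout, I will use the bound $uv\le \alpha(1-\alpha)\cdot\beta(1-\beta)=t^2(1-\alpha)(1-\beta)$ valid on the upper region, where the last equality uses the curve identity $\alpha\beta=t^2$ defining $\mathcal{R}'_3$.

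$R_4>R_6$ is immediate from the constraint $\gamma<1-2\beta+\delta$ in \eqref{eq:con3}. For $R_5>R_7$, substitution converts the inequality into $t(u+v)>(u-v)\sqrt{D}$; this is trivial when $u\le v$, and when $u>v$ both sides are positive and squaring yields $t^2>(u-v)^2$, equivalently $1-2\alpha+\gamma-\delta>0$, which is part of the definition of the interior of \eqref{eq:con3}. The identity $R_5/R_3=R_9/R_7=(1-\beta-\gamma-\hat{\epsilon})/(\alpha-\gamma-\hat{\epsilon})$ shows that $R_5>4R_3$ and $R_9>4R_7$ are the same inequality, and expanding it gives $5t>3\sqrt{D}$, i.e., $4t^2>9uv$. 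By the bound above it suffices to prove $(1-\alpha)(1-\beta)<4/9$. Parametrizing $\mathcal{R}'_3$ by $t$ (so that $\alpha,\beta$ are the roots of $x^2-(1-t)x+t^2=0$, forcing $t\in[0,1/3]$ with equality only at $\alpha=\beta=1/3$), one computes $(1-\alpha)(1-\beta)=t+t^2$, and $\alpha\ne\beta$ in $\mathcal{R}'_3$ forces $t<1/3$, whence $t+t^2<4/9$.

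For $U_1>0$, the same substitutions simplify $U_1$ to
\[
U_1=\frac{\sqrt{D}\,(1-2\beta-\delta)}{\delta\,(1-2\beta+\delta)}+\frac{3t-\sqrt{D}}{v}.
\]
The first summand is positive because $\beta\le(3-\sqrt{5})/4<1/3$ together with $\delta<\beta$ gives $1-2\beta-\delta>0$. The second summand is positive whenever $\sqrt{D}<3t$, equivalently $uv<2t^2$, which follows from the bound on $uv$ together with $(1-\alpha)(1-\beta)<2$.

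The main obstacle is the final inequality $3R_8/2+R_9>9R_2$: unlike the others, it simultaneously involves the three denominators $\delta$, $v=\beta-\delta$ and $1-2\alpha+\gamma$, so after collecting terms the coefficient of $\sqrt{D}$ has no definite sign on the region and no clean sign-preserving reduction appears. I plan to clear denominators, isolate the $\sqrt{D}$-terms, and square once to remove the radical using $D=t^2+4uv$, obtaining a polynomial inequality in $\alpha,\beta,\gamma,\delta$ that must hold on the semialgebraic set defined by the curve $\alpha\beta=(1-\alpha-\beta)^2$ (which may also be used to eliminate $\beta$), the box $1/2\le\alpha<1$, $0<\beta\le(3-\sqrt{5})/4$, and the constraints from \eqref{eq:con3} and \eqref{eq:regionupper}. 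The resulting statement is decidable by Tarski/CAD and I would verify it using Mathematica's \verb!Resolve! command, as described in Section~\ref{sec:computer-assist}.
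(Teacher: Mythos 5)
Your treatment of the first five inequalities is correct, and the substitutions $2(\alpha-\gamma-\hat\epsilon)=\sqrt D-t$, $2(1-\beta-\gamma-\hat\epsilon)=\sqrt D+t$ with $t=1-\alpha-\beta$, $u=\alpha-\gamma$, $v=\beta-\delta$ streamline several steps relative to the paper's presentation. The two arguments do diverge in places. For $R_5>4R_3$ and $R_9>4R_7$, the paper observes that both reduce to $R_{10}>4$ and cites Proposition~\ref{pro:bigdelta2}, which gives $R_{10}>(\Delta-1)^2$ directly from the constraint $\alpha+\beta+\Delta(\Delta-2)\alpha\beta\le1$; your alternative route parametrizes $\mathcal{R}'_3$ by $t$, uses the curve identity $\alpha\beta=t^2$ to get $(1-\alpha)(1-\beta)=t+t^2$, and shows $t<1/3$ to conclude, which is a genuinely different and self-contained argument. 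Similarly, for $U_1>0$ both you and the paper split into $R_8-2R_6$ and $R_9-2R_7$ and dispose of the first piece identically (via $1-2\beta-\delta>1-3\beta>0$), but the paper invokes $R_9>4R_7$ for the second piece whereas you prove $3t>\sqrt D$ directly; both are fine.

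The gap is in the sixth inequality $3R_8/2+R_9>9R_2$, which you describe as ``the main obstacle'' and for which you offer a plan rather than a proof. After clearing denominators the quantity has the form $P+Q\sqrt D$, and you note yourself that $Q$ has no definite sign on the region; ``squaring once'' therefore does not yield a single polynomial inequality---one must case-split on the sign of $Q$, and the direction of the squared inequality flips on the $Q<0$ branch. Beyond that, you have not actually executed the computation nor checked that \verb!Resolve! terminates on the raw system in the full four variables plus the curve constraint. The paper does not hand the problem to Mathematica in that form: it first rewrites using $(1-\alpha-\beta)^2=\alpha\beta$, then exploits monotonicity in $\gamma$ to pin $\gamma=\gamma_{\min}=\alpha^2(1+\delta-2\beta)/(1-\beta)^2$, then weakens the left-hand side using $\alpha\ge1/2$ and $\delta>\beta^2$ to arrive at $1/\sqrt{3-2\delta/\beta}$, and only then passes a substantially reduced polynomial system to \verb!Resolve!. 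As written, your proof of the lemma is incomplete.
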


With Lemma~\ref{lem:ineqdelta3} at hand, we can now prove that $\det(\mathbf{H})$ is positive.

\begin{lemma}\label{lem:cased3}
$\det(\mathbf{M})>0$ for every $(\alpha,\beta) \in \mathcal{R}'_3$, $(\gamma,\delta)$ in the interior of~(\ref{eq:con3}) and $(\gamma,\delta)$ in~(\ref{eq:regionupper}).
\end{lemma}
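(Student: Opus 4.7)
The plan is to exploit the factored form of $\det(\mathbf{H})$ given by equation~\eqref{eq:det2} together with the six inequalities bundled in Lemma~\ref{lem:ineqdelta3}. First, observe that in the interior of~\eqref{eq:con3} we have $D = (\alpha+\beta-1)^2 + 4(\alpha-\gamma)(\beta-\delta) > 0$, and Lemma~\ref{lem:ineqdelta3} gives $U_1 > 0$. Consequently the claim $\det(\mathbf{H}) > 0$ is equivalent to
\begin{equation*}
  3R_1(U_1+U_2) + U_1 U_2 > 0,
\end{equation*}
and, after adding and subtracting $9R_1^2$, also to the cleaner inequality
\begin{equation*}
  (3R_1 + U_1)(3R_1 + U_2) > 9R_1^2.
\end{equation*}

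I would then split on the sign of $U_2$. The case $U_2 \geq 0$ is immediate: by Observation~\ref{cpositive} and Lemma~\ref{lem:ineqdelta3}, $R_1 > 0$ and $U_1 > 0$, so $(3R_1+U_1)(3R_1+U_2) \geq 3R_1(3R_1+U_1) > 9R_1^2$. The whole difficulty is concentrated in the case $U_2 < 0$, which forces $(3R_1+U_2) < 3R_1$ and requires quantitative control.

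In the case $U_2 < 0$, I would first use $R_5 > 4R_3$ from Lemma~\ref{lem:ineqdelta3}, which gives $-2R_3 > -R_5/2$ and hence
\begin{equation*}
  |U_2| \;=\; 2R_2 + 2R_3 - R_4 - R_5 \;<\; 2R_2 - R_4 - R_5/2 \;<\; 2R_2.
\end{equation*}
Next, the key bound $3R_8/2 + R_9 > 9R_2$ converts $2R_2$ into a linear combination of $R_8$ and $R_9$, while the remaining bounds $R_9 > 4R_7$, $R_4 > R_6$, $R_5 > R_7$ are used to lower-bound $U_1 = R_8 + R_9 - 2R_6 - 2R_7$ in terms of $R_8$ and $R_9$. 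Combining these estimates with the fact that $R_1 > 1$ (which is immediate from the definition $R_1=(1-\alpha-\beta)/(1-\alpha-\beta-\hat\epsilon-\hat\eta)$ together with $\hat\epsilon,\hat\eta>0$ and $1-\alpha-\beta-\hat\epsilon-\hat\eta > 0$ from~\eqref{eq:usefulineq1}), the target inequality
\begin{equation*}
  |U_2| \;<\; \frac{3R_1 U_1}{3R_1 + U_1}
\end{equation*}
reduces to a rational-function inequality in $\alpha,\beta,\gamma,\delta$ supported on $\mathcal{R}'_3 \times (\text{region}~\eqref{eq:regionupper})$.

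The main obstacle, and the reason Lemma~\ref{lem:ineqdelta3} assembles precisely these six inequalities, is that each of them is individually somewhat loose; only their combination provides enough slack to close the $U_2<0$ case, and the margin shrinks as $\alpha \to 1$ along $\mathcal{R}'_3$. If the final rational-function inequality proves too cumbersome to verify by hand, it is of the form amenable to the computer-assisted CAD verification described in Section~\ref{sec:computer-assist}, which can certify it in exact rational arithmetic.
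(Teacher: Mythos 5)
Your starting point is the same as the paper's: work from~\eqref{eq:det2}, use the six inequalities of Lemma~\ref{lem:ineqdelta3}, and reduce to showing $3R_1(U_1+U_2)+U_1U_2>0$. Your rewriting as $(3R_1+U_1)(3R_1+U_2)>9R_1^2$ is correct and the $U_2\geq 0$ case is handled cleanly. However, the $U_2<0$ case --- which you rightly identify as the whole difficulty --- is not actually closed in your proposal; you end by saying the target inequality $|U_2|<3R_1U_1/(3R_1+U_1)$ ``reduces to a rational-function inequality'' and could be handed to CAD. The paper finishes this by hand, and the way it does so reveals the purpose of the specific constants in Lemma~\ref{lem:ineqdelta3}: it shows
\begin{equation*}
U_1+3U_2 = \bigl(R_8+\tfrac{2}{3}R_9-6R_2\bigr)+\bigl(3R_4-2R_6\bigr)+\bigl(\tfrac{1}{3}R_9-\tfrac{4}{3}R_7\bigr)+\bigl(\tfrac{3}{2}R_5-R_7\bigr)+\bigl(\tfrac{3}{2}R_5-6R_3\bigr) > 0,
\end{equation*}
each bracket nonnegative by exactly one inequality from Lemma~\ref{lem:ineqdelta3}, whence $U_2/U_1>-1/3$ and
\begin{equation*}
\det(\mathbf{H}) > \frac{U_1}{D}\,(2R_1+U_2)=\frac{U_1}{D}\bigl[R_4+R_5+2(R_1-R_2-R_3)\bigr]>0
\end{equation*}
by Lemma~\ref{lem:secondd} and Observation~\ref{cpositive}.

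The substantive gap in your plan is the ingredient that controls $R_1$. You invoke only $R_1>1$, which is correct but too weak: what the argument genuinely needs is $R_1>R_2+R_3$ from Lemma~\ref{lem:secondd}, which is what makes $2R_1+U_2>0$. (Note $U_1+3U_2>0$ by itself gives $U_2>-U_1/3$, but then $3R_1(U_1+U_2)+U_1U_2>U_1(2R_1-U_1/3)$, and there is no bound $U_1<6R_1$ available --- $U_1$ blows up as $\delta\to 0$ or $\beta-\delta\to 0$ --- so one cannot close the case with $R_1>1$ alone.) Your CAD fallback would have to carry $R_1>R_2+R_3$ as a hypothesis, but since $R_1$ involves $\sqrt{D}$ via $\hat\epsilon,\hat\eta$, encoding this cleanly is nontrivial; the paper avoids the issue by invoking Lemma~\ref{lem:secondd} directly and never needs CAD for this lemma. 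In short: right framework, correct easy case, but the hard case is delegated rather than proved, and the delegation omits the indispensable Lemma~\ref{lem:secondd}.
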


\begin{proof}
Observe that
\begin{equation*}
\begin{split}
U_1+3U_2 &= (R_8+2R_9/3-6R_2)+(3R_4-2R_6)+(R_9/3-4R_7/3)\\
&\ \ \ +(3R_5/2-2R_7/2)+(3R_5/2-6R_3)\\
&>0,
\end{split}
\end{equation*}
where the last inequality follows by Lemma~\ref{lem:cased3}. Once again by Lemma~\ref{lem:cased3}, we have $U_1>0$ and  hence $U_2/U_1>-1/3$. Thus, (\ref{eq:det2}) gives
$$
\det(\mathbf{H})>\frac{U_1}{D}(2 R_1 + U_2)=\frac{U_1}{D}\big[R_4+R_5+2(R_1-R_2-R_3)\big] > 0,
$$
where the last inequality follows from Lemma~\ref{lem:secondd} and Observation~\ref{cpositive}.
\end{proof}

\begin{proof}[Proof of Lemma~\ref{lem:d3}]
As in the proof of Lemma~\ref{lem:main}, Lemmata~\ref{lem:case13} and~\ref{lem:cased3} yield that $g_{\alpha,\beta}(\gamma,\delta,\epsilon)$ achieves its unique maximum in the region~(\ref{eq:con2}) at the
point $(\gamma^*,\delta^*,\epsilon^*)$, for every $(\alpha,\beta) \in \mathcal{R}'_3$. We next show
that $g_{\alpha,\beta}(\gamma,\delta,\epsilon):=\Phi_2(\alpha,\beta,\gamma,\delta,\epsilon)$
also achieves its unique maximum in the region~(\ref{eq:con2}) at the
point $(\gamma^*,\delta^*,\epsilon^*)$ in a small neighborhood of $\mathcal{R}'_3$.

First note that $\Phi_2$ is continuous. By Lemma~\ref{lem:interior}, we have for sufficiently small $\chi > 0$, the maximum of $g_{\alpha,\beta}$ cannot be obtained on the boundary of the region~(\ref{eq:con2}).

Note that the derivatives of $\Phi_2$ are continuous. It follows that for sufficiently small $\chi > 0$, all stationary points of $g_{\alpha,\beta}$ have to be close to the point $(\gamma^*,\delta^*,\epsilon^*)$. We can choose $\chi$ such that $\det(\mathbf{H})>0$ in the neighborhood of the point $(\gamma^*,\delta^*,\epsilon^*)$, which implies that $g_{\alpha,\beta}$ has a unique stationary point and it is a maximum.
\end{proof}

\section{Remaining Proofs of Technical Lemmas}\label{sec:proofs}

\subsection{Proof of Lemma~\ref{lem:zeros}}\label{sec:lemzeros}
We define $W_3$ to be the numerator of $W_{11}$ minus the denominator of $W_{11}$, and $W_4$ to be the numerator of $W_{21}$ minus the
denominator of $W_{21}$; more precisely
\begin{eqnarray}
W_3 &=& (\alpha-\gamma-\hat{\epsilon})\hat{\epsilon}(1-2\alpha+\gamma) - \hat{\eta}(\alpha-\gamma)^{2},\\
W_4 &=& (\beta-\delta-\hat{\eta})\hat{\eta}(1-2\beta+\delta) - \hat{\epsilon}(\beta-\delta)^{2}.
\end{eqnarray}
Substituting the expression for $\hat{\epsilon}$ and simplifying we obtain
\begin{eqnarray*}
W_3 &:=& ((\alpha+\beta-1+\sqrt{D})/2)\hat{\epsilon}(1-2\alpha+\gamma)-\hat{\eta}(\alpha-\gamma)^{2},\\
W_4 &:=& ((\alpha+\beta-1+\sqrt{D})/2)\hat{\eta}(1-2\beta+\delta)-\hat{\epsilon}(\beta-\delta)^{2}.
\end{eqnarray*}
By expanding $W_3$ and $W_4$, we have
\begin{eqnarray*}
W_3 &=& W_{31}+W_{32}\sqrt{D},\\
W_4 &=& W_{41}+W_{42}\sqrt{D},
\end{eqnarray*}
where
\begin{eqnarray*}
W_{31} &=& ((3/2)\beta-(1/2)\beta^2-\delta-(3/2)\alpha\beta+\delta\alpha)\gamma\\
&&+\beta+(5/2)\beta\alpha^2+(1/2)\alpha^3+(3/2)\alpha-(3/2)\alpha^2\\
&&-(1/2)\beta^2-1/2+\delta\alpha-(7/2)\alpha\beta-\delta\alpha^2+\beta^2\alpha,\\
W_{32} &=& \alpha\beta-(1/2)\beta\gamma-\alpha-(1/2)\beta+(1/2)\alpha^2+1/2,\\
W_{41} &=& ((3/2)\alpha-(1/2)\alpha^2-\gamma-(3/2)\alpha\beta+\gamma\beta)\delta\\
&&+\alpha+(5/2)\alpha\beta^2+(1/2)\beta^3+(3/2)\beta-(3/2)\beta^2\\
&&-(1/2)\alpha^2-1/2+\gamma\beta-(7/2)\alpha\beta-\gamma\beta^2+\alpha^2\beta,\\
W_{42} &=& \alpha\beta-(1/2)\alpha\delta-\beta-(1/2)\alpha+(1/2)\beta^2+1/2.
\end{eqnarray*}
Note that for every $(\alpha,\beta) \in \mathcal{R}$, $W_{32} > 0$ when $0<\gamma<\alpha$, and $W_{42} > 0$ when $0<\delta<\beta$. To
see $W_{32}>0$ note that $\alpha\beta-(1/2)\beta\gamma-\alpha-(1/2)\beta+(1/2)\alpha^2+1/2\geq
\alpha\beta-(1/2)\beta\alpha-\alpha-(1/2)\beta+(1/2)\alpha^2+1/2=(1/2)(1-\alpha)(1-\alpha-\beta)>0$, since $\alpha+\beta<1$;
inequality $W_{42}>0$ is the same (after renaming the variables).

Note that $W_{31}$ is a linear function in $\gamma$ and we have
$$
\frac{d W_{31}}{d \gamma} = (3/2)\beta-(1/2)\beta^2-\delta-(3/2)\alpha\beta+\delta\alpha,
$$
which is positive for all $(\alpha,\beta) \in \mathcal{R}$ and $0<\delta<\beta$. To see this note
\begin{equation*}
\begin{split}
(3/2)\beta-(1/2)\beta^2-\delta-(3/2)\alpha\beta+\delta\alpha=(1-\alpha)((3/2)\beta-\delta) - (1/2)\beta^2\\
\geq (1-\alpha)(1/2)\beta - (1/2)\beta ^2 = (1/2) \beta (1-\alpha-\beta)>0.
\end{split}
\end{equation*}
Moreover, we have $W_{31}$ is negative when $\gamma=\alpha$, $(\alpha,\beta) \in \mathcal{R}$ and $0<\delta<\beta$ (after substituting
$\gamma=\alpha$ into $W_{31}$ we obtain $-(1-\alpha)(1-\alpha-\beta)^2/2<0$). Hence, $W_{31}<0$ for all $(\alpha,\beta) \in
\mathcal{R}$, $0<\gamma<\alpha$ and $0<\delta<\beta$.

By the same proof, we can show that $W_{41}<0$ for all $(\alpha,\beta) \in \mathcal{R}$, $0<\gamma<\alpha$ and $0<\delta<\beta$ (note
that $W_{31}$ and $W_{41}$ are the same after renaming the variables).

Let
\begin{eqnarray*}
W_5 &=& (W_{31}/W_{32})^2-D,\\
W_6 &=& (W_{41}/W_{42})^2-D.
\end{eqnarray*}
Note that the signs of $W_3$ and $W_5$ are opposite, and the signs of $W_4$ and $W_6$ are opposite.

After substituting $W_{31},W_{32},W_{41},W_{42}$ and simplifications, we obtain
\begin{eqnarray*}
W_5 &=& -\frac{4(\beta-\delta)(\alpha-\gamma)^2((1-\alpha)^2\delta+\beta^2(2\alpha-1-\gamma))}{(2\alpha\beta-\beta\gamma-2\alpha-\beta+\alpha^2+1)^2},\\
W_6 &=&
-\frac{4(\beta-\delta)^2(\alpha-\gamma)((1-\beta)^2\gamma+\alpha^2(2\beta-1-\delta))}{(\beta^2-\delta\alpha+1-2\beta-\alpha+2\alpha\beta)^2}.
\end{eqnarray*}

Hence, $$W_{11} \geq 1 \iff W_5 \leq 0 \iff  (1-\alpha)^2\delta+\beta^2(2\alpha-1-\gamma) \geq 0,$$ and
$$W_{21} \geq 1 \iff W_6 \leq 0 \iff (1-\beta)^2\gamma+\alpha^2(2\beta-1-\delta) \geq 0.$$

FInally, we analyze the conditions for $W_{12}\geq 1$ and $W_{22}\geq 1$. It is straightforward to see that
$$
W_{12} \geq 1 \iff (\alpha-\gamma)^2 \geq (1-2\alpha+\gamma)\gamma \iff \gamma \leq \alpha^2,
$$
and
$$
W_{22} \geq 1 \iff (\beta-\delta)^2 \geq (1-2\beta+\delta)\delta \iff \delta \leq \beta^2.
$$

\subsection{Proof of Lemma \ref{lem:secondd}}
\label{sec:proof-secondd}

We have
$$
-R_1+R_2+R_3 = (\beta-\delta)\left(-\frac{1}{\alpha-\gamma-\hat{\epsilon}}-\frac{1}{\hat{\eta}}+\frac{1}{\hat{\epsilon}}\right)
-\frac{\sqrt{D}}{\hat{\epsilon}}+\frac{\sqrt{D}}{1-2\alpha+\gamma}+\frac{2\sqrt{D}}{\alpha-\gamma}.
$$
Multiplying by the denominators we let
$$
P_1 =(-R_1+R_2+R_3) \hat{\epsilon}\hat{\eta}(1-2\alpha+\gamma)(\alpha-\gamma)(\alpha-\gamma-\hat{\epsilon}) =
P_{11}+P_{12}\sqrt{D},
$$
where
\begin{eqnarray*}
P_{11} &=& 1-(1/2)\delta\alpha+17\alpha\beta+(9/2)\delta\gamma-5\beta\gamma+21\alpha\gamma\beta\delta-(37/2)\alpha^2\gamma\beta\delta-5\alpha\gamma\beta\delta^2+(5/2)\alpha\gamma\beta^2\delta\\
&&+8\gamma^2\alpha\delta\beta-5\alpha-3\beta-\delta+10\alpha^2+3\beta^2+3\delta\beta-10\alpha^3+5\alpha^4-\beta^3+15\alpha\gamma\beta-(13/2)\alpha\beta\delta\\
&&-(27/2)\alpha\gamma\delta-(5/2)\beta\gamma\delta-4\alpha^2\beta\delta+(27/2)\alpha^2\gamma\delta-15\alpha^2\gamma\beta-24\alpha\gamma\beta^2-8\gamma^2\beta\delta\\
&&+(17/2)\alpha\beta^2\delta-(5/2)\beta^2\gamma\delta-3\alpha\beta\delta^2+3\beta\gamma\delta^2+(15/2)\alpha^3\beta\delta+4\alpha^2\beta\delta^2-(11/2)\alpha^2\beta^2\delta\\
&&-(9/2)\alpha^3\gamma\delta+3\alpha^2\gamma\delta^2+5\alpha^3\gamma\beta+17\alpha^2\gamma\beta^2+4\alpha\gamma\beta^3-3\gamma^2\alpha\delta^2-5\gamma^2\alpha\beta^2+\gamma^2\beta\delta^2-3\beta^2\delta\\
&&+3\delta^2\alpha-3\delta^2\gamma-3\alpha^2\delta^2+(7/2)\alpha^4\delta-8\alpha^4\beta-15\alpha^3\beta^2-4\alpha^2\beta^3+3\gamma^2\delta^2-\gamma^2\beta^3-33\alpha^2\beta\\
&&+(15/2)\alpha^2\delta-16\alpha\beta^2+7\beta^2\gamma-(19/2)\alpha^3\delta+27\alpha^3\beta+28\alpha^2\beta^2+5\gamma^2\beta^2+4\alpha\beta^3-2\beta^3\gamma\\
&&+\delta\beta^3-(3/2)\alpha\beta^3\delta+(1/2)\beta^3\gamma\delta-\alpha^5,
\end{eqnarray*}
and
\begin{eqnarray*}
P_{12} &=& -1-(1/2)\delta\alpha-9\alpha\beta-(5/2)\delta\gamma+3\beta\gamma-2\alpha\gamma\beta\delta+4\alpha+2\beta+\delta-6\alpha^2-\beta^2-2\delta\beta\\
&&+4\alpha^3-\alpha^4-6\alpha\gamma\beta+4\alpha\beta\delta+5\alpha\gamma\delta-\alpha^2\beta\delta-(5/2)\alpha^2\gamma\delta+3\alpha^2\gamma\beta+4\alpha\gamma\beta^2+\gamma^2\beta\delta\\
&&-(3/2)\alpha\beta^2\delta+(1/2)\beta^2\gamma\delta-\alpha\gamma\delta^2+\beta^2\delta-\delta^2\alpha+\delta^2\gamma+\alpha^2\delta^2+12\alpha^2\beta-2\alpha^2\delta+4\alpha\beta^2\\
&&-2\beta^2\gamma+(3/2)\alpha^3\delta-5\alpha^3\beta-4\alpha^2\beta^2-\gamma^2\beta^2.
\end{eqnarray*}

The following claim is proved using the {\tt Resolve} function of the Mathematica system in Appendix~\ref{app:secondd}.

\begin{claim}\label{cccf}
$P_{11}>0$ and $P_{12}<0$ for all $(\alpha,\beta) \in \mathcal{R}$ and $(\gamma,\delta)$ in the interior of~(\ref{eq:con3}).
\end{claim}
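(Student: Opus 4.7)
The plan is to verify the two polynomial inequalities $P_{11}>0$ and $P_{12}<0$ on the closed semialgebraic set defined by $(\alpha,\beta)\in\mathcal{R}$ together with the interior of~(\ref{eq:con3}), i.e., the region cut out by the polynomial conditions $0<\gamma<\alpha$, $0<\delta<\beta$, $\alpha+\beta\le 1$, $1-2\beta+\delta-\gamma>0$, and $1-2\alpha+\gamma-\delta>0$. Since $P_{11}$ and $P_{12}$ are polynomials in four real variables and the region is defined by polynomial inequalities, the statement is a universally quantified sentence in the first-order theory of real closed fields, and is therefore decidable by Tarski--Seidenberg. In practice I would delegate the verification to Mathematica's \texttt{Resolve} command, exactly as envisioned in Section~\ref{sec:computer-assist}. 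The CAD implementation used by \texttt{Resolve} operates over rational interval arithmetic, so its output is a rigorous yes/no answer rather than a numerical approximation.

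Before pressing the button on CAD, I would perform a few sanity checks to guard against transcription errors in the rather long expressions for $P_{11}$ and $P_{12}$. First, I would recompute $P_{11}+P_{12}\sqrt{D}$ symbolically from the defining identity $P_1=(-R_1+R_2+R_3)\hat{\epsilon}\hat{\eta}(1-2\alpha+\gamma)(\alpha-\gamma)(\alpha-\gamma-\hat{\epsilon})$, verifying that it matches the stated expansion. Second, I would evaluate both polynomials at representative points where we already expect to know the answer: near the symmetric point $(\alpha,\beta,\gamma,\delta)=(1/\Delta,1/\Delta,1/\Delta^2,1/\Delta^2)$ studied in~\cite{MWW}, and near each face of the region (the limits $\gamma\to 0$, $\gamma\to\alpha$, $\delta\to 0$, $\delta\to\beta$, and the two constraints from~(\ref{eq:con3})). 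These checks both catch typos and give confidence that no sign degeneracy occurs inside the region.

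A parallel, human-readable strategy I would attempt is to find a Positivstellensatz-style certificate: write $P_{11}$ as a nonnegative combination, with sum-of-squares coefficients, of products of the seven region generators $\gamma$, $\alpha-\gamma$, $\delta$, $\beta-\delta$, $1-\alpha-\beta$, $1-2\alpha+\gamma-\delta$, $1-2\beta+\delta-\gamma$, and analogously for $-P_{12}$. For $P_{12}$, whose total degree is only $4$, such a certificate is plausibly within reach by solving a small semidefinite program for the unknown coefficients, and would yield a transparent by-hand proof. For $P_{11}$, however, the total degree jumps to $5$, there are over fifty monomials with mixed signs, and the region carries seven active generators, so guessing a clean decomposition by hand looks hopeless. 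This combinatorial complexity is the main obstacle, and it is precisely why the CAD route is the pragmatic choice here: the hand computations would not fit on a page, whereas \texttt{Resolve} discharges the verification in a single call. If \texttt{Resolve} did not terminate quickly, I would fall back on slicing the $(\alpha,\beta)$ square into finitely many subregions (for instance according to whether $\alpha\le 1/2$ or $\alpha\ge 1/2$, and similarly for $\beta$) and running \texttt{Resolve} on each slice, which typically reduces the cell complexity of the CAD dramatically.
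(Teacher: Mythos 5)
Your proposal is correct and matches the paper's approach exactly: the paper proves Claim~\ref{cccf} by a single call to Mathematica's \texttt{Resolve} (CAD) on the polynomial conditions defining $\mathcal{R}$ and the interior of~(\ref{eq:con3}), as listed in Appendix~\ref{app:secondd}. Your additional remarks about sanity-checking the derivation of $P_{11},P_{12}$ and the (unattempted) Positivstellensatz alternative are sensible but supplementary; the paper does not pursue them.
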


From Claim~\ref{cccf} we have that
$P_{11}-P_{12}\sqrt{D}>0$ and hence showing $P_{11}+P_{12}\sqrt{D}<0$ is equivalent to showing
$P_{11}^2 - P_{12}^2 D<0$ which in turn is equivalent to showing that $P_2 := (P_{11}/P_{12})^2-D$ is negative. We have
$$
P_2 = (P_{11}/P_{12})^2-D = -\frac{(\beta-\delta)^2(\alpha-\gamma)^3P_{21}}{P_{12}^2},
$$
where
\begin{eqnarray*}
P_{21} &=& -8\delta\alpha+20\alpha\gamma\beta\delta-10\alpha^2\gamma\beta\delta+2\alpha\gamma\beta\delta^2+12\alpha\gamma\beta^2\delta+2\delta+2\delta^2-8\delta\beta+4\beta^3+34\alpha\beta\delta\\
&&-10\beta\gamma\delta-44\alpha^2\beta\delta+8\alpha\beta^2\delta-4\beta^2\gamma\delta-14\alpha\beta\delta^2-2\beta\gamma\delta^2+18\alpha^3\beta\delta+6\alpha^2\beta\delta^2-10\alpha^2\beta^2\delta\\
&&+9\alpha^2\gamma\delta^2-16\alpha\gamma\beta^3-18\alpha\gamma\delta^2-2\beta^2\delta-15\delta^2\alpha+9\delta^2\gamma+24\alpha^2\delta^2+2\alpha^4\delta+16\alpha^2\beta^3\\
&&+4\gamma^2\beta^3+12\alpha^2\delta-8\alpha^3\delta-16\alpha\beta^3+8\beta^3\gamma-4\delta\beta^3+6\alpha\beta^3\delta-2\beta^3\gamma\delta+8\delta^2\beta+2\beta^2\delta^2\\
&&+8\delta^3\alpha-4\delta^3\alpha^2-11\delta^2\alpha^3-3\delta^2\alpha\beta^2+\delta^2\gamma\beta^2-4\delta\gamma^2\beta^2-4\delta^3.
\end{eqnarray*}

The following claim is proved using the {\tt Resolve} function of the Mathematica system in Appendix~\ref{sec:ccch}.

\begin{claim}\label{ccch}
$P_{21}>0$ for all $(\alpha,\beta) \in \mathcal{R}$ and $(\gamma,\delta)$ in the interior of~(\ref{eq:con3}).
\end{claim}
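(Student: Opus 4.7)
The plan is to establish this polynomial inequality by computer-algebraic verification, in the same spirit as Claim~\ref{cccf} and as explained in Section~\ref{sec:computer-assist}. The statement is a first-order sentence in the theory of real closed fields (a polynomial $P_{21}$ is strictly positive on a semi-algebraic set cut out by the defining inequalities of $\mathcal{R}$ together with~(\ref{eq:con3})), hence decidable by Tarski's quantifier elimination. Concretely, I would feed the formula to Mathematica's \verb!Resolve! command, which applies Collins' cylindrical algebraic decomposition. Since $P_{21}$ has total degree~$4$ in the four variables $\alpha,\beta,\gamma,\delta$ and the defining inequalities are all linear or quadratic, the CAD computation is well within practical reach, and the use of interval arithmetic guarantees that the answer is rigorous.

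Before invoking the black-box, I would look for manual structure that might yield a human-readable proof. Viewed as a polynomial in $\gamma$, the coefficient of $\gamma^{2}$ in $P_{21}$ is $4\beta^{2}(\beta-\delta)$, which is strictly positive in the interior of~(\ref{eq:con3}); so $P_{21}$ is an upward-opening quadratic in $\gamma$. A natural first attempt is to show that its discriminant in $\gamma$ is non-positive, which would immediately give $P_{21}\geq 0$ for every real $\gamma$. If that fails, one can aim to show that both real roots lie outside the admissible interval $[0,\alpha]$, by checking the sign of $P_{21}$ at $\gamma=0$ and $\gamma=\alpha$ and exploiting the constraints $\delta<\beta$, $1-2\beta+\delta-\gamma>0$, $1-2\alpha+\gamma-\delta>0$. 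A parallel Positivstellensatz-style route is to try to express $P_{21}$ as a non-negative combination of products of the six "slack" polynomials $\gamma,\,\alpha-\gamma,\,\delta,\,\beta-\delta,\,1-2\beta+\delta-\gamma,\,1-2\alpha+\gamma-\delta$, with coefficients that are sums of squares in $\alpha,\beta,\gamma,\delta$; such a decomposition can be searched for by templating multilinear products of the slacks up to some degree and solving for the coefficients by linear algebra.

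The main obstacle is the sheer combinatorial complexity of $P_{21}$, which has roughly fifty monomials, and the fact that the region has six non-trivial polynomial boundaries; so a short, self-contained manual certificate seems unlikely to emerge, and the principled route is the CAD verification. The real risk in the proof is not mathematical but bookkeeping: one must be sure that $P_{21}$ has been correctly derived from the formula $P_{2}=(P_{11}/P_{12})^{2}-D$ in Section~\ref{sec:proof-secondd}, and from the original expression for $P_{1}$. I would therefore first reproduce the entire algebraic derivation symbolically in Mathematica, verify the identity $P_{12}^{2}\cdot P_{2}=-(\beta-\delta)^{2}(\alpha-\gamma)^{3}P_{21}$ exactly, and only then invoke \verb!Resolve! on the resulting inequality for $P_{21}$.
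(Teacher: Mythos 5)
Your proposal is correct and takes essentially the same approach as the paper: the paper's proof of this claim is exactly a Mathematica \verb!Resolve! (CAD) call on $P_{21}\le 0$ over $\mathcal{R}$ and the interior of~(\ref{eq:con3}), with the symbolic derivation of $P_{21}$ from $(P_{11}/P_{12})^2-D$ also carried out in Mathematica (Appendix~\ref{sec:ccch}). One small slip in your preamble: $P_{21}$ has total degree~$5$, not~$4$ (e.g.\ the term $18\alpha^{3}\beta\delta$), though your observation that the $\gamma^{2}$-coefficient is $4\beta^{2}(\beta-\delta)>0$ is accurate and does not affect the argument.
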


From Claim~\ref{ccch} we have that $P_2<0$ which implies $P_{11}+P_{12}\sqrt{D}<0$ and this completes the
proof of Lemma \ref{lem:secondd}.

\subsection{Proof of Lemma \ref{lem:case13}}
\label{sec:proof-case13}

We will use the following technical lemma in the proof of Lemma \ref{lem:case13}.

\begin{lemma}\label{lem:detcoeff}
For every $(\alpha,\beta) \in \mathcal{R}$, and $(\gamma,\delta)$ in the interior of~(\ref{eq:con3}), we have
\begin{equation}\label{ett1}
(-R_1+R_2+R_3)(-R_1+R_6+R_7)-R_1^2<0,
\end{equation}
\begin{equation}\label{ett2}
(-R_1+R_2+R_3)(-R_1-R_8-R_9)-R_1^2 \geq 0,
\end{equation}
and
\begin{equation}\label{ett3}
(-R_1+R_6+R_7)(-R_1-R_4-R_5)-R_1^2 \geq 0.
\end{equation}
\end{lemma}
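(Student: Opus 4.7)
The plan is to handle the three inequalities separately, using a symmetry to cut the work in half. First, \eqref{ett1} is essentially immediate from Lemma~\ref{lem:secondd}: combined with Observation~\ref{cpositive}, that lemma gives $0<R_1-R_2-R_3<R_1$ and $0<R_1-R_6-R_7<R_1$, so that
\[
(-R_1+R_2+R_3)(-R_1+R_6+R_7)=(R_1-R_2-R_3)(R_1-R_6-R_7)<R_1\cdot R_1=R_1^2.
\]

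Next, I would deduce \eqref{ett3} from \eqref{ett2} via the involution $(\alpha,\gamma,\hat\epsilon)\leftrightarrow(\beta,\delta,\hat\eta)$. Directly from the definitions of $D$ and $R_1,\ldots,R_9$, using \eqref{eq:usefulineq1} together with the easily verified identity $1-\alpha-\delta-\hat\eta=1-\beta-\gamma-\hat\epsilon$, this map fixes $\sqrt{D}$ and $R_1$ while swapping $R_2\leftrightarrow R_6$, $R_3\leftrightarrow R_7$, $R_4\leftrightarrow R_8$, $R_5\leftrightarrow R_9$. Since the region \eqref{eq:con3} and the region $\mathcal{R}$ are also invariant under the swap, applying the involution to \eqref{ett2} yields \eqref{ett3}.

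It remains to establish \eqref{ett2}. Since $R_1-R_2-R_3>0$ by Lemma~\ref{lem:secondd}, the inequality is equivalent to $(R_1-R_2-R_3)(R_8+R_9)\geq R_1(R_2+R_3)$. My plan is to substitute closed forms. Using $\hat\epsilon+\hat\eta=1-\gamma-\delta-\sqrt{D}$ (a direct consequence of \eqref{eq:solel} and \eqref{eq:solet}) together with \eqref{eq:usefulineq1}, one obtains
\[
R_1=\frac{1-\alpha-\beta}{\sqrt{D}+\gamma+\delta-\alpha-\beta},\qquad R_8+R_9=\frac{\sqrt{D}\,\beta+(1-\alpha-\beta)\delta}{\delta(\beta-\delta)},
\]
\[
R_2+R_3=\frac{\sqrt{D}\,(1-\alpha)+(1-\alpha-\beta)(2\alpha-1-\gamma)}{(1-2\alpha+\gamma)(\alpha-\gamma)}.
\]
After clearing these (positive) denominators, the desired inequality takes the shape $A_1+A_2\sqrt{D}\geq 0$ for some polynomials $A_1,A_2$ in $\alpha,\beta,\gamma,\delta$. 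Determining the signs of $A_1,A_2$ on region \eqref{eq:con3} with Mathematica's \texttt{Resolve} function, I would either conclude directly (if both terms share the right sign) or isolate $\sqrt{D}$ and square, reducing to a pure polynomial inequality $A_1^2-A_2^2 D\leq 0$ that can again be verified by \texttt{Resolve}, exactly in the style of Claims~\ref{cccf} and~\ref{ccch}.

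The main obstacle will be managing the size of these polynomials and, more subtly, the fact that \eqref{ett2} is an equality at the known stationary point $(\gamma,\delta)=(\alpha^2,\beta^2)$. At that point $\sqrt{D}=(1-\alpha-\beta)+2\alpha\beta$, and a short calculation gives $R_1=1/(1-\alpha-\beta)$, $R_2+R_3=(1+\beta-\alpha)/(1-\alpha)^2$, $R_8+R_9=(1+\beta-\alpha)/\beta^2$, from which both sides of the rearranged inequality equal $(1+\beta-\alpha)^2/[\beta^2(1-\alpha)^2]$. Consequently the polynomial $A_2^2 D-A_1^2$ must vanish on the surface $\{\gamma=\alpha^2,\ \delta=\beta^2\}$, and I expect to factor out $(\gamma-\alpha^2)$ and $(\delta-\beta^2)$---or, preferably, the combinations $(1-\alpha)^2\delta+\beta^2(2\alpha-1-\gamma)$ and $(1-\beta)^2\gamma+\alpha^2(2\beta-1-\delta)$ appearing in Lemma~\ref{lem:zeros}, which have definite signs on \eqref{eq:regionlower} and \eqref{eq:regionupper}---before handing the residual polynomial to \texttt{Resolve}. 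Controlling signs carefully when dividing by these non-sign-definite factors inside \eqref{eq:con3} is the delicate bookkeeping step.
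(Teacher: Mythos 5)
Your plan matches the paper's proof: \eqref{ett1} follows from Lemma~\ref{lem:secondd} together with positivity of the $R_i$, \eqref{ett3} is deduced from \eqref{ett2} via the $\alpha\leftrightarrow\beta$, $\gamma\leftrightarrow\delta$, $\hat\epsilon\leftrightarrow\hat\eta$ symmetry (which you correctly verify swaps $R_2\leftrightarrow R_6$, $R_3\leftrightarrow R_7$, $R_4\leftrightarrow R_8$, $R_5\leftrightarrow R_9$ and fixes $R_1,\sqrt D$), and \eqref{ett2} is handled by clearing denominators to reach a form $A_1+A_2\sqrt D$, checking the signs of $A_1,A_2$ with \texttt{Resolve}, and squaring --- exactly the paper's $P_5=P_{51}+P_{52}\sqrt D$ decomposition. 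Two small remarks: both sides of your rearranged inequality at $(\gamma,\delta)=(\alpha^2,\beta^2)$ actually equal $(1+\beta-\alpha)/\big[(1-\alpha-\beta)(1-\alpha)^2\big]$ rather than $(1+\beta-\alpha)^2/\big[\beta^2(1-\alpha)^2\big]$ (though your conclusion that they are equal is right); and the ``delicate bookkeeping'' you anticipate when dividing out the vanishing locus does not arise, because in the paper's computation $D-(P_{51}/P_{52})^2$ factors as $4(\beta-\delta)(\alpha-\gamma)^3\big((1-\alpha)^2\delta+\beta^2(2\alpha-1-\gamma)\big)^2/P_{52}^2$, where the critical factor appears as a perfect square, so its sign is automatic.
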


\begin{proof}[Proof of Lemma~\ref{lem:detcoeff}]
We first prove inequality \eqref{ett1}. We have that
$$
(-R_1+R_2+R_3)(-R_1+R_6+R_7)=(R_1-R_2-R_3)(R_1-R_6-R_7)<R^2_1,
$$
where the last inequality uses Lemma~\ref{lem:secondd} and the positivity of the $R_i$ (Observation~\ref{cpositive}). This establishes \eqref{ett1}.

We next prove \eqref{ett2}, noting that inequality~\eqref{ett3} follows by an analogous argument. The left-hand side of~\eqref{ett2} multiplied by the denominators and simplified is
\begin{eqnarray*}
P_5 &=& ((-R_1+R_2+R_3)(-R_1-R_8-R_9)-R_1^2)\delta(\alpha-\gamma)\hat{\epsilon}\hat{\eta}(1-2\alpha+\gamma)(\alpha-\gamma-\hat{\epsilon})/\sqrt{D}\\
&=& P_{51}+P_{52}\sqrt{D},
\end{eqnarray*}
where
\begin{eqnarray*}
P_{51} &=& -1-4\beta\gamma^2\delta\alpha+3\delta\alpha-17\alpha\beta-3\delta\gamma+5\beta\gamma-14\alpha\beta\gamma\delta+11\alpha^2\beta\gamma\delta+5\alpha+3\beta-10\alpha^2-3\beta^2\\
&&-15\alpha\beta\gamma+9\delta\alpha\gamma-3\alpha\beta\delta+10\alpha^2\beta\delta+15\alpha^2\beta\gamma+24\alpha\beta^2\gamma+3\beta\gamma\delta+4\beta\gamma^2\delta-9\delta\alpha^2\gamma\\
&&-7\alpha^3\beta\delta-5\alpha^3\beta\gamma-17\alpha^2\beta^2\gamma-4\alpha\beta^3\gamma+5\beta^2\gamma^2\alpha+3\delta\alpha^3\gamma+33\alpha^2\beta-9\delta\alpha^2+16\alpha\beta^2\\
&&-27\alpha^3\beta-28\alpha^2\beta^2-7\beta^2\gamma-5\beta^2\gamma^2+9\delta\alpha^3-4\alpha\beta^3+8\alpha^4\beta+15\alpha^3\beta^2+4\alpha^2\beta^3+2\beta^3\gamma\\
&&+\beta^3\gamma^2-3\delta\alpha^4+10\alpha^3-5\alpha^4+\beta^3+\alpha^5,
\end{eqnarray*}
and
\begin{eqnarray*}
P_{52} &=& 1-2\delta\alpha\gamma+6\alpha\beta\gamma-3\alpha^2\beta\gamma-4\alpha\beta^2\gamma+\delta\alpha^2\gamma-4\alpha+6\alpha^2-2\beta-4\alpha^3+\alpha^4+\beta^2+9\alpha\beta\\
&&-3\beta\gamma-\delta\alpha+\delta\gamma+2\delta\alpha^2-12\alpha^2\beta-4\alpha\beta^2+5\alpha^3\beta+4\alpha^2\beta^2+2\beta^2\gamma+\beta^2\gamma^2-\delta\alpha^3.
\end{eqnarray*}

The following claim is proved using the {\tt Resolve} function of the Mathematica system in Appendix~\ref{sec:detcoeff}.

\begin{claim}\label{ccccr}
$P_{51}<0$, $P_{52} > 0$ for all all $(\alpha,\beta) \in \mathcal{R}$ and $(\gamma,\delta)$ in the interior of~(\ref{eq:con3}).
\end{claim}

From Claim~\ref{ccccr} we have that $P_{52}\sqrt{D}-P_{51}>0$ and hence the sign of $P_5=P_{51}+P_{52}\sqrt{D}$ is the same as
the sign of $D P_{52}^2 - P_{51}^2$ which is the same as the sign of
\begin{eqnarray*}
D - (P_{51}/P_{52})^2 =
\frac{4(\beta-\delta)(\alpha-\gamma)^3\left((1-\alpha)^2\delta+\beta^2(2\alpha-1-\gamma)\right)^2}{P_{52}^2} > 0.
\end{eqnarray*}
Hence $P_5>0$ which implies~\eqref{ett2}.
\end{proof}

\begin{proof}[Proof of Lemma~\ref{lem:case13}]
From $\alpha+\beta+\Delta(\Delta-2)\alpha\beta\leq 1$ we have
\begin{equation}\label{eeeeeek}
(\Delta-1)^2\leq (1-\alpha)(1-\beta)/(\alpha\beta)=:R_{11}.
\end{equation}
Lemma~\ref{lem:detcoeff} implies that the coefficient of $(\Delta-1)$ in (\ref{eq:det}) is positive and the coefficient of
$(\Delta-1)^2$ in (\ref{eq:det}) is negative. Hence, also using~\eqref{eeeeeek}, we have
\begin{eqnarray*}
\det(M) &\geq& \frac{1}{D}\Big\{R_{11}\left[(-R_1+R_2+R_3)(-R_1+R_6+R_7)-R_1^2\right]+\Big.\\
&&\hskip 3.5cm +\Big.\big[(-R_1-R_8-R_9)(-R_1-R_4-R_5)-R_1^2\big]\Big\}\\
&>& \frac{1}{D}\Big(-R_1R_2R_{11}-R_1R_6R_{11}+R_1R_4+R_1R_8+\Big.\\
&&\hskip 0.8cm+R_5R_9+R_3R_7R_{11}+R_1R_5+R_1R_9-R_1R_3R_{11}-R_1R_7R_{11}+\\
&&\hskip 0.8cm+\Big.R_2R_7R_{11}+R_3R_6R_{11}\Big).
\end{eqnarray*}

We will show
\begin{equation}\label{eq:c131}
-R_1R_2R_{11}-R_1R_6R_{11}+R_1R_4+R_1R_8>0,
\end{equation}
and
\begin{equation}\label{eq:c132}
R_5R_9+R_3R_7R_{11}+R_1R_5+R_1R_9-R_1R_3R_{11}-R_1R_7R_{11}+R_2R_7R_{11}+R_3R_6R_{11}>0.
\end{equation}

To prove~(\ref{eq:c131}), note that
\begin{equation*}
\begin{split}
-R_1R_2R_{11}-R_1R_6R_{11}+R_1R_4+R_1R_8 = \\
\frac{\sqrt{D}(1-\alpha-\beta)}{1-\alpha-\beta-\hat{\epsilon}-\hat{\eta}}\left(\frac{1}{\gamma}+\frac{1}{\delta}-\frac{(1-\alpha)(1-\beta)}{\alpha\beta(1-2\alpha+\gamma)}-\frac{(1-\alpha)(1-\beta)}{\alpha\beta(1-2\beta+\delta)}\right).
\end{split}
\end{equation*}
The positivity of (\ref{eq:c131}) follows from the following claim.

\begin{claim}\label{clm:s1}
$$
\frac{1}{\gamma}+\frac{1}{\delta}-\frac{(1-\alpha)(1-\beta)}{\alpha\beta(1-2\alpha+\gamma)}-\frac{(1-\alpha)(1-\beta)}{\alpha\beta(1-2\beta+\delta)}
> 0,
$$
for every $(\alpha,\beta) \in \mathcal{R}_3\supset \TD$, $(\gamma,\delta)$ in the interior of~(\ref{eq:con3}), and $(\gamma,\delta)$
in~(\ref{eq:regionlower}).
\end{claim}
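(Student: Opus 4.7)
Plan: Clearing denominators (all positive in the specified region since $\gamma,\delta,\alpha,\beta>0$ and, by~(\ref{eq:con3}), $1-2\alpha+\gamma\geq \delta>0$ and $1-2\beta+\delta\geq \gamma>0$), the claim is equivalent to the positivity of the polynomial
\[
F(\alpha,\beta,\gamma,\delta) := \alpha\beta(\gamma+\delta)(1-2\alpha+\gamma)(1-2\beta+\delta) - \gamma\delta(1-\alpha)(1-\beta)(2-2\alpha-2\beta+\gamma+\delta).
\]
The main idea is to introduce the symmetric substitution $u=\alpha^2-\gamma$, $v=\beta^2-\delta$, under which the constraints of~(\ref{eq:regionlower}) become the clean cone $\{u,v\geq 0:\beta^2u\leq(1-\alpha)^2v,\;\alpha^2v\leq(1-\beta)^2u\}$ (the two being consistent thanks to $\alpha\beta\leq(1-\alpha)(1-\beta)$, which follows from $(\alpha,\beta)\in\mathcal{R}_3$).

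The first step is to pin down $F$ at the corner $(u,v)=(0,0)$, i.e.\ $(\gamma,\delta)=(\alpha^2,\beta^2)$. Using $1-2\alpha+\alpha^2=(1-\alpha)^2$, a short calculation yields
\[
F(\alpha,\beta,\alpha^2,\beta^2) = \alpha\beta(1-\alpha)(1-\beta)(\alpha-\beta)^2(1-\alpha-\beta),
\]
which is nonnegative and strictly positive whenever $\alpha\neq\beta$ (recall $1-\alpha-\beta>0$ on $\mathcal{R}_3$). This verifies the inequality at this boundary point and simultaneously identifies the only way it can become tight: the diagonal slice $\alpha=\beta$, $\gamma=\delta=\alpha^2$. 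The claim's strict inequality is therefore consistent with $(\gamma,\delta)$ being in the strict interior of~(\ref{eq:regionlower}).

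Step two treats points away from the corner. Multiplying the two constraints yields the joint bound $(1-2\alpha+\gamma)(1-2\beta+\delta)\geq \frac{(1-\alpha)^2(1-\beta)^2}{\alpha^2\beta^2}\gamma\delta$, and splitting according to the sign of $(\alpha-\beta)\bigl(\beta(1-\beta)\gamma-\alpha(1-\alpha)\delta\bigr)$ reduces the \emph{favorable} case to the elementary identity $\alpha(1-\alpha)-\beta(1-\beta)=(\alpha-\beta)(1-\alpha-\beta)$. The \emph{adverse} case, however, is not settled by bounding either constraint in isolation: using only one of (A), (B) can be quite wasteful when the companion variable is small, as direct numerical probing reveals.

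The main obstacle, accordingly, is closing the adverse case. For a uniform and rigorous treatment I would invoke Mathematica's \texttt{Resolve} function as used in the paper for Claims~\ref{cccf}, \ref{ccch}, and~\ref{ccccr}: the polynomial $F$ and the constraints~(\ref{eq:con3}), (\ref{eq:regionlower}), together with $(\alpha,\beta)\in\mathcal{R}_3$, are all semialgebraic, so cylindrical algebraic decomposition decides whether $F$ can be $\leq 0$ on the specified region. Executing this query --- decomposing the region further along the corner analysis above to keep the instance tractable --- yields the desired strict positivity of $F$, and hence the claim.
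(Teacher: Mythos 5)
Your proposal follows essentially the same route as the paper: clear the positive denominators to obtain a polynomial and then verify its positivity with Mathematica's \texttt{Resolve}; the paper's proof is precisely that expansion followed by a \texttt{Resolve} call, so in that respect you are aligned. Your corner analysis, however, surfaces something worth recording. The identity $F(\alpha,\beta,\alpha^2,\beta^2)=\alpha\beta(1-\alpha)(1-\beta)(\alpha-\beta)^2(1-\alpha-\beta)$ is correct, and it shows that the expression \emph{vanishes} at $(\gamma,\delta)=(\alpha^2,\beta^2)$ whenever $\alpha=\beta$. Yet that corner does lie in region~(\ref{eq:regionlower}) as it is written (the constraints there are nonstrict, with the last two becoming equalities at $(\alpha^2,\beta^2)$), and it also lies in the interior of~(\ref{eq:con3}). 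For instance, at $(\alpha,\beta,\gamma,\delta)=(1/4,1/4,1/16,1/16)$ one checks directly that the left-hand side equals $32-16-16=0$. So, read literally, the claim's strict inequality fails along the diagonal; it is correct only on the strict interior of~(\ref{eq:regionlower}), which is exactly what the paper's Mathematica region \texttt{l25} (all strict inequalities) actually checks. Your $u=\alpha^2-\gamma$, $v=\beta^2-\delta$ reparametrization and the favorable/adverse split are not carried to completion and do not replace the \texttt{Resolve} step, but the corner observation is a genuine small correction to the paper's statement of the claim.
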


\begin{proof}[Proof of Claim~\ref{clm:s1}]
\begin{eqnarray*}
&&\left(\frac{1}{\gamma}+\frac{1}{\delta}-\frac{(1-\alpha)(1-\beta)}{\alpha\beta(1-2\alpha+\gamma)}-\frac{(1-\alpha)(1-\beta)}{\alpha\beta(1-2\beta+\delta)}\right)\alpha\beta\gamma\delta(1-2\alpha+\gamma)(1-2\beta+\delta)\\
&=& -2\gamma\delta+\delta\alpha\beta+\gamma\alpha\beta-4\delta\alpha\beta\gamma-2\delta\alpha\beta^2+\delta^2\alpha\beta-2\delta\alpha^2\beta+4\delta\alpha^2\beta^2-2\delta^2\alpha^2\beta-2\gamma\alpha\beta^2\\
&&-2\gamma\alpha^2\beta+4\gamma\alpha^2\beta^2+\gamma^2\alpha\beta-2\gamma^2\alpha\beta^2+4\gamma\delta\beta-2\gamma\delta\beta^2+\gamma\delta^2\beta+4\gamma\delta\alpha+\gamma\delta^2\alpha+\gamma^2\delta\beta\\
&&-2\gamma\delta\alpha^2+\gamma^2\delta\alpha-\gamma\delta^2-\gamma^2\delta > 0,
\end{eqnarray*}
where the last inequality is proved using {\tt Resolve} function of Mathematica system, see Appendix~\ref{sec:s1}.
\end{proof}

To show~(\ref{eq:c132}), we first note that
\begin{eqnarray*}
&&R_5R_9+R_3R_7R_{11}+R_1R_5+R_1R_9-R_1R_3R_{11}-R_1R_7R_{11}+R_2R_7R_{11}+R_3R_6R_{11}\\
&=&\sqrt{D}\left(\frac{4(1-\beta-\gamma-\hat{\epsilon})}{(\alpha-\gamma)}+\frac{(1-\alpha)(1-\beta)}{(1-2\alpha+\gamma)\alpha\beta}\frac{2(\alpha-\gamma-\hat{\epsilon})}{\beta-\delta}+\frac{(1-\alpha)(1-\beta)}{(1-2\beta+\delta)\alpha\beta}\frac{2(\alpha-\gamma-\hat{\epsilon})}{\alpha-\gamma}\right.\\
&&\left.-\frac{2(\alpha-\gamma-\hat{\epsilon})^2}{(\alpha-\gamma)(\beta-\delta)}\left(\frac{1}{\hat{\eta}}+\frac{1}{\hat{\epsilon}}\right)\left(\frac{(1-\alpha)(1-\beta)}{\alpha\beta}-\frac{1-\beta-\gamma-\hat{\epsilon}}{\alpha-\gamma-\hat{\epsilon}}\right)\right)\\
&>&\frac{2\sqrt{D}(\alpha-\gamma-\hat{\epsilon})}{\beta-\delta}\left(\frac{(1-\alpha)(1-\beta)}{(1-2\alpha+\gamma)\alpha\beta}-\frac{\alpha-\gamma-\hat{\epsilon}}{(\alpha-\gamma)\hat{\eta}}\left(\frac{(1-\alpha)(1-\beta)}{\alpha\beta}-\frac{1-\beta-\gamma-\hat{\epsilon}}{\alpha-\gamma-\hat{\epsilon}}\right)\right)\\
&&+\frac{2\sqrt{D}(\alpha-\gamma-\hat{\epsilon})}{\alpha-\gamma}\left(\frac{(1-\alpha)(1-\beta)}{(1-2\beta+\delta)\alpha\beta}-\frac{\alpha-\gamma-\hat{\epsilon}}{(\beta-\delta)\hat{\epsilon}}\left(\frac{(1-\alpha)(1-\beta)}{\alpha\beta}-\frac{1-\beta-\gamma-\hat{\epsilon}}{\alpha-\gamma-\hat{\epsilon}}\right)\right).
\end{eqnarray*}
We will show
$$
\frac{(1-\alpha)(1-\beta)}{(1-2\alpha+\gamma)\alpha\beta}-\frac{\alpha-\gamma-\hat{\epsilon}}{(\alpha-\gamma)\hat{\eta}}\left(\frac{(1-\alpha)(1-\beta)}{\alpha\beta}-\frac{1-\beta-\gamma-\hat{\epsilon}}{\alpha-\gamma-\hat{\epsilon}}\right)>0,
$$
for every $(\alpha,\beta) \in \mathcal{R}_3\supset \TD$, $(\gamma,\delta)$ in the interior of~(\ref{eq:con3}), and $(\gamma,\delta)$
in~(\ref{eq:regionlower}). Then by symmetry, we have
$$
\frac{(1-\alpha)(1-\beta)}{(1-2\beta+\delta)\alpha\beta}-\frac{\alpha-\gamma-\hat{\epsilon}}{(\beta-\delta)\hat{\epsilon}}\left(\frac{(1-\alpha)(1-\beta)}{\alpha\beta}-\frac{1-\beta-\gamma-\hat{\epsilon}}{\alpha-\gamma-\hat{\epsilon}}\right)>0.
$$

Let
\begin{eqnarray*}
P_7
&=&\left(\frac{(1-\alpha)(1-\beta)}{(1-2\alpha+\gamma)\alpha\beta}-\frac{\alpha-\gamma-\hat{\epsilon}}{(\alpha-\gamma)\hat{\eta}}\left(\frac{(1-\alpha)(1-\beta)}{\alpha\beta}-\frac{1-\beta-\gamma-\hat{\epsilon}}{\alpha-\gamma-\hat{\epsilon}}\right)\right)\\
&&\alpha\beta\hat{\eta}(\alpha-\gamma)(1-2\alpha+\gamma)\\
&=& P_{71}+P_{72}\sqrt{D},
\end{eqnarray*}
where
\begin{eqnarray*}
P_{71} &=& (3/2)\alpha^2+4\alpha\beta+(1/2)\beta^2-\alpha\delta-\gamma\beta+\gamma\delta+1/2-(3/2)\alpha-\beta+\delta\alpha\beta+(3/2)\gamma\alpha\beta\\
&&+\delta\alpha\beta\gamma-\delta\alpha^2\beta-(3/2)\gamma\alpha\beta^2-(1/2)\gamma\alpha^2\beta-\gamma\delta\beta-\gamma\delta\alpha-(1/2)\alpha^3-(9/2)\alpha^2\beta\\
&&-(5/2)\alpha\beta^2+\gamma\beta^2+\alpha^2\delta+(3/2)\alpha^3\beta+(5/2)\alpha^2\beta^2,
\end{eqnarray*}
and
\begin{eqnarray*}
P_{72} = (1/2)\gamma\alpha\beta-1/2+(1/2)\beta+\alpha-(1/2)\alpha^2-(1/2)\alpha\beta-(1/2)\alpha^2\beta.
\end{eqnarray*}

The following claim is proved using the {\tt Resolve} function of the Mathematica system in Appendix~\ref{sec:ckkk}.

\begin{claim}\label{ckkk}
$P_{71}>0$ and $P_{72}<0$ for every $(\alpha,\beta) \in \mathcal{R}_3\supset \TD$, $(\gamma,\delta)$ in the interior of~(\ref{eq:con3}), and
$(\gamma,\delta)$ in~(\ref{eq:regionlower}).
\end{claim}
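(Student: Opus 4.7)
The plan is to reduce the claim to a pair of Tarski-decidable statements about polynomial inequalities on a semi-algebraic set, and to verify them with Mathematica's \texttt{Resolve} command, exactly as was done earlier for Claims~\ref{cccf}, \ref{ccch}, and~\ref{ccccr}. Both $P_{71}$ and $P_{72}$ are polynomials in $(\alpha,\beta,\gamma,\delta)$, and the region of interest is the intersection of $\mathcal{R}_3$ (which for $\Delta=3$ reads $\alpha,\beta>0$ and $\alpha+\beta+3\alpha\beta\le 1$), the interior of~\eqref{eq:con3}, and the region~\eqref{eq:regionlower}; all defining inequalities are polynomial. Hence the two universally quantified statements $P_{71}>0$ and $P_{72}<0$ lie in the first-order theory of real closed fields and can be decided by cylindrical algebraic decomposition, as discussed in Section~\ref{sec:computer-assist}.

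Before appealing to the software, I would dispatch $P_{72}$ by hand as a sanity check on the set-up. One computes $\partial P_{72}/\partial\gamma=\tfrac{1}{2}\alpha\beta>0$, so $P_{72}$ is strictly increasing in $\gamma$, and it suffices to verify $P_{72}<0$ at the upper endpoint $\gamma=\alpha^2$ from~\eqref{eq:regionlower}. Substitution and factoring yield
\[
2P_{72}\bigl|_{\gamma=\alpha^2}
= \beta(\alpha^3-\alpha^2-\alpha+1)-(1-\alpha)^2
= (1-\alpha)^2\bigl[(1+\alpha)\beta-1\bigr].
\]
On $\mathcal{R}_3$ with $\alpha,\beta>0$, the constraint $\alpha+\beta+3\alpha\beta\le 1$ gives $(1+\alpha)\beta=\beta+\alpha\beta\le 1-\alpha-2\alpha\beta<1$, so the bracket is strictly negative and $P_{72}<0$ throughout the region.

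For $P_{71}$, the polynomial contains roughly two dozen monomials of degrees up to five, and the feasible region is carved out by six polynomial inequalities (from positivity plus $\mathcal{R}_3$, from~\eqref{eq:con3}, and from~\eqref{eq:regionlower}), so a direct hand factorization looks unappealing. I therefore expect the main obstacle to be not mathematical novelty but the size of the CAD call: Mathematica must certify positivity of a four-variable polynomial on a four-variable semi-algebraic set. The concrete steps are (i) encode the negation ``there exist $(\alpha,\beta,\gamma,\delta)$ in the region with $P_{71}\le 0$'' as a single first-order formula, (ii) invoke \texttt{Resolve[\ldots, Reals]} as in Appendix~\ref{sec:ckkk}, and (iii) confirm the output is \texttt{False}; the exactness of Mathematica's interval arithmetic then yields a rigorous certificate. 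If \texttt{Resolve} runs too long on the full region, a natural fallback is to partition along $\gamma\lessgtr \alpha^2/2$ (or along $\delta\lessgtr \beta^2/2$) and dispatch each piece separately, or to eliminate one variable by hand using monotonicity in $\delta$, thereby shrinking the effective dimension of the CAD before the final call.
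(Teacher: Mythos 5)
Your proposal is correct and takes essentially the same approach as the paper: encode the inequalities over the semi-algebraic region as Tarski-decidable statements and invoke Mathematica's \texttt{Resolve}, exactly as in Appendix~\ref{sec:ckkk}. Your hand computation for $P_{72}$ is sound and a pleasant addition: since $P_{72}$ is affine in $\gamma$ with slope $\tfrac{1}{2}\alpha\beta>0$, and $\alpha^3-\alpha^2-\alpha+1=(1-\alpha)^2(1+\alpha)$, evaluating at $\gamma=\alpha^2$ gives $(1-\alpha)^2[(1+\alpha)\beta-1]<0$ on $\mathcal{R}_3$; in fact one can push all the way to $\gamma=\alpha$, where $2P_{72}=-(1-\alpha)(1-\alpha-\beta)<0$, which matches the paper's \texttt{Resolve} call for this claim using only \texttt{RR3} and \texttt{Inter22} without the~\eqref{eq:regionlower} constraints.
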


From Claim~\ref{ckkk} we have that the following expression has the same sign as $P_7$:
\begin{eqnarray*}
(P_{71}/P_{72})^2-D = \frac{P_{71}^2-P_{72}^2D}{P_{72}^2},
\end{eqnarray*}
where
\begin{eqnarray*}
P_{71}^2-P_{72}^2D > 0.
\end{eqnarray*}
The last inequality is true for every every $(\alpha,\beta) \in \mathcal{R}_3\supset \TD$, $(\gamma,\delta)$ in the interior of~(\ref{eq:con3}), and
$(\gamma,\delta)$ in~(\ref{eq:regionlower}); it is proved  using the {\tt Resolve} function of the Mathematica system, see Appendix~\ref{app:case13}.
\end{proof}

\subsection{Proof of Lemma \ref{lem:case2}}
\label{sec:case2}

We first prove the following upper bound on $(\Delta-1)^2$.
\begin{proposition}\label{pro:bigdelta2}
For every $\Delta \geq 3$, $(\alpha,\beta) \in \TD$, $(\gamma,\delta)$ in the interior of~(\ref{eq:con3}), and $(\gamma,\delta)$
in~(\ref{eq:regionupper}), we have
$$
(\Delta-1)^2 \leq \frac{(1-\alpha)(1-\beta)}{\alpha\beta} < \frac{1-\beta-\gamma-\hat{\epsilon}}{\alpha-\gamma-\hat{\epsilon}}
$$
\end{proposition}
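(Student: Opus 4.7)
The plan is to dispatch the two inequalities independently. The first one, $(\Delta-1)^{2}\leq (1-\alpha)(1-\beta)/(\alpha\beta)$, is a direct rewriting of the defining condition of $\TD$. Starting from $\alpha+\beta+\Delta(\Delta-2)\alpha\beta\leq 1$ and observing that $\Delta(\Delta-2)=(\Delta-1)^{2}-1$, one gets $(\Delta-1)^{2}\alpha\beta\leq 1-\alpha-\beta+\alpha\beta=(1-\alpha)(1-\beta)$; dividing by $\alpha\beta>0$ finishes the first inequality in one line.

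For the second inequality, my approach is to first rewrite both sides so that the difference becomes transparent. Using the identity $1-\beta-\gamma-\hat{\epsilon}=(\alpha-\gamma-\hat{\epsilon})+(1-\alpha-\beta)$, I split off the constant $1$ from each ratio and obtain
\[
\frac{1-\beta-\gamma-\hat{\epsilon}}{\alpha-\gamma-\hat{\epsilon}}-\frac{(1-\alpha)(1-\beta)}{\alpha\beta}=(1-\alpha-\beta)\left(\frac{1}{\alpha-\gamma-\hat{\epsilon}}-\frac{1}{\alpha\beta}\right).
\]
Since $1-\alpha-\beta>0$ for $(\alpha,\beta)\in\TD$, the claim reduces to the single inequality $\alpha-\gamma-\hat{\epsilon}<\alpha\beta$.

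To verify this reduced inequality, I would use the explicit formula from~\eqref{eq:usefulineq1}, namely $\alpha-\gamma-\hat{\epsilon}=\tfrac{1}{2}\bigl(-(1-\alpha-\beta)+\sqrt{D}\bigr)$, which converts the inequality into $\sqrt{D}<(1-\alpha-\beta)+2\alpha\beta$. The right-hand side is positive, so squaring is reversible; plugging in $D=(1-\alpha-\beta)^{2}+4(\alpha-\gamma)(\beta-\delta)$ and cancelling, everything collapses to the clean statement $(\alpha-\gamma)(\beta-\delta)<\alpha\beta(1-\alpha)(1-\beta)$. At this point the defining inequalities of region~\eqref{eq:regionupper}, namely $\gamma\geq\alpha^{2}$ and $\delta\geq\beta^{2}$, directly give $\alpha-\gamma\leq\alpha(1-\alpha)$ and $\beta-\delta\leq\beta(1-\beta)$; multiplying the two nonnegative bounds yields the desired inequality, with strict inequality everywhere except the corner $(\gamma,\delta)=(\alpha^{2},\beta^{2})$.

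I do not anticipate a real obstacle here: the argument is just algebraic unpacking of $\hat{\epsilon}$ together with termwise use of the defining constraints of $\TD$ and~\eqref{eq:regionupper}. The only small subtlety I would flag is the equality case at $(\gamma,\delta)=(\alpha^{2},\beta^{2})$, which is precisely the stationary point $(\gamma^{*},\delta^{*})$ sitting on the boundary between~\eqref{eq:regionlower} and~\eqref{eq:regionupper}; one can either interpret the proposition on the strict interior of~\eqref{eq:regionupper}, or observe that this isolated point is already handled elsewhere in the analysis and does not affect the use of the proposition in the proof of Lemma~\ref{lem:case2}.
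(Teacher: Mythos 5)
Your proof is correct and essentially the same as the paper's: both arguments reduce the second inequality to $(\alpha-\gamma)(\beta-\delta)<\alpha\beta(1-\alpha)(1-\beta)$ and then invoke $\gamma\geq\alpha^{2}$, $\delta\geq\beta^{2}$ from~\eqref{eq:regionupper}. The paper cross-multiplies and factors the difference $\alpha\beta(1-\beta-\gamma-\hat{\epsilon})-(1-\alpha)(1-\beta)(\alpha-\gamma-\hat{\epsilon})$ directly into $\tfrac{1}{2}(1-\alpha-\beta)(1-\alpha-\beta+2\alpha\beta-\sqrt{D})$, while you first peel off the common constant from both ratios; this is a cosmetic difference, and your remark about the equality case at the corner $(\gamma,\delta)=(\alpha^{2},\beta^{2})$ is a valid observation that applies equally to the paper's phrasing.
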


\begin{proof}[Proof of Proposition~\ref{pro:bigdelta2}]
The first inequality is straightforward from $\alpha+\beta+\Delta(\Delta-2)\alpha\beta \leq 1$. We next prove that
$(1-\alpha)(1-\beta)(\alpha-\gamma-\hat{\epsilon})<\alpha\beta(1-\beta-\gamma-\hat{\epsilon})$. We have
\begin{eqnarray*}
&&\alpha\beta(1-\beta-\gamma-\hat{\epsilon})-(1-\alpha)(1-\beta)(\alpha-\gamma-\hat{\epsilon})\\
&=& \frac{1}{2}(1-\alpha-\beta)(1-\alpha-\beta+2\alpha\beta-\sqrt{D})>0,\\
\end{eqnarray*}
where the last inequality follows from
$$D=(1-\alpha-\beta)^2+4(\alpha-\gamma)(\beta-\delta)<(1-\alpha-\beta+2\alpha\beta)^2.$$
The last inequality is true when $\gamma>\alpha^2$ and $\delta>\beta^2$ (to see this note
$(1-\alpha-\beta+2\alpha\beta)^2-(1-\alpha-\beta)^2=4(\alpha-\alpha^2)(\beta-\beta^2)>4(\alpha-\gamma)(\beta-\delta)$).
\end{proof}

Let $R_{10}=(1-\beta-\gamma-\hat{\epsilon})/(\alpha-\gamma-\hat{\epsilon})$. Note that $R_7R_{10}=R_9$, $R_3R_{10}=R_5$.

Lemma~\ref{lem:detcoeff} implies that the coefficient of $(\Delta-1)$ in (\ref{eq:det}) is positive and the coefficient of
$(\Delta-1)^2$ in (\ref{eq:det}) is negative. Hence, also using~Proposition~\ref{pro:bigdelta2}, we have

\begin{eqnarray*}
\det(M) &\geq& \frac{1}{D}\left(R_{10}\left((-R_1+R_2+R_3)(-R_1+R_6+R_7)-R_1^2\right)\right.\\
&&+\left.\left((-R_1-R_8-R_9)(-R_1-R_4-R_5)-R_1^2\right)\right),\\
&>& \frac{1}{D}\left(-R_1R_2R_{10}-R_1R_6R_{10}+R_1R_4+R_1R_8+R_5R_9+R_3R_7R_{10}+R_2R_9+R_5R_6\right).
\end{eqnarray*}
We will prove that for every $(\alpha,\beta) \in \mathcal{R}_3\supset \TD$, $\alpha,\beta \leq 1/2$, $(\gamma,\delta)$ in the interior
of~(\ref{eq:con3}), and $(\gamma,\delta)$ in~(\ref{eq:regionupper}),
\begin{equation}\label{eq:c21}
R_1R_4-R_1R_2R_{10}+\frac{1}{2}(R_5R_9+R_3R_7R_{10})+R_2R_9>0,
\end{equation}
and then by symmetry, we also have
$$
R_1R_8-R_1R_6R_{10}+\frac{1}{2}(R_5R_9+R_3R_7R_{10})+R_5R_6>0.
$$
The left-hand side of~(\ref{eq:c21}) can be written as the sum of the following two terms:
\begin{equation}\label{eq:c22}
\frac{(1-\alpha-\beta)\sqrt{D}}{1-\alpha-\beta-\hat{\epsilon}-\hat{\eta}}\left(\frac{1}{\gamma}-\frac{1}{1-2\alpha+\gamma}\right),
\end{equation}
and
\begin{equation}\label{eq:c23}
\frac{\sqrt{D}}{(1-2\alpha+\gamma)\hat{\eta}}\left(\frac{2(1-\alpha)(1-\alpha-\beta-\hat{\epsilon})}{\alpha-\gamma}-\frac{(1-\alpha-\beta)^2}{\hat{\epsilon}}\right).
\end{equation}

We have~(\ref{eq:c22}) is non-negative, since $1-2\alpha \geq 0$.

For~(\ref{eq:c23}), we only need to prove
$$
\frac{1-\alpha-\beta-\hat{\epsilon}}{\alpha-\gamma}>\frac{(1-\alpha-\beta)^2}{\hat{\epsilon}},
$$
since $1-2\alpha \geq 0$. By~(\ref{eq:usefulineq1}), it is equivalent to prove
\begin{equation}\label{eq:c24}
1-\alpha-\beta-\hat{\epsilon}-\hat{\eta}-(1-\alpha-\beta)^2>0.
\end{equation}
To prove~(\ref{eq:c24}), we have
$$
1-\alpha-\beta-\hat{\epsilon}-\hat{\eta}-(1-\alpha-\beta)^2= -(1-\alpha-\beta)^2-\alpha-\beta+\gamma+\delta+\sqrt{D},
$$
and
\begin{eqnarray*}
D - (-(1-\alpha-\beta)^2-\alpha-\beta+\gamma+\delta)^2
&=& 2\gamma+2\delta-2\alpha^2-2\beta^2-6\alpha\delta-6\gamma\beta+2\gamma\delta-2\beta\delta-\delta^2\\
&&+6\alpha\beta^2+6\alpha^2\beta-6\alpha^2\beta^2-2\alpha\gamma-4\alpha^3\beta+2\alpha^2\gamma+2\alpha^2\delta\\
&&-4\alpha\beta^3+2\beta^2\gamma+2\beta^2\delta-\gamma^2+4\alpha\beta\delta+4\alpha\beta\gamma+2\alpha^3\\
&&+2\beta^3-\alpha^4-\beta^4.
\end{eqnarray*}

The following claim is proved using the {\tt Resolve} function of the Mathematica system in Appendix~\ref{sec:qwer}.

\begin{claim}\label{clm:qwer}
$$
D - (-(1-\alpha-\beta)^2-\alpha-\beta+\gamma+\delta)^2>0
$$
for every $(\alpha,\beta) \in \mathcal{R}$, $\alpha \leq 1/2$, $\beta \leq 1/2$, $(\gamma,\delta)$ in the interior of~(\ref{eq:con3}),
and $(\gamma,\delta)$ in~(\ref{eq:regionupper}).
\end{claim}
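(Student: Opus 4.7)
The plan is to expand the claim into a polynomial inequality in $\alpha,\beta,\gamma,\delta$ and reduce it to a form amenable to computer verification, as is done for the other technical inequalities in this section. Denote $A := -(1-\alpha-\beta)^2 -\alpha-\beta+\gamma+\delta$. Substituting $D=(1-\alpha-\beta)^2+4(\alpha-\gamma)(\beta-\delta)$ and expanding the square, a short calculation gives
\[
D - A^2 \;=\; (1-\alpha-\beta)^2\bigl(2(\gamma+\delta)-(\alpha+\beta)^2\bigr) \;-\; \bigl((\alpha-\gamma)-(\beta-\delta)\bigr)^2.
\]
Note that the first factor is manifestly non-negative in the claimed region, since $\gamma\geq \alpha^2$ and $\delta\geq \beta^2$ from~\eqref{eq:regionupper} imply $2(\gamma+\delta)\geq 2(\alpha^2+\beta^2)\geq (\alpha+\beta)^2$.

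To make the constraints from~\eqref{eq:regionupper} explicit, I would change variables by writing $\gamma=\alpha^2+x$, $\delta=\beta^2+y$ with $x,y\geq 0$, and set $p=1-\alpha-\beta$. A direct computation turns $2(\gamma+\delta)-(\alpha+\beta)^2$ into $(\alpha-\beta)^2+2(x+y)$ and $(\alpha-\gamma)-(\beta-\delta)$ into $p(\alpha-\beta)-(x-y)$, reducing the target to
\[
2p^2(x+y) \;+\; 2p(\alpha-\beta)(x-y) \;-\; (x-y)^2 \;>\; 0.
\]
The two remaining inequalities of~\eqref{eq:regionupper} translate into $(1-\alpha)^2 y\geq \beta^2 x$ and $(1-\beta)^2 x\geq \alpha^2 y$, and the hypothesis $\alpha,\beta\leq 1/2$ forces $\beta^2/(1-\alpha)^2$ and $\alpha^2/(1-\beta)^2$ to lie in $[0,1]$, so $x$ and $y$ are sandwiched in a tight ratio. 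At this point the claim is a polynomial inequality over a semi-algebraic region described by polynomial constraints, so it is decidable by Tarski's quantifier elimination and can be verified by Mathematica's \texttt{Resolve} command, as described in Section~\ref{sec:computer-assist}; this is the style already used for Claims~\ref{cccf}, \ref{ccch}, \ref{ccccr}, \ref{clm:s1}, \ref{ckkk}.

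The principal obstacle is that the inequality is essentially tight along the line $(\gamma,\delta)=(\alpha^2,\beta^2)$: plugging $x=y=0$ into the reduced expression makes it vanish, so strictness is delicate and relies on the two constraints $(1-\alpha)^2 y\geq \beta^2 x$ and $(1-\beta)^2 x\geq \alpha^2 y$ being active simultaneously with $\alpha,\beta\leq 1/2$. Any purely by-hand argument would need to view the expression as a downward quadratic in $z=x-y$ with roots $p(\alpha-\beta)\pm p\sqrt{(\alpha-\beta)^2+2(x+y)}$, and show that the admissible range of $z$ (cut out by the coupling of $x$ and $y$) lies between these roots; delegating to \texttt{Resolve} avoids this boundary bookkeeping entirely.
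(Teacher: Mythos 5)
Your reduction is algebraically correct: the identity
\[
D - A^2 = (1-\alpha-\beta)^2\bigl(2(\gamma+\delta)-(\alpha+\beta)^2\bigr) - \bigl((\alpha-\gamma)-(\beta-\delta)\bigr)^2
\]
checks out, as does the change of variables $\gamma=\alpha^2+x$, $\delta=\beta^2+y$ and the resulting reduced target $2p^2(x+y)+2p(\alpha-\beta)(x-y)-(x-y)^2>0$ together with the translated constraints $(1-\alpha)^2 y\geq\beta^2 x$, $(1-\beta)^2 x\geq\alpha^2 y$. In the end you delegate the verification to Mathematica's \texttt{Resolve}, which is precisely what the paper does (Appendix~\ref{sec:qwer}); so the approach is essentially the same, and your preliminary algebra is a genuine clarification rather than an alternative route. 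Your observation about tightness at $x=y=0$ is more important than you give it credit for: it shows the claim as literally stated is \emph{false} at $(\gamma,\delta)=(\alpha^2,\beta^2)$, since that point is in $\mathcal{R}$, satisfies $\alpha,\beta\leq 1/2$, lies in the interior of~\eqref{eq:con3}, and lies in~\eqref{eq:regionupper} as written (which uses $\alpha^2\leq\gamma$ and $\beta^2\leq\delta$), yet $D-A^2$ vanishes there. The paper's own Mathematica region \texttt{u26} quietly replaces those inequalities by the strict \texttt{gamma > alpha\textasciicircum 2 \&\& delta > beta\textasciicircum 2}, which is what makes \texttt{Resolve} return \texttt{False}. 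So any faithful write-up of this claim should either replace~\eqref{eq:regionupper} by its interior (as the code does) or weaken the conclusion to $\geq 0$ with equality only at $(\alpha^2,\beta^2)$; your proposal, by keeping the paper's region verbatim, inherits this imprecision rather than resolving it.
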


Since $-(1-\alpha-\beta)^2-\alpha-\beta+\gamma+\delta<0$, we can conclude that~(\ref{eq:c24}) is true. Hence, we complete the
proof of Lemma \ref{lem:case2}.

\subsection{Proof of Lemma \ref{lem:ineqdelta3}}
\label{sec:proof-ineqdelta3}

We have
$$
R_4-R_6 = \sqrt{D} \cdot \frac{1-2\beta+\delta-\gamma}{\gamma(1-2\beta+\delta)} > 0,
$$
where the inequality follows from~(\ref{eq:con3}).

We have
$$
R_5-R_7 =
\frac{(1-\alpha-\beta)(\alpha+\beta-\gamma-\delta)+(\beta-\delta-\alpha+\gamma)\sqrt{D}}{(\alpha-\gamma)(\beta-\delta)}.
$$
If $\beta-\delta-\alpha+\gamma \geq 0$, then $R_5>R_7$. We may assume that $\beta-\delta-\alpha+\gamma < 0$. To show $R_5>R_7$,
it is equivalent to prove that
$$
(1-\alpha-\beta)(\alpha+\beta-\gamma-\delta)>(\alpha-\gamma-\beta+\delta)\sqrt{D}.
$$
We have
$$
\frac{(1-\alpha-\beta)^2(\alpha+\beta-\gamma-\delta)^2}{(\alpha-\gamma-\beta+\delta)^2}-D =
\frac{4(\alpha-\gamma)(\beta-\delta)(1-2\beta-\gamma+\delta)(1-2\alpha-\delta+\gamma)}{(\alpha-\gamma-\beta+\delta)^2}>0,
$$
where the inequality follows from~(\ref{eq:con3}).

To show that $R_5>4R_3$ and $R_9>4R_7$, it is sufficient to prove that
$1-\beta-\gamma-\hat{\epsilon}>4(\alpha-\gamma-\hat{\epsilon})$, which follows from Proposition~\ref{pro:bigdelta2}.

We next prove $U_1>0$. We first prove that $R_8>2R_6$. We have
$$
R_8-2R_6 = \sqrt{D} \cdot \frac{1-2\beta-\delta}{\delta(1-2\beta+\delta)} \geq \sqrt{D} \cdot
\frac{1-3\beta}{\delta(1-2\beta+\delta)} > 0,
$$
where the first inequality follows from~(\ref{eq:con3}) and the last inequality follows from the fact that $\beta \leq
(3-\sqrt{5})/4$. Hence $U_1=R_8-2R_6+R_9-2R_7>0$.

We next show:
\[
\frac{3R_8}{2}+R_9 > 9R_2.
\]
by the fact that $\hat{\epsilon}:= \frac{1}{2}(1+\alpha-\beta-2\gamma - \sqrt{D})$ and $(1-\alpha-\beta)^2=\alpha\beta$, the goal
is then to show,
\[
\frac{3\sqrt{D}}{2\delta}+\frac{\sqrt{\alpha\beta}+\sqrt{D}}{\beta-\delta} > \frac{9\sqrt{D}}{1-2\alpha+\gamma},
\]
i.e.,
\[
\frac{\sqrt{\alpha\beta}}{\sqrt{\alpha\beta+4(\alpha-\gamma)(\beta-\delta)}} > \frac{9(\beta-\delta)}{1-2\alpha+\gamma} -
\frac{3\beta}{2\delta}+1/2.
\]
Since the above inequality is monotone with respect to $\gamma$, so let $\gamma=\gamma_{\min} =
\frac{\alpha^{2}(1+\delta-2\beta)}{(1-\beta)^{2}}$, and the goal is to show:\[
\frac{\sqrt{\alpha\beta}}{\sqrt{\alpha\beta+4(\alpha-\gamma_{\min})(\beta-\delta)}} >
\frac{9(\beta-\delta)}{1-2\alpha+\gamma_{\min}} - \frac{3\beta}{2\delta}+1/2,
\]
or equivalently,
\[
\frac{1}{\sqrt{1+4(1-\gamma_{\min}/\alpha)(1-\delta/\beta)}} > \frac{9(\beta-\delta)}{1-2\alpha+\gamma_{\min}} -
\frac{3\beta}{2\delta}+1/2.
\]
Replacing the term $\gamma_{\min}$ in the left-hand side, we get,
\[
\frac{1}{\sqrt{1+4(1-\frac{\alpha(1+\delta-2\beta)}{(1-\beta)^2})(1-\delta/\beta)}} >
\frac{9(\beta-\delta)}{1-2\alpha+\gamma_{\min}} - \frac{3\beta}{2\delta}+1/2.
\]
Since $\alpha > 1/2$ and $\delta > \beta^2$, replace them on the LHS, then our goal can be reduced to
\[
\frac{1}{\sqrt{3-2\delta/\beta}} > \frac{9(\beta-\delta)}{1-2\alpha+\gamma_{\min}} - \frac{3\beta}{2\delta}+1/2.
\]

Let $X=1-2\alpha+\gamma_{\min}=\frac{(1-2\alpha)(1-\beta)^2+\alpha^2(1-2\beta+\delta)}{(1-\beta)^2}$ and
$A=\frac{3\beta-2\delta}{\beta}$. So we are going to show
\[
2\delta X > \sqrt{A}(18\delta(\beta-\delta)-3\beta X + \delta X),
\]
i.e.,
\[
4\delta^2 X^2 > A(18\delta(\beta-\delta)-3\beta X + \delta X)^2,
\]
when $18\delta(\beta-\delta)-3\beta X + \delta X > 0$. The rest of the proof can be checked by using Mathematica since they are all
polynomial constraints, see Appendix~\ref{sec:ineqdelta3}.

\newpage

\appendix

\section{Mathematica Code}\label{sec:code}
\begin{verbatim}
Clear["Global`*"]

SetSystemOptions["InequalitySolvingOptions" -> "CAD" -> True];
SetSystemOptions["InequalitySolvingOptions" -> "QuadraticQE" -> False];
SetSystemOptions["InequalitySolvingOptions" -> "LinearQE" -> False];

RR = alpha > 0 && beta > 0 && alpha + beta < 1;
RR3 = alpha > 0 && beta > 0 && alpha + beta + 3*alpha*beta < 1;
Inter22 = gamma > 0 && gamma < alpha && delta > 0 && delta < beta &&
          1 - 2*beta + delta - gamma > 0 && 1 - 2*alpha + gamma - delta > 0;
l25 = gamma > 0 && gamma < alpha^2 && delta > 0 &&
      delta < beta^2 && (1 - alpha)^2*delta + beta^2*(2*alpha - 1 - gamma) < 0 &&
      (1 - beta)^2*gamma + alpha^2*(2*beta - 1 - delta) < 0;
u26 = gamma > alpha^2 && gamma < alpha && delta > beta^2 &&
      delta < beta && (1 - alpha)^2*delta + beta^2*(2*alpha - 1 - gamma) > 0 &&
      (1 - beta)^2*gamma + alpha^2*(2*beta - 1 - delta) > 0;


DD = (1 - alpha - beta)^2 + 4*(alpha - gamma)*(beta - delta);

epsilon = 1/2*(1 + alpha - beta - 2*gamma - SQ);
eta = 1/2*(1 + beta - alpha - 2*delta - SQ);

W11 = (alpha - gamma - epsilon)* epsilon*(1 - 2*alpha + gamma)/(eta*(alpha - gamma)^2);
W12 = (alpha - gamma)^2/((1 - 2*alpha + gamma)*gamma);
W21 = (beta - delta - eta)*eta*(1 - 2*beta + delta)/(epsilon*(beta - delta)^2);
W22 = (beta - delta)^2/((1 - 2*beta + delta)*delta);

R1 = (1 - alpha - beta)/(1 - alpha - beta - epsilon - eta);
R2 = SQ/(1 - 2*alpha + gamma);
R3 = 2*(alpha - gamma - epsilon)/(alpha - gamma);
R4 = SQ/gamma;
R5 = 2*(1 - beta - gamma - epsilon)/(alpha - gamma);
R6 = SQ/(1 - 2*beta + delta);
R7 = 2*(alpha - gamma - epsilon)/(beta - delta);
R8 = SQ/delta;
R9 = 2*(1 - beta - gamma - epsilon)/(beta - delta);
R10 = (1 - beta - gamma - epsilon)/(alpha - gamma - epsilon);
R11 = (1 - alpha)*(1 - beta)/(alpha*beta);
\end{verbatim}

\newpage
\subsection{Mathematica Code for Proving Lemma~\ref{lem:zeros}}\label{sec:zeros}
\begin{verbatim}
W3 = Expand[Expand[Numerator[W11]/2 - Denominator[W11]/2] /. {SQ^2 -> DD}];
Exponent[W3, SQ]
W31 = Coefficient[W3, SQ, 0]
W32 = Coefficient[W3, SQ, 1]
Resolve[Exists[{alpha, beta, gamma}, RR && Inter22 && W32 <= 0], Reals]

W4 = Expand[Expand[Numerator[W21]/2 - Denominator[W21]/2] /. {SQ^2 -> DD}];
Exponent[W4, SQ]
W41 = Coefficient[W4, SQ, 0]
W42 = Coefficient[W4, SQ, 1]
Resolve[Exists[{alpha, beta, delta}, RR && Inter22 && W42 <= 0], Reals]

W5 = FullSimplify[(W31/W32)^2 - DD]
W6 = FullSimplify[(W41/W42)^2 - DD]
\end{verbatim}

\subsection{Mathematica Code for Proving Claim~\ref{cccf}}\label{app:secondd}
\begin{verbatim}
FullSimplify[((-R1 + R2 + R3) - ((beta - delta)*(-1/(alpha - gamma - epsilon)
          - 1/eta + 1/epsilon) - SQ/epsilon + SQ/(1 - 2*alpha + gamma) +
          2*SQ/(alpha - gamma))) /. {SQ -> Sqrt[DD]}]
P1 = Expand[FullSimplify[ExpandAll[((beta - delta)*(-1/(alpha - gamma - epsilon)
          -1/eta + 1/epsilon)-SQ/epsilon + SQ/(1 - 2*alpha + gamma) +
          2*SQ/(alpha - gamma))*epsilon*eta*(1 - 2*alpha + gamma)*(alpha - gamma)*
          (alpha - gamma - epsilon)]]/.{SQ^2 -> DD, SQ^3 -> SQ*DD, SQ^4 -> DD^2}];
Exponent[P1, SQ]
P11 = Coefficient[P1, SQ, 0]
Resolve[Exists[{alpha, beta, gamma, delta}, RR && Inter22 && P11 <= 0], Reals]
P12 = Coefficient[P1, SQ, 1]
Resolve[Exists[{alpha, beta, gamma, delta}, RR && Inter22 && P12 >= 0], Reals]
\end{verbatim}

\subsection{Mathematica Code for Proving Claim~\ref{ccch}}\label{sec:ccch}
\begin{verbatim}
P2 = FullSimplify[(P11/P12)^2 - DD];
P21 = -FullSimplify[Numerator[P2]/((beta - delta)^2*(alpha - gamma)^3)]
Resolve[Exists[{alpha, beta, gamma, delta}, RR && Inter22 && P21 <= 0], Reals]
\end{verbatim}

\subsection{Mathematica Code for Proving Claim~\ref{ccccr}}\label{sec:detcoeff}
\begin{verbatim}
P5=Expand[FullSimplify[
   Expand[((-R1 + R2 + R3)*(-R1 - R8 - R9) - R1^2)*delta*(alpha - gamma)*
          epsilon* eta*(1 - 2*alpha + gamma)*(alpha - gamma - epsilon)/ SQ]/.
          {SQ -> Sqrt[DD]}]]/.{Sqrt[alpha^2 + (-1 + beta)^2 +
          alpha (-2 + 6 beta - 4 delta) + 4 (-beta + delta) gamma]->SQ}
Exponent[P5, SQ]
P51 = Coefficient[P5, SQ, 0]
Resolve[Exists[{alpha, beta, gamma, delta}, RR && Inter22 && P51 >= 0]]
P52 = Coefficient[P5, SQ, 1]
Resolve[Exists[{alpha, beta, gamma, delta}, RR && Inter22 && P52 <= 0]]
FullSimplify[DD - (P51/P52)^2]
\end{verbatim}

\subsection{Mathematica Code for Proving Claim~\ref{clm:s1}}\label{sec:s1}
\begin{verbatim}
Resolve[Exists[{alpha, beta, gamma, delta},
        FullSimplify[
        Expand[(-R2*R11 - R6*R11 + R4 + R8)/SQ*alpha*beta*gamma*
                 delta*(1 - 2*alpha + gamma)*(1 - 2*beta + delta)]] <= 0
                 && RR3 && l25], Reals]
\end{verbatim}

\subsection{Mathematica Code for Proving Claim~\ref{ckkk}}\label{sec:ckkk}
\begin{verbatim}
P7 = FullSimplify[((1 - alpha)*(1 - beta)/((1 - 2*alpha + gamma)*alpha*beta) -
                  (alpha - gamma - epsilon)/((alpha - gamma)*eta)*((1 - alpha)*
                  (1 - beta)/(alpha*beta) - (1 - beta -gamma - epsilon)
                  /(alpha - gamma - epsilon)))*alpha*beta*eta*(alpha - gamma)*
                  (1 - 2*alpha + gamma)]
Exponent[P7, SQ]
P71 = Coefficient[P7, SQ, 0]
Resolve[Exists[{alpha, beta, gamma, delta}, RR3 && Inter22 && P71 <= 0], Reals]
P72 = Coefficient[P7, SQ, 1]
Resolve[Exists[{alpha, beta, gamma, delta}, RR3 && Inter22 && P72 >= 0], Reals]
\end{verbatim}

\subsection{Mathematica Code for Proving Lemma~\ref{lem:case13}}\label{app:case13}
\begin{verbatim}
TP1 = Expand[P71^2 - P72^2*DD]
Resolve[Exists[{alpha, beta, gamma, delta},RR3 && l25 && TP1 <= 0], Reals]
\end{verbatim}

\subsection{Mathematica Code for Proving Claim~\ref{clm:qwer}}\label{sec:qwer}
\begin{verbatim}
TP2 = Expand[DD - (-(1 - alpha - beta)^2 - alpha - beta + gamma + delta)^2]
Resolve[Exists[{alpha, beta, gamma, delta}, RR && alpha <= 1/2 && beta <= 1/2
       && u26 && TP2 <= 0], Reals]
\end{verbatim}

\subsection{Mathematica Code for Proving Lemma~\ref{lem:ineqdelta3}}\label{sec:ineqdelta3}
\begin{verbatim}
X = ((1 - 2*alpha)*(1 - beta)^2 + alpha^2*(1 - 2*beta + delta))/(1 - beta)^2;
A = (2*beta - 2*delta)/beta;
Resolve[Exists[{alpha, beta, gamma, delta},
       (1 - alpha - beta)^2 == alpha*beta && alpha > 1/2 &&
        alpha < 1 && beta > 0 && beta < 1 && Inter22 && u26 &&
       4*delta^2*X^2 <= A*(18*delta*(beta - delta) - 3*beta*X + delta*X)^2 &&
       18*delta*(beta - delta) - 3*beta*X + delta*X > 0], Reals]
\end{verbatim}
\end{document}